\DeclareMathOperator*{\essinf}{ess\,inf}
\newcommand{\E}           {\mathbb{E}}        
\newcommand{\bF}          {\mathbb{F}}        
\newcommand{\bI}          {\mathbb{I}}        
\newcommand{\bN}          {\mathbb{N}}        
\newcommand{\bP}          {\mathbb{P}}        
\newcommand{\bQ}          {\mathbb{Q}}        
\newcommand{\R}           {\mathbb{R}}        
\newcommand{\x}           {\textbf{x}}        
\newcommand{\y}           {\textbf{y}}        
\newcommand{\sB}          {\mathcal{B}}       
\newcommand{\sC}          {\mathcal{C}}       
\newcommand{\sD}          {\mathcal{D}}       
\newcommand{\sE}          {\mathcal{E}}       
\newcommand{\sF}          {\mathcal{F}}       
\newcommand{\sH}          {\mathcal{H}}       
\newcommand{\sL}          {\mathcal{L}}
\newcommand{\sM}          {\mathcal{M}}
\newcommand{\sR}          {\mathcal{R}}
\newcommand{\eps}        {\varepsilon}
\newcommand{\phimin}     {\phi_{\text{min}}}
\newcommand{\Xhat}       {\hat{X}}
\newcommand{\xhat}       {\hat{x}}
\newcommand{\Leb}        {\text{Leb}}
\newcommand{\Qmin}       {{\mathbb{Q}^{\infty}}}  
\newcommand{\bmo}        {{\text{BMO}}}      
\newcommand{\adm}        {\sH_{\text{adm}}}  
\newcommand{\LargerCdot} {\raisebox{-0.25ex}{\scalebox{1.4}{$\cdot$}}}
\numberwithin{equation}{section}
\theoremstyle{plain}                
\newtheorem{theorem}{Theorem}[section]
\newtheorem{lemma}[theorem]{Lemma}
\newtheorem{proposition}[theorem]{Proposition}
\newtheorem{corollary}[theorem]{Corollary}
\theoremstyle{definition}           
\newtheorem{definition}[theorem]{Definition}
\newtheorem{example}[theorem]{Example}
\newtheorem{assumption}[theorem]{Assumption}
\theoremstyle{remark}               
\newtheorem{remark}{Remark}[section]
\begin{document}

\begin{center}
  \vskip 1cm
  \Large{\textbf{Stability of Utility Maximization in Nonequivalent Markets}\footnote{The author would like to thank Dmitry Kramkov, Steve Shreve, and the anonymous reviewers for their constructive comments and suggestions.}}
  \vskip .6cm
  \large{
    Kim Weston\footnote{Department of Mathematical Sciences, Carnegie Mellon University, Pittsburgh, PA  15213\\ e-mail: kimberly@andrew.cmu.edu}\\
    }
  \vskip .2cm
  \today  
\end{center}

\begin{abstract}
  Stability of the utility maximization problem with random endowment and indifference prices is studied for a sequence of financial markets in an incomplete Brownian setting.  Our novelty lies in the \textit{nonequivalence} of markets, in which the volatility of asset prices (as well as the drift) varies. 
  Degeneracies arise from the presence of nonequivalence.  In the positive real line utility framework, a counterexample is presented showing that the expected utility maximization problem can be unstable.  A positive stability result is proven for utility functions on the entire real line.
\end{abstract}
\textbf{Keywords:} Expected utility theory, Incompleteness, Random endowment, Market stability, Nonequivalent markets\\
\textbf{Mathematics Subject Classification (2010):} 91G80, 
93E15, 60G44\\
\textbf{JEL Classification:} G13, D81 

\section{Introduction}\label{section:intro}
As part of Hadamard's well-posedness criteria, stability of the utility maximization problem with random endowment is studied with respect to perturbations in both volatility and drift.  Specifically, we seek to answer the question:
\begin{center}
\textit{
What conditions on the utility function and modes of convergence on the sequence of volatilities and drifts guarantee convergence of the corresponding value functions and indifference prices?
}
\end{center}

Perhaps surprisingly, convergence can fail even in the tamest of settings when the utility function is finite only on $\R_+$ and the volatility can vary.  We present a simple counterexample in a stochastic volatility setting with power utility.  When the utility function is finite only on $\R_+$, the admissibility criterion is harsh: negative values in terminal wealth plus random endowment equate to minus infinity in utility.  When volatility can vary, a contingent claim that is replicable only in the limiting market requires strictly more initial capital in every pre-limiting market 
in order to avoid a minus infinity contribution towards expected utility.  As part of the counterexample, we prove a positive convergence result in which the limiting market adopts an additional admissibility condition that is implicitly present in each pre-limiting market.

When the investor's utility function is finite on the entire real line, the admissibility criterion is different.  Our main result provides conditions on the utility function and on the sequence of markets so that we have convergence of the value functions and indifference prices.  We consider a similar setup to \cite{LZ07SPA}, and our main assumptions are analogous to theirs.  The only non-standard assumption we require is an assumption on the limiting market.  The significant difficulty stems from the growth of the dual utility function at infinity because in contrast to utilities on $\R_+$, the conjugate of a real line utility grows strictly faster than linearly at infinity.  We provide two sufficient conditions.  These conditions include:
\begin{enumerate}
\item The first condition applies to a contingent claim that is replicable in the limiting market yet not replicable in any pre-limiting market.  The corresponding stability problem is relevant when a claim's underlying asset is not liquidly traded but is closely linked to a liquidly traded asset.  This situation arises, e.g., when hedging weather derivatives by trading in related energy futures or when an executive wants to hedge his position in company stock options but is legally restricted from liquidly trading his own company's stock.  Practical and computational aspects of this problem are considered by \cite{D00preprint}, \cite{M04QF}, and in more generality by \cite{FS08AAP}.

\item The second sufficient condition requires exponential preferences and additional regularity of the limiting market but places no restrictions on the claim's replicability.  This case covers a general incomplete Brownian market structure under a mild $\bmo$ condition on the limiting market.  The connection between $\bmo$ and exponential utility is long established; see, for example, \cite{6AP} and \cite{GR02AP}.
\end{enumerate}

The questions of existence and uniqueness for the optimal investment problem from terminal wealth are thoroughly studied.  The surrounding literature is vast, and only a small subset of work is mentioned here.  For general utility functions on $\R_+$ in a general semimartingale framework, \cite{KS99AAP} finish a long line of research on incomplete markets without random endowment.  In \cite{CSW01FS}, this work is extended to include bounded random endowment, while \cite{HK04AAP} study the unbounded 
random endowment case. 
For utility functions on $\R$ in a locally bounded semimartingale framework, \cite{S01AAP} studies the case with no random endowment, while \cite{OZ09MF} handle the unbounded random endowment case.  In \cite{BF08AAP}, the authors study the non-locally bounded semimartingale setting without random endowment and unify the
framework for utilities on $\R$ and $\R_+$.

Stability with respect to perturbations in the market price of risk for fixed volatility is first studied in \cite{LZ07SPA} for utility on $\R_+$ and later in \cite{BK13SPA} for exponential utility.  Both works consider risky assets with continuous price processes and no random endowment.  For a locally bounded asset and an investor with random endowment, \cite{KZ11MF} study a market stability problem in which the financial market and random endowment stay fixed while the subjective probability measure and utility function vary.  A BSDE stability result is used in \cite{F13S} to study a specific stability problem for an exponential investor related to the indifference price formulas derived in \cite{FS08AAP}.  Using this BSDE stability result, \cite{F13S}'s market stability result extends to a case with a fixed market price of risk and a varying underlying correlation factor between the traded and nontraded securities.  In contrast to these previous works, we seek to prove a stability result for a general utility function on $\R$ allowing for varying both volatility and market price of risk in the presence of random endowment.

Stability 
is related to the concept of robustness with respect to a collection of probability measures.   
Robustness in option pricing dates back to the uncertain volatility models (UVM) of \cite{ALP95AMF} and \cite{Lyons95AMF}, who consider a range of possible volatilities and determine the best- and worst-case option prices.  In contrast to UVM, which seek to price claims in a \textit{complete} yet uncertain market, we seek to determine stability properties using indifference prices in an \textit{incomplete} market.  With utility maximization, both the volatility and the drift impact investors' optimal trading decisions.  In \cite{DK13SIAM}, \cite{MPZ15MF}, and \cite{TTU13FS}, the authors consider robust utility maximization problems, in which both the volatility and drift vary within a class of subjective probability measures.  Robust optimization seeks the best trading strategy in the worst possible model, whereas our investor firmly believes in the specified subjective model, and we seek to determine which of these models are stable.

The structure of the paper is as follows.  Section \ref{section:counterex} presents a counterexample for a power investor with unspanned stochastic volatility.  Section \ref{section:model} lays out the model assumptions and states the main result.  The proofs are presented in Section \ref{section:proofs}.  Finally, Section \ref{section:examples} provides a counterexample showing the necessity of a nondegeneracy assumption and provides sufficient conditions on the structure of the dual problem for this assumption to hold.

\section{Stability Counterexample for Power Utility}\label{section:counterex}
When an investor's preferences are described by utility on the positive real line and random endowment is present, the admissibility condition provides an additional implicit constraint.  As we will prove, this constraint can create a discontinuity in the value function and indifference prices for markets with varying martingale drivers.  The following are simple  incomplete Brownian models with a contingent claim that can only be replicated in the limiting market.

\subsection{Market Model}\label{subsection:counterex_model}
We let $B$ and $W$ be independent Brownian motions on a filtered probability space $\left(\Omega,\sF, \bF, \bP\right)$ where $\bF=(\sF_t)_{0\leq t\leq T}$ is the natural filtration of $(B, W)$ completed with $\bP$-null sets and $\sF=\sF_T$.  
We consider stock market models, $S^\rho$, with stochastic volatility indexed by correlation parameter $\rho\in(-1,1)$ where 

\begin{equation}\label{def:stocks}
\begin{split}
  dS^{\rho}_t &= \mu V_t dt 
    + \sqrt{V_t}\left(\sqrt{1-\rho^2}dB_t + \rho dW_t\right),
    \ \ \  S^{\rho}_0 :=0,\\
  dV_t &= \kappa\left(\theta-V_t\right)dt +  
    \sigma\sqrt{V_t}dB_t ,
    \ \ \  V_0 :=1.
\end{split}
\end{equation}  
The constants $\kappa, \theta,\sigma>0$ satisfy Feller's condition, $2\kappa\theta\geq \sigma^2$, which guarantees that there exists a unique strong solution for $V$ that is strictly positive for all $\rho\in(-1,1)$.  The risky asset $S^\rho$ is traded, whereas the stochastic volatility $V$ is not traded.  The dynamics of $S^\rho$ are written in an arithmetic fashion, which can be viewed as the returns of a positive asset.  For our purposes, the outcome of trading is unchanged whether we consider arithmetic or geometric specifications of the dynamics.  For a fixed $\rho$, \cite{K05QF} studies the utility maximization problem in the context of this model. Each $\rho$ market also has a bank account with zero interest rate.

A contingent claim $f$ is defined by $f:=\phi(B_T)$, where $\phi:\R\rightarrow\R$ is a bounded, continuous function.  The claim $f$ is replicable in the $\rho=0$ market; however, it is not replicable in any other market.  We  define $\phimin:=\inf\phi$, which corresponds to the subreplication price of $f$ in the $\rho\neq 0$ markets (see Proposition \ref{prop:subreplication} below).  We allow for the possibility that $\phi$ is a constant function, in which case the endowment $f$ can be viewed as a deterministic initial endowment.

\begin{remark}\label{rmk:measurability}
Our model assumes that all markets share the same probability space and filtration.  In particular, we assume that both Brownian motions, $B$ and $W$ are observable in each $\rho$ market.  However, suppose an investor in the $\rho$ market can only observe the path of the risky asset, $S^\rho$.  Then such an investor can also observe both $B$ and $W$.\footnote{Many thanks to an anonymous reviewer for making this keen observation.}  Since the quadratic variation of $S^\rho$ is observable and there exists a unique (positive) strong solution to the SDE for $V$, we can observe $V$ and $B$ from $\left<S^\rho\right>$.  Also from the observation of $S^\rho$ and $V$, we can determine $(\sqrt{1-\rho^2}B+\rho W)$, which allows the investor to observe both $B$ and $W$ separately (for $\rho\neq 0$).
\end{remark}

\subsection{Optimal Investment Problem}\label{subsection:counterex_primal}
An investor is modeled by power utility  $U(x)=x^p/p$ for $x\geq 0$ with $p<1$ ($p=0$ corresponds to $\log$).  As a convention, $U(x)=-\infty$ for $x < 0$.  The investor begins with initial capital $x>-\phimin$.  A progressively measurable process $H$ is \textit{integrable} if $\int_0^T V_t H^2_tdt<\infty$, a.s.  An integrable $H$ is called \textit{$\rho$-admissible} if there exists a finite constant $K=K(H)$ such that $(H\cdot S^\rho)_t\geq -K$ for all $t\in[0,T]$.  We define the primal optimization set by
  $$
    \sC(\rho):= \left\{(H\cdot S^\rho)_T: 
    \text{$H$ is $\rho$-admissible}\right\}.
  $$  
  For $\rho\in(-1,1)$, the primal value function is 
  defined by
\begin{equation}\label{def:primal}
  u(x,\rho):= \sup_{X\in\sC(\rho)} \E\left[U\left(x+X+f\right)\right],
  \ \ \text{ $x>-\phimin$}.
\end{equation}
\begin{remark}
  For $\rho=0$, $u(\cdot,0)$ is well-defined for a larger $x$-domain
  than $(-\phimin,\infty)$.  Yet the $x$-domain is tight
  for every $\rho\neq 0$.
  This discontinuity in the domains at $\rho=0$ hints at the issue of 
  (dis)continuity with respect to $\rho$ in the primal problem.  See 
  \cite{CSW01FS} for more details on the primal domain definition.
\end{remark}

For each $\rho\in(-1,1)$, we define the dual domain by
$$
  \sD(\rho):= \left\{\text{measures }\bQ\sim\bP: 
   \E\left[\frac{d\bQ}{d\bP}\right]=1 \text{ and } 
  \E^\bQ\left[X\right]\leq 0
  \  \forall X\in\sC(\rho)\right\}.
$$
  Lemma 5.2 in \cite{CL14RAPS} shows that $\sD(\rho)\neq\emptyset$.
Similar to \cite{EKQ95JCO}, we have the following result, which will be proven in Section \ref{section:proofs}.
\begin{proposition}\label{prop:subreplication}
Let $\rho\neq 0$ be given. The subreplication price of $f$ is $\phimin$; that is,
 $$
   \inf_{\bQ\in\sD(\rho)}\E^\bQ\left[\phi(B_T)\right] = \phimin.
 $$
 Moreover, for all $x\in\R$ and $(H\cdot S^\rho)_T\in\sC(\rho)$ such that 
 $x+(H\cdot S^\rho)_T+f\geq 0$, we have
 \begin{equation}\label{eqn:primal_bound}
   x+(H\cdot S^\rho)_T \geq -\phimin\,.
 \end{equation}
\end{proposition}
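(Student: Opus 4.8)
\emph{Proof sketch.} Both assertions will come from the same mechanism: because $\rho\neq 0$, under an equivalent change of measure I may prescribe the drift of $B$ essentially freely and compensate in $W$ so that $S^\rho$ stays a local martingale; this produces equivalent local martingale measures (ELMMs) --- each of which lies in $\sD(\rho)$, since any stochastic integral against $S^\rho$ that is bounded below is a local martingale, hence a supermartingale, under an ELMM --- that can steer the law of $B_T$ onto a neighbourhood of any prescribed point. One inequality is free: every $\bQ\sim\bP$ satisfies $\E^\bQ[\phi(B_T)]\geq\phimin$ because $\phi\geq\phimin$ pointwise, so $\inf_{\bQ\in\sD(\rho)}\E^\bQ[\phi(B_T)]\geq\phimin$, and likewise $\E^\bQ[\phi(B_T)\mid\sF_t]\geq\phimin$ a.s.\ for every $t$ and every such $\bQ$.

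For the matching bound I would fix a base ELMM $\bQ_0$ that leaves $B$ alone: take the Girsanov kernel to vanish in the $B$-direction and to equal $-\mu\sqrt{V_\cdot}/\rho$ in the $W$-direction (legitimate since $\rho\neq 0$); then $S^\rho$ is a $\bQ_0$-local martingale while $B$ remains a Brownian motion, and since $W$ is independent of $\sigma(B_s,s\le T)\supseteq\sigma(V_s,s\le T)$, conditioning on that $\sigma$-field shows the density of $\bQ_0$ is a genuine martingale (its conditional quadratic variation $(\mu/\rho)^2\int_0^T V_s\,ds$ is a.s.\ finite); existence of an ELMM in this model is in any case known from the cited literature. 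Now, given $n$, choose $x_n$ with $\phi(x_n)\le\phimin+\tfrac1n$ and a bounded odd continuous function $\psi$ that is negative on $(0,\infty)$, and let $\bQ_{n,c}$ be obtained from $\bQ_0$ by adding the \emph{bounded} kernel $c\,\psi(B_\cdot-x_n)$ to the drift of $B$ and $-\tfrac{\sqrt{1-\rho^2}}{\rho}\,c\,\psi(B_\cdot-x_n)$ to that of $W$; Novikov's criterion gives an equivalent probability measure, the drift thereby added to $S^\rho$ is identically zero, so $\bQ_{n,c}\in\sD(\rho)$. Under $\bQ_{n,c}$ the coordinate $B$ is an autonomous one--dimensional diffusion with restoring drift $c\,\psi(\cdot-x_n)$ toward $x_n$; a routine diffusion estimate shows its time-$T$ law converges weakly to $\delta_{x_n}$ as $c\to\infty$, whence $\E^{\bQ_{n,c}}[\phi(B_T)]\to\phi(x_n)\le\phimin+\tfrac1n$ by boundedness and continuity of $\phi$. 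Choosing $c=c_n$ large enough produces a sequence in $\sD(\rho)$ along which $\E^\bQ[\phi(B_T)]\to\phimin$, proving the first display.

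For \eqref{eqn:primal_bound} I would run the same construction dynamically and push it to time $T$. Put $Y:=x+H\cdot S^\rho$; by integrability of $H$ it is a continuous process, it is bounded below by $x-K(H)$, and $Y_T+f\geq0$ gives $Y_T\geq-\phi(B_T)$. For every $\bQ\in\sD(\rho)$ the admissible wealth process $Y$ is a $\bQ$-supermartingale (see \cite{CSW01FS}, or note this directly when $\bQ$ is an ELMM, $Y$ being then a bounded-below local martingale), so for each $t$, a.s., $-Y_t\le-\E^\bQ[Y_T\mid\sF_t]\le\E^\bQ[\phi(B_T)\mid\sF_t]$; taking the essential infimum over $\bQ$ yields $-Y_t\le Z_t:=\essinf_{\bQ\in\sD(\rho)}\E^\bQ[\phi(B_T)\mid\sF_t]$, and $Z_t\geq\phimin$ a priori. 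Rerunning the previous paragraph with the steering kernel switched on only over $[t,T]$, and using the Markov property of $B$ under each such $\bQ_{n,c}$ together with weak convergence of the conditional law of $B_T$ given $\sF_t$ to $\delta_{x_n}$, one gets, for each fixed $n$, $\E^{\bQ_{n,c}}[\phi(B_T)\mid\sF_t]\to\phi(x_n)$ a.s.\ as $c\to\infty$; restricting to $c\in\bN$ and using $\essinf_k g_k\le\liminf_k g_k$ shows $Z_t\le\phi(x_n)\le\phimin+\tfrac1n$ a.s.\ for every $n$, hence $Z_t=\phimin$ a.s.\ for every $t\in[0,T)$. Thus $Y_t\geq-\phimin$ a.s.\ for all $t$ in a countable dense subset of $[0,T)$, and continuity of the paths of $Y$ extends this to $Y_T\geq-\phimin$, which is \eqref{eqn:primal_bound}.

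The step I expect to be the real obstacle is the identification $Z_t\equiv\phimin$ on $[0,T)$: the steering drift must concentrate $B_T$ near $x_n$ \emph{uniformly enough in the random, unbounded value of $B_t$}, which is why the drift is chosen with a spatially constant attraction rate $c$ --- so that both the relaxation time (of order $|B_t-x_n|/c$) and the residual spread (of order $1/c$) vanish as $c\to\infty$ --- and the essential infimum over the uncountable family $\sD(\rho)$ must be tamed by passing to a countable subfamily. A secondary technical point, the true-martingale property of the Radon--Nikodym densities, is exactly what forces the steering perturbations to be bounded and layered over the fixed ELMM $\bQ_0$, rather than, say, realized by a Brownian-bridge drift that would pin $B_T=x_n$ but fail to be absolutely continuous at $T$.
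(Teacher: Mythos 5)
Your proposal is correct in outline and shares the paper's overall skeleton (trivial lower bound since $\phi\geq\phimin$; an explicit family of separating measures driving $\E^\bQ[\phi(B_T)\mid\sF_t]$ down to $\phimin$; the supermartingale property of admissible wealth under every $\bQ\in\sD(\rho)$; then $t\to T$ by path continuity), but the construction of the approximating measures is genuinely different. The paper fixes $t'<T'<T$ and $x\in\R$ and uses a single Girsanov kernel, supported on the $W^\rho$-direction and switched on over $[T',T]$, built from $\eta=B_{T'}-xT'/T$; under the resulting measure $B_T$ has the same law as $x+(B_T-B_{T'})$ under $\bP$, so $\E^{\overline\bQ}[\phi(B_T)\mid\sF_{t'}]=\E^\bP[\phi(B_T-B_{T'}+x)]$ is a deterministic constant and the essential infimum is obtained by letting $T'\uparrow T$ and choosing $x$ near a near-minimizer of $\phi$ — no diffusion estimates and no countable-subfamily bookkeeping are needed. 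The price is an unbounded kernel (it involves $\eta$ and $\sqrt{V}$), whose martingale property the paper checks by a localized Novikov argument with a carefully chosen mesh $\Delta$. You instead layer bounded steering kernels $c\,\psi(B-x_n)$ over a base ELMM $\bQ_0$, which makes the equivalence of measures immediate, but shifts the analytic burden to the concentration of the steered diffusion as $c\to\infty$ (and its conditional version from a random, unbounded starting point $B_t$), together with the countable-$c$ trick for the essential infimum; your handling of both points is sound (pointwise in $\omega$ the starting point is finite, and restricting to $c\in\bN$ legitimizes $\essinf\le\liminf$), though the ``routine diffusion estimate'' is exactly where the remaining work sits, whereas the paper's computation is closed-form. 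It is worth noting that the paper's device is precisely a smoothed version of the bridge idea you discard at the end: rather than pinning $B_T=x_n$ (which would indeed destroy equivalence), it re-centers $B$ at $x$ over $[T',T]$ with residual variance $T-T'>0$, so absolute continuity is never at risk and the limit is taken in the parameter $T'$ rather than in a drift strength. Two minor checks in your sketch that deserve a sentence if written out: the martingale property of $\bQ_0$'s density via conditioning on the $B$-path (fine, since $V$ is $B$-measurable and $W\perp B$), and strong uniqueness for the steered SDE (choose $\psi$ Lipschitz) so that the conditional law of $B_T$ given $\sF_t$ is the law of the diffusion started at $B_t$.
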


We consider a different optimization problem for $\rho=0$ with an additional admissibility constraint motivated by \eqref{eqn:primal_bound}.  For any $x>-\phimin$, we define the admissibly-constrained primal optimization sets in the $\rho=0$ market by 
$$
  \sC_c(x) := \left\{X\in\sC(0): 
  x+X \geq -\phimin \right\}.
$$
The corresponding admissibly-constrained primal value function is defined by
\begin{equation}\label{def:primal_constrained}
  u_c(x):= \sup_{X\in\sC_c(x)}\E\left[U\left(x+X+f\right)\right],
  \ \ \text{ $x>-\phimin$}.
\end{equation}

The following is the main result of the section.  {\label{sentence:no_f}We note that when $\phi(z)=0$ for all $z\in\R$, we have that $\sC(\rho=0)$ for $u(x,0)$ corresponds to $\sC_c(x)$, and $u(x,0) = u_c(x)$ for $x>-\phimin$.  In this case, the next theorem provides a stability result in the spirit of \cite{LZ07SPA}. }
\begin{theorem}\label{thm:primal_convergence}
  Assume the market dynamics \eqref{def:stocks} and utility function
  $U(x)=x^p/p$, for $x\geq 0$, with $p<1$ 
  ($p=0$ corresponds to $\log$).
  Assume the random endowment 
  function $\phi$ is continuous and bounded, and the initial 
  endowment is $x>-\phimin$.
  Let $u$ and $u_c$ be as in \eqref{def:primal} and 
  \eqref{def:primal_constrained}, respectively.  Then,
  $$
    \lim_{\rho\rightarrow 0} u(x,\rho) = u_c(x)\,.
  $$
\end{theorem}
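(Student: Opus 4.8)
The plan is to establish the two one-sided bounds $\liminf_{\rho\to0}u(x,\rho)\ge u_c(x)$ and $\limsup_{\rho\to0}u(x,\rho)\le u_c(x)$ separately. First note that $u_c(x)\in\R$: the zero strategy lies in $\sC_c(x)$ because $x>-\phimin$, and $x+f$ takes values in $[x+\phimin,x+\phimax]\subset(0,\infty)$, so $\E[U(x+f)]\in\R$ is a finite lower bound, while finiteness from above follows from a Fenchel estimate against an explicit equivalent local martingale measure for $S^0$ with finite dual objective (the square-root dynamics of $V$ supply the needed moment bounds). Write $\widetilde U(\eta):=\sup_{\xi>0}\bigl(U(\xi)-\xi\eta\bigr)$ for the convex conjugate of $U$ (not to be confused with the volatility $V$). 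The conceptual engine is that the dual domains $\sD(\rho)$ do \emph{not} converge to $\sD(0)$ as $\rho\to0$: for $\rho\neq0$ the set $\sD(\rho)$ contains measures whose $B_T$-law can be shifted so much that $\E^\bQ[\phi(B_T)]$ approaches $\phimin$ (Proposition \ref{prop:subreplication}), and these ``extreme'' elements do not persist as countably additive limits but reappear as the singular, finitely additive dual elements generated by the constraint $x+X\ge-\phimin$ defining $u_c$. This is exactly why $u(x,\rho)$ converges to $u_c(x)$ rather than to $u(x,0)$.

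For the lower bound, fix $\eps>0$ and pick a $0$-admissible $H$ with $x+(H\cdot S^0)_T\ge-\phimin$ and $\E[U(x+(H\cdot S^0)_T+f)]\ge u_c(x)-\eps$; by a routine approximation (splitting on the sign of $p$ and, when $U(0^+)=-\infty$, keeping wealth-plus-endowment bounded away from zero) one may take $H$ so that the $\rho=0$ wealth process stays in a compact subset of the admissible region. The idea is to transplant $H$ into the $\rho$-markets. Since the drift $\mu V\,dt$ of $S^\rho$ is independent of $\rho$ and $(H\cdot S^\rho)_t-(H\cdot S^0)_t=\int_0^t H_s\sqrt{V_s}\bigl((\sqrt{1-\rho^2}-1)\,dB_s+\rho\,dW_s\bigr)$ tends to $0$ uniformly on $[0,T]$ in probability as $\rho\to0$ once $\int_0^T H_t^2V_t\,dt$ is controlled (strict positivity and moment bounds for $V$), the strategy $H$ stopped at the first time the $\rho$-wealth exits a large fixed interval is $\rho$-admissible, keeps wealth-plus-endowment positive, and has terminal value converging in probability to $x+(H\cdot S^0)_T$. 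A dominated-convergence argument (trivial for $p<0$, where $U\le0$, and via a uniform-integrability estimate built from $\widetilde U$ for $p\ge0$) then gives $\liminf_{\rho\to0}u(x,\rho)\ge\E[U(x+(H\cdot S^0)_T+f)]\ge u_c(x)-\eps$; let $\eps\downarrow0$.

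For the upper bound, Fenchel's inequality gives $U(x+X+f)\le\widetilde U(yZ)+yZ(x+X+f)$ pointwise for every $y>0$ and every density $Z$, so taking expectations and using $\E^\bQ[X]\le0$ for $\bQ\in\sD(\rho)$, $X\in\sC(\rho)$, we get $u(x,\rho)\le\E[\widetilde U(yZ)]+y\bigl(x+\E^\bQ[\phi(B_T)]\bigr)$. By Proposition \ref{prop:subreplication} one can choose $\bQ^\rho\in\sD(\rho)$, written explicitly via Girsanov's theorem from the dynamics \eqref{def:stocks}, so that $\E^{\bQ^\rho}[\phi(B_T)]\to\phimin$ while the densities $Z^\rho$ converge and $\E[\widetilde U(yZ^\rho)]$ stays under control; passing to the limit and then optimizing over $y$, one identifies the resulting bound with $u_c(x)$ through the duality relation for the constrained problem.

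The main obstacle is the nonequivalence of the markets, which forces this matching to be carried out by hand. In the upper bound the difficulty is that pushing $\E^{\bQ^\rho}[\phi(B_T)]$ toward $\phimin$ requires the Girsanov kernel in the $W$-direction to blow up at rate $\rho^{-1}$, making the densities $Z^\rho$ increasingly singular and driving $\E[\widetilde U(yZ^\rho)]$ up when $p\ge0$; one must show that this tension resolves in the limit exactly into the shadow-price contribution attached to the constraint in $u_c$, and since the $S^\rho$ share no equivalent local martingale measure there is no canonical limit of the $\sD(\rho)$ to fall back on. The analogous pitfall in the lower bound — that naively using a $\rho=0$-optimal strategy in a $\rho$-market destroys admissibility or produces an immediate $-\infty$ — is defused by the stopping-time truncation above, but verifying that the truncation errors vanish uniformly as $\rho\to0$ again rests on the strong convergence $S^\rho\to S^0$ and the strict positivity of $V$.
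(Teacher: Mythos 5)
Your lower bound is essentially the paper's argument: take a near-optimal $\rho=0$ strategy in $\sC_c(x)$ with a little slack in the initial wealth, approximate it by a bounded integrand (cf.\ Lemma~\ref{lemma:H_approx1}), transplant it into the $\rho$-markets via stopping (cf.\ Lemma~\ref{lemma:H_approx2}), and pass to the limit with Fatou using the fact that the slack keeps $U(\,\cdot\,)$ uniformly bounded below. That direction is sound.

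The upper bound, however, has a genuine gap, and it is exactly at the point you flag as a ``tension'' but do not resolve. Starting from the ordinary Fenchel inequality $U(\xi)\le \widetilde U(\eta)+\xi\eta$, you obtain $u(x,\rho)\le \E[\widetilde U(yZ)]+y\bigl(x+\E^\bQ[\phi(B_T)]\bigr)$ and then try to drive $\E^{\bQ^\rho}[\phi(B_T)]\to\phimin$ while keeping $\E[\widetilde U(yZ^\rho)]$ controlled. This cannot work as stated: the measures $\overline\bQ$ in Proposition~\ref{prop:subreplication} realizing the subreplication price have Girsanov kernels in the $W^\rho$-direction of size $\eta\bI_{[T',T]}/[\rho(T-T')]$, so pushing $\E^{\overline\bQ}[\phi(B_T)]$ down to $\phimin$ (via $T'\uparrow T$) while also letting $\rho\to0$ makes the densities arbitrarily singular and blows up $\E[\widetilde U(yZ^\rho)]$ whenever $p\ge0$. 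Moreover, even heuristically, minimizing the product bound over $\bQ\in\sD(\rho)$ just recovers the \emph{unconstrained} dual $v(y,\rho)$, whose conjugate is $u(x,\rho)$ itself, not $u_c$; so the plain Fenchel route cannot identify the limit with $u_c(x)$, which is strictly smaller than $u(x,0)$ when $\phi$ is not constant.

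The missing idea is that the endogenous admissibility bound $x+X\ge-\phimin$ (Proposition~\ref{prop:subreplication}, \eqref{eqn:primal_bound}) lets you \emph{strengthen} Fenchel's inequality before taking expectations. Replace $\widetilde U=V$ by the constrained conjugate
$$
  V_c(y,z):=\sup_{\xi>-\phimin}\{U(\xi+z)-\xi y\}=
  \begin{cases}
    V(y)+yz,&y<U'(z-\phimin),\\
    U(z-\phimin)+y\phimin,&\text{otherwise},
  \end{cases}
$$
which is dominated by $V(y)+yz$ but grows only \emph{linearly} in $y$ at infinity. With $x+X\ge-\phimin$ one gets $U(x+X+f)\le V_c(y\frac{d\bQ}{d\bP},f)+y\frac{d\bQ}{d\bP}(x+X)$, hence $u(x,\rho)\le v_c(y,\rho)+xy$ for all $y>0$. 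Now there is no blow-up problem: using the \emph{same} minimal martingale density $Z^0_T=\sE(-\mu\sqrt V\cdot B)_T$ for every $\rho$ (it does not depend on $\rho$) and the uniform integrability of $\{V(yZ^\rho_T)\}_\rho$, one gets $\limsup_\rho v_c(y,\rho)\le \E[V_c(yZ^0_T,f)]$, and a direct construction of the candidate optimal wealth $\Xhat$ via martingale representation in the $\rho=0$ market shows $\E[V_c(yZ^0_T,f)]=\sup_{x'>-\phimin}\{u_c(x')-x'y\}$, so choosing $y=u_c'(x)$ closes the chain. Without this modification of the conjugate, the upper bound does not go through, and the singular behaviour you describe is the symptom of working with the wrong dual functional.
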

The proofs of Theorem \ref{thm:primal_convergence} and its Corollary \ref{cor:indifference_prices} (below) will follow in Section \ref{section:proofs}.  The corollary says that when $\phi$ is not constant, indifference prices for $f$ do not converge to the unique arbitrage-free price in the $\rho=0$ market as $\rho\rightarrow 0$.  For any $\rho\in(-1,1)$, we define the value function without random endowment by 
\begin{equation}\label{def:primal_no_endowment}
  w(x,\rho):= \sup_{X\in\sC(\rho)} \E\left[U\left(x+X\right)\right],
  \ \ \text{ $x>0$}.
\end{equation}

\begin{definition}
  Given $x>-\phimin$ and $\rho\in(-1,1)$, $p=p(x,\rho)\in\R$ is called the
  \textit{indifference price for $f$ at $x$ in the $\rho$ market} if 
  $\, w(x+p,\rho) =  u(x,\rho)$.
\end{definition}
Of course, for $\rho=0$, the indifference price corresponds to the unique arbitrage-free price for the bounded replicable claim, $f$.  Also notice that since indifference prices are arbitrage-free prices, then $p(x,\rho)\geq\phimin$ for every $x>-\phimin$.  
\begin{corollary}\label{cor:indifference_prices}
  Under the assumptions of Theorem 
  \ref{thm:primal_convergence} and for $\phi$ not constant:  
  For $x>-\phimin$, the indifference
  prices for $f$ do not converge to the arbitrage-free price in the $\rho=0$
  market.  Indeed, $\limsup_{\rho\rightarrow 0} 
  p(x,\rho) < p(x,0)$.
\end{corollary}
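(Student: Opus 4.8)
The plan is to reduce the corollary to a strict gap between the constrained value $u_c(x)$ and the unconstrained value $u(x,0)$ in the limiting market, and then to transport that gap through the stability of the no‑endowment value function. First I would record the structure of the $\rho=0$ market. Since $f=\phi(B_T)$ is replicable there, writing $f=f_0+(H^f\cdot S^0)_T$ with $f_0$ its unique arbitrage‑free price and $(H^f\cdot S^0)$ bounded, the map $X\mapsto X+(f-f_0)$ is a bijection of $\sC(0)$; hence $u(x,0)=w(x+f_0,0)$ and therefore $p(x,0)=f_0$. Because $\phi$ is continuous and nonconstant and $B_T$ has full support, $\bP(\phi(B_T)>\phimin)>0$, so $f_0=\E^{\bQ^0}[\phi(B_T)]>\phimin$; in particular $x+f_0>0$. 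Next I would use that $w(\cdot,\rho)$ is finite, concave and strictly increasing on $(0,\infty)$, that indifference prices exist with $\phimin\le p(x,\rho)\le\phimax$ (wedge $f$ between the constants $\phimin$ and $\phimax$), and — crucially — that applying Theorem \ref{thm:primal_convergence} with $\phi\equiv 0$ gives $\lim_{\rho\to 0}w(y,\rho)=w(y,0)$ for every $y>0$, hence (pointwise limits of concave functions) uniformly on compact subsets of $(0,\infty)$.

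Then I would pick $\rho_n\to 0$ with $p(x,\rho_n)\to\underline p:=\limsup_{\rho\to0}p(x,\rho)\in[\phimin,\phimax]$. Since $x+\underline p\ge x+\phimin>0$, the points $x+p(x,\rho_n)$ eventually lie in a compact subinterval of $(0,\infty)$, so passing to the limit in the defining relation $w(x+p(x,\rho_n),\rho_n)=u(x,\rho_n)$ — using the locally uniform convergence of $w(\cdot,\rho_n)$, the continuity of $w(\cdot,0)$, and Theorem \ref{thm:primal_convergence} for the right‑hand side — yields $w(x+\underline p,0)=u_c(x)$. Since $w(\cdot,0)$ is strictly increasing, the corollary is therefore equivalent to the strict inequality $u_c(x)<u(x,0)=w(x+f_0,0)$.

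To prove this gap I would put both value functions over the same budget set: substituting $Z=x+X+f$ and using replicability of $f$,
\[
  u(x,0)=\sup\bigl\{\E[U(Z)]:Z\ge 0,\ Z-(x+f_0)\in\sC(0)\bigr\},\qquad u_c(x)=\sup\bigl\{\E[U(Z)]:Z\ge f-\phimin,\ Z-(x+f_0)\in\sC(0)\bigr\},
\]
so $u_c$ is $u(x,0)$ with the lower bound $0$ replaced by the nonconstant claim $f-\phimin\ge 0$, which is strictly positive with positive probability. Let $\xi^\ast$ be the optimal terminal wealth for $u(x,0)$; strict concavity of $U$ makes $\xi^\ast$ a.s.\ unique, with $\xi^\ast>0$ a.s. If $u_c(x)=u(x,0)$, then any maximizing sequence for $u_c$ is maximizing for $u(x,0)$, hence (by the usual Koml\'os plus strict‑concavity argument) converges in probability to $\xi^\ast$, forcing $\xi^\ast\ge f-\phimin$ a.s. Thus it suffices to show $\bP(\xi^\ast<\phi(B_T)-\phimin)>0$. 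I would derive this from $\xi^\ast=I(\hat y\,\hat Z_T)$ with $I=(U')^{-1}$ strictly decreasing and $I(\infty)=0$, where $\hat Z_T=d\hat\bQ/d\bP$ for the dual‑optimal $\hat\bQ\in\sD(0)$: in this genuinely Brownian incomplete market $\hat Z_T=\mathcal E\bigl(-\int\mu\sqrt V\,dB-\int\psi\,dW\bigr)_T$ has $\bP$‑essentially unbounded support (its logarithm is a nonconstant Brownian integral, using $\mu\neq 0$), so $\essinf\xi^\ast=0$, i.e.\ $\bP(\xi^\ast<\eps)>0$ for all $\eps>0$. To make $\{\xi^\ast<\eps\}$ overlap $\{\phi(B_T)>\phimin+\eps\}$ (nonempty and of positive probability for small $\eps$), I would use the Markovian structure: the dual optimizer's market price of risk is a function of $(t,V_t)$, hence $\sigma(B)$‑measurable, so conditionally on $\sigma(B)\ni\phi(B_T)$ either the $W$‑component of $\hat Z_T$ is nondegenerate — then $\bP(\xi^\ast<\eps\mid\sigma(B))>0$ a.s.\ by conditional Gaussianity — or $\psi\equiv0$ and $\xi^\ast$ is a functional of the $B$‑path for which a support‑theorem argument produces paths with $\int_0^T\mu\sqrt V\,dB$ arbitrarily negative and $B_T$ in any prescribed open set. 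Either way $\bP(\xi^\ast<\eps,\ \phi(B_T)>\phimin+\eps)>0$, and on that event $\xi^\ast<f-\phimin$.

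Finally, combining $w(x+\underline p,0)=u_c(x)<u(x,0)=w(x+f_0,0)$ with strict monotonicity of $w(\cdot,0)$ gives $\underline p<f_0=p(x,0)$, which is the claim. The main obstacle is precisely the probabilistic fact $\bP(\xi^\ast<f-\phimin)>0$ — that the unconstrained optimal wealth genuinely violates the admissibility floor on a non‑null event; this is where the nondegeneracy of the limiting market ($\mu\neq 0$) and the unboundedness of the pricing density enter. Everything else is soft: the replicability bookkeeping, the concavity/monotonicity of $w$, and the $\phi\equiv 0$ stability already contained in Theorem \ref{thm:primal_convergence}.
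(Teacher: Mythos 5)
Your proposal is correct and follows the paper's own route essentially step for step: pass to a subsequence realizing the $\limsup$, use Theorem \ref{thm:primal_convergence} (applied once with the given $f$ and once with $f=0$, plus concavity to upgrade to locally uniform convergence of $w(\cdot,\rho)$) to obtain $w(x+\bar p,0)=u_c(x)$, and reduce the claim, via strict monotonicity of $w(\cdot,0)$, to the strict gap $u_c(x)<u(x,0)$, which the paper likewise derives from $\bP\left(I\left(\tfrac{\partial}{\partial x}u(x,0)\,Z^0_T\right)<f-\phimin\right)>0$. Your additional Koml\'os/strict-concavity and support-theorem details merely flesh out what the paper asserts with a citation to \cite{KS99AAP}, and your remark that this positive-probability step uses $\mu\neq 0$ (equivalently, a nondegenerate pricing density) is a fair point the paper leaves implicit.
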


\begin{remark}\label{rmk:counterex_simplicity}For the sake of clarity, emphasis is placed on the simplicity of the power investor's problem.  Some aspects can be generalized at the expense of more lengthy proofs and set-ups, e.g., a more general utility function or more general asset dynamics.  For the special case when $f=0$, Theorem \ref{thm:primal_convergence} can be extended to the more general market models of Section \ref{section:model} in order to generalize the value function convergence of Theorem 2.12 in \cite{LZ07SPA} in the varying volatility setting.
The difficulty in generalizing beyond $f=0$ stems from the need for a dual conjugacy result for $u_c$, which is not available in the literature due to the Inada condition not being satisfied at $x=0$ for the ($\omega$-dependent) function $x\mapsto U(x+f-\phimin)$.
\end{remark}

\section{Utility Functions on $\R$} \label{section:model}
Modeling investor preferences on the entire real line removes the fixed admissibility lower bound, which prevents the degeneracy of Theorem \ref{thm:primal_convergence} from occurring.  The remainder of this work is devoted to studying conditions that guarantee stability for real line utility functions.

Let $(\Omega,\sF,\bF=(\sF_t)_{0\leq t\leq T},\bP)$ be a filtered probability space with the filtration generated by $d$-dimensional Brownian motion $B=(B^1,\ldots, B^d)$.  We assume that $\bF$ is completed with all $\bP$ null sets and $\sF = \sF_T$, for a fixed time horizon $T\in(0,\infty)$.

We consider a sequence of financial market models with stocks $S^n$ valued in $\R$, for $1\leq n\leq \infty$,
  \begin{equation}\label{eqn:S_dynamics}
    dS^n_t = 
      \lambda^n_t\left|\sigma^n_t\right|^2 dt + \sigma^n_t dB_t\,,\ \ \ S^n_0 = 0.
  \end{equation}
Letting $\sL^p:=\{\text{progressively measurable }\theta: \int_0^T |\theta_t|^p dt<\infty,\,\text{a.s}\}$, $p=1,2$,  
we require that $\sigma^n=(\sigma^{n,1},\ldots,\sigma^{n,d})$ satisfies $\sigma^{n,i}\in\sL^2$ for 
$1\leq n\leq\infty$, $1\leq i\leq d$, and $\lambda^n\left|\sigma^n\right|^2\in\sL^1$ for $1\leq n\leq\infty$.  For $1\leq n\leq\infty$, we define the local martingales $M^n$ by
\begin{equation*}\label{eqn:Mn_dynamics}
  M^n := (\sigma^{n,1}\cdot B^1) + \ldots + (\sigma^{n,d}\cdot B^d),
\end{equation*}
so that the dynamics of $S^n$ are of the form
$$
  dS^n_t = \lambda^n_t d\left<M^n\right>_t + dM^n_t\,,
  \ \ \  S^n_0=0.
$$
Additionally, we assume that $\lambda^n\sigma^{n,i}\in\sL^2$ for $1\leq n\leq\infty$, $1\leq i\leq d$, so that $(\lambda^n\cdot M^n)$ is well-defined.  We let 
$Z^n_t := \sE\left(-\lambda^n\cdot M^n\right)_t$,  $t\in[0,T]$, denote each market's minimal martingale density process, where $\sE(\cdot)$ refers to the stochastic exponential.  Each market is assumed to have a bank account with a zero interest rate.
  
A sequence $\{X_n\}_{n\geq 1}$ of semimartingales is said to converge to $X$ in the \textit{semimartingale topology} provided that

$$
  \sup_{|\theta|\leq 1} 
  \E\left[\left|\left(\theta\cdot (X^n-X)\right)_T\right|\wedge 1\right]
  \longrightarrow 0 \ \text{ as } n\rightarrow\infty.
$$
Here, the supremum is taken over progressively measurable $\theta$, which are bounded uniformly by $1$ in $t$ and $\omega$.  We note that in the Brownian filtration, all progressively measurable processes are predictable.  
The following assumptions capture the necessary market regularity and the convergence of a sequence of markets.
\begin{assumption}\label{ass:semimg} 
  The collections $\{M^n\}_{1\leq n\leq\infty}$ and 
  $\{(\lambda^n\cdot M^n)\}_{1\leq n\leq\infty}$ satisfy the convergence relations:
  $$
    M^n\longrightarrow M^\infty\ \text{ and }\ 
    (\lambda^n\cdot M^n)\longrightarrow (\lambda^\infty\cdot M^\infty)
    \ \text{ in the semimartingale topology 
    as $n\rightarrow\infty$.}
  $$
\end{assumption}
The assumption that $(\lambda^n\cdot M^n)\longrightarrow
(\lambda^\infty\cdot M^\infty)$ is similar to the appropriate topology assumption of \cite{LZ07SPA}, whereas the convergence assumption on $M^n$ is new since the previous market stability work required the martingale components to remain constant.
\begin{assumption}\label{ass:Z_mg}
  Each minimal martingale density process, $Z^n$, for $1\leq n\leq\infty$,
  is a $\bP$-martingale.
\end{assumption}

Under the minimal martingale measure $\bQ^n$, where $\frac{d\bQ^n}{d\bP}=Z^n_T$, $S^n$ is a local martingale and any $\bP$-local martingale $N$ such that $\left<N, M^n\right>_t=0$ for $t\in[0,T]$ remains a local martingale under $\bQ^n$.  We refer to \cite{FS10} for a survey on minimal martingale measures and their use in mathematical finance.

  Under Assumption \ref{ass:semimg},  \label{rmk:about_semimg_convergence}
  $(\lambda^n\cdot M^n)_T\longrightarrow(\lambda^\infty\cdot M^\infty)_T$ 
  and $\left<\lambda^n\cdot M^n\right>_T \longrightarrow
  \left<\lambda^\infty\cdot M^\infty\right>_T$
  in probability as $n\rightarrow\infty$.  Hence,
  $Z^n_T \longrightarrow Z^\infty_T$ in probability as $n\rightarrow\infty$.  
  Under Assumption \ref{ass:Z_mg}, each $Z^n$ is a  
  martingale, and so Scheffe's Lemma implies the seemingly stronger fact that 
  $Z^n_T\longrightarrow Z^\infty_T$ in $L^1(\bP)$ as $n\rightarrow\infty$.

A further non-degeneracy assumption is needed on the limiting market  in order to ensure that randomness does not disappear in a degenerate way.  A counterexample showing that this condition is in some sense necessary is provided in Section \ref{section:examples}.
\begin{assumption}\label{ass:nondegeneracy}
  The dynamics of $\left<M^\infty\right>$ are nondegenerate in that
  $\sum_{i=1}^d\left(\sigma^{\infty,i}_t\right)^2\neq 0$ 
  for all $t\in[0,T]$, $\bP$-a.s.
\end{assumption}
\begin{remark}
  Assumptions \ref{ass:semimg}, \ref{ass:Z_mg},
  and \ref{ass:nondegeneracy} are satisfied
  by the markets $\{S^{\rho_n}\}_{1\leq n\leq\infty}$ of Section
  \ref{section:counterex} for any $\rho_n\longrightarrow\rho\in[-1,1]$
  as $n\rightarrow\infty$.
\end{remark}

Finally, a contingent claim $f\in L^\infty(\bP)$ is given and is independent of $n\in\bN$.  We make no assumption on the replicability of $f$ at this time.

\subsection{Optimal Investment Problem}\label{subsection:primal}

An investor is modeled by preferences $U:\R\rightarrow\R$, which is finite on the entire real line.  $U$ is assumed to be continuously differentiable, strictly increasing, strictly concave and satisfies the Inada conditions at $-\infty$ and $+\infty$:
\begin{equation}\label{eqn:Inada}
  U'(-\infty) := \lim_{x\rightarrow -\infty} U'(x) = \infty \ \text{ and }\ 
  U'(+\infty) := \lim_{x\rightarrow  \infty} U'(x) = 0.
\end{equation}
Additionally, we assume that $U$ satisfies the reasonable asymptotic elasticity conditions of \cite{KS99AAP} and \cite{S01AAP}:
  \begin{equation}\label{eqn:ae}
  AE_{-\infty}(U):= \liminf_{x\rightarrow-\infty} \frac{xU'(x)}{U(x)} > 1
  \ \text{ and }\ 
  AE_{+\infty}(U):= \limsup_{x\rightarrow\infty} \frac{xU'(x)}{U(x)} < 1\,.
  \end{equation}

The utility function's Fenchel conjugate is defined by $V(y):=\sup_{x\in\R}\left\{U(x)-xy\right\}$ for $y>0$.  $V$ is strictly convex and continuously differentiable.  Without loss of generality, we assume that $U(0)> 0$.  When $U(0)>0$, we have $V(y)> 0$ for all $y>0$.

Similar to \cite{LZ07SPA}, \cite{KZ11MF}, and \cite{BK13SPA}, we make the following assumption:
\begin{assumption}\label{ass:V_ui}
  The collection of random variables $\left\{V(Z^n_T)\right\}_{1\leq n\leq\infty}$,
  where $Z^n_T$ is the minimal martingale density for the $S^n$ market,
  is uniformly integrable.
\end{assumption}

In \cite{LZ07SPA}, the authors show that Assumption \ref{ass:V_ui} is both necessary and sufficient in the case of complete markets.  They study the stability problem with a utility function defined on the positive real line, no random endowment, fixed volatility, and varying market price of risk; see \cite{LZ07SPA} Proposition 2.13.  In an incomplete setting, they provide a counterexample to the value function stability showing that in some sense Assumption \ref{ass:V_ui} is necessary.

For $1\leq n\leq\infty$, a process $H$ is \textit{$S^n$-integrable} if $H\sigma^{n,i}\in\sL^2$ for $1\leq i\leq d$.  Cauchy-Schwartz's inequality produces $H\lambda^n(\sigma^{n,i})^2\in\sL^1$ for $1\leq i\leq d$.
The $S^n$ market's admissible strategies are defined by
$$
  \adm^n := \left\{ H: H \text{ is $S^n$-integrable},\ \exists K=K(H),\,
  (H\cdot S^n)_t\geq -K,\ \forall t\right\}.
$$
The primal value function is defined by
\begin{equation}\label{def:primal_value_fn}
  u_n(x) := \sup_{H\in\adm^n}\E\left[U\left(x+(H\cdot S^n)_T+f\right)\right], \ \ x\in\R.
\end{equation}

Let $\sM^n$ denote the set of probability measures $\bQ$ such that $\bQ\sim\bP$ and $S^n$ is a local martingale under $\bQ$.  We are primarily interested in such measures that have finite $V$-entropy:  $\E\left[V\left(\frac{d\bQ}{d\bP}\right)\right]<\infty$.  Let $\sM^n_V$ denote those measures $\bQ\in\sM^n$ having finite $V$-entropy.  
For $1\leq n\leq\infty$, the dual value function is defined for the $S^n$ market by
\begin{equation}\label{def:dual_value_fn}
  v_n(y) := \inf_{\bQ\in\sM^n_V} \E\left[V\left(y\frac{d\bQ}{d\bP}\right)
          + y\frac{d\bQ}{d\bP}f\right],\ \ y>0.
\end{equation}

\label{rmk:dual_is_same}
At first glance, our definition of the dual value function 
differs from that of \cite{OZ09MF}, who, for $1\leq n\leq\infty$, consider the infimum over $\bQ\ll\bP$ such that $S^n$ is a $\bQ$-local martingale and $\E[V(\frac{d\bQ}{d\bP})]<\infty$.  Assumptions \ref{ass:Z_mg} and \ref{ass:V_ui} plus $Z^n_T>0$  imply that $\sM^n_V\neq\emptyset$.  In this case, Theorem 1.1(iii) of \cite{OZ09MF} shows that the optimal dual element lies in the set $\sM^n_V$, and thus the two dual value function definitions agree. 

{\label{rmk:primal_is_same}
The primal admissible class of strategies is too small to attain a solution to the optimal investment problem.  However, the behavior of the value function is our primary interest, rather than the behavior (or even attainability) of the optimizer.  
Using that $f\in L^\infty(\bP)$ and $\sM^n_V\neq\emptyset$, Theorem 1.2(i) of \cite{OZ09MF} implies that our definition of the primal value function agrees with the definition of $u_\sE$ of \cite{OZ09MF}.  Here, $\sE=x_n+f$ and $\sE$ refers to the notation of the aforementioned work.
}

By using \label{page:AS_ref}\cite{AS92} and \cite{AS93MMM}, for $1\leq n\leq\infty$, we can rewrite any $\bQ\in\sM_V^n$ as $\frac{d\bQ}{d\bP} = Z_T^n\sE(L)_T$, where $L$ is a local martingale null at $0$ such that $\left<L,M^n\right>_t=0$ for all $t\in[0,T]$.
We need to make a further assumption in order to ensure a ``nice'' structure of the limiting market's dual domain.  For $n=\infty$, let $\sB$ be defined by
\begin{equation}\label{eqn:B}
\begin{split}
  \sB := \left\{\text{local martingales }L:\right. &  
    L_0=0, \left<L, M^\infty\right>_t = 0,\,\forall t\in[0,T], \,
    \\ &\left. \exists \text{ constant }C=C(L),\, \sE(L)_t\leq C,
    \,\forall t\in[0,T]\right\}.
\end{split}
\end{equation}

\begin{assumption}\label{ass:bdd_inf}
  For $n=\infty$, the dual problem, (\ref{def:dual_value_fn}),
  can be expressed as
  $$
    v_\infty(y) = \inf_{L\in\sB} \E\left[V\left(yZ^\infty_T\sE(L)_T\right)
      + yZ^\infty_T\sE(L)_T f\right],\ \ y>0,
  $$
  where $Z^\infty_T$ is the minimal martingale density in the 
  $S^\infty$ market.
\end{assumption}

This assumption is non-trivial to verify in general due to the fact that $V$ is increasing strictly faster than linearly as  $y\longrightarrow +\infty$.  It is mathematical in nature and ensures that the dual optimizer does not vary ``too much''.  Section \ref{section:examples} provides two sufficient conditions. The first condition covers the original motivation for our stability problem, where the contingent claim is replicable in the (incomplete) limiting market but not replicable in any pre-limiting market.  In this case, the limiting market consists of a driving Brownian motion, a replicable claim, and additional independent Brownian noise.  The second condition makes no assumptions on the claim's replicability; however, it requires exponential preferences and imposes a mild $\bmo$ condition on the limiting market.  Indeed, a $\bmo$ condition on the limiting market's minimal martingale density ensures that the dual optimizer has controlled oscillations, which implies Assumption \ref{ass:bdd_inf}.  Similarly, \cite{6AP} make use of a form of $\bmo$ regularity of \textit{some} dual element in order to establish $\bmo$ regularity of the \textit{optimal} dual element.

The following is our main result.
\begin{theorem}\label{thm:main_result}
  Suppose that the sequence of markets satisfies Assumptions 
  \ref{ass:semimg}, \ref{ass:Z_mg},
  and \ref{ass:V_ui}.  Suppose that 
  the limiting market satisfies
  Assumptions \ref{ass:nondegeneracy} and \ref{ass:bdd_inf}.  
  Then, for $x_n\longrightarrow x$ as 
  $n\rightarrow\infty$,
  $$
    \lim_{n\rightarrow\infty} u_n(x_n) = u_\infty(x).
  $$
  Moreover, for $y_n\longrightarrow y>0$ as 
  $n\rightarrow \infty$,
  $$
    \lim_{n\rightarrow\infty} v_n(y_n) = v_\infty(y).
  $$
\end{theorem}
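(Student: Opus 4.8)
The plan is to establish the two limits through a duality argument, handling the dual value functions first and then transferring the result to the primal side. For the dual convergence $\lim_n v_n(y_n) = v_\infty(y)$, I would split into the two inequalities $\limsup_n v_n(y_n) \le v_\infty(y)$ and $\liminf_n v_n(y_n) \ge v_\infty(y)$. For the upper bound, fix $L \in \sB$ and note that under Assumption \ref{ass:bdd_inf} the candidate $y Z^\infty_T \sE(L)_T$ is (up to the factor $y$) a density in $\sM^\infty_V$. The idea is to build, for each $n$, a competitor in $\sM^n_V$ by replacing $M^\infty$ with $M^n$: since $\langle L, M^\infty \rangle = 0$ and $L \in \sB$ makes $\sE(L)$ bounded, one constructs $\frac{d\bQ^n}{d\bP} = Z^n_T \sE(L^n)_T$ for a suitable modification $L^n$ orthogonal to $M^n$ (or, in the simplest case where $L$ happens to already be orthogonal to all $M^n$, one just reuses $L$). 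Using the semimartingale convergence $M^n \to M^\infty$ and $(\lambda^n \cdot M^n) \to (\lambda^\infty \cdot M^\infty)$ from Assumption \ref{ass:semimg}, together with the remark following it ($Z^n_T \to Z^\infty_T$ in $L^1$), one gets $y Z^n_T \sE(L^n)_T \to y Z^\infty_T \sE(L)_T$ in probability; boundedness of $\sE(L)$ and $f \in L^\infty$, plus Assumption \ref{ass:V_ui} to control $V(Z^n_T)$ and a de la Vallée–Poussin / uniform integrability argument for $V(y Z^n_T \sE(L^n)_T)$, yield convergence of the dual functionals, hence $\limsup_n v_n(y_n) \le \E[V(y Z^\infty_T \sE(L)_T) + y Z^\infty_T \sE(L)_T f]$. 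Taking the infimum over $L \in \sB$ and invoking Assumption \ref{ass:bdd_inf} gives $\limsup_n v_n(y_n) \le v_\infty(y)$.

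For the lower bound $\liminf_n v_n(y_n) \ge v_\infty(y)$, I would pass along a subsequence achieving the liminf and, using the Komlós-type / AS decomposition machinery of \cite{AS92,AS93MMM}, write near-optimal dual elements for the $S^n$ markets as $Z^n_T \sE(L_n)_T$ with $\langle L_n, M^n \rangle = 0$. The goal is to extract a limit point whose density has the form $Z^\infty_T \sE(L)_T$ with $\langle L, M^\infty \rangle = 0$, i.e. lands in the admissible class for $v_\infty$; here Assumption \ref{ass:nondegeneracy} is what prevents the orthogonality constraint from degenerating in the limit and guarantees the limiting object is a genuine competitor. One then applies Fatou's lemma (using $V \ge 0$ by the normalization $U(0) > 0$, and $f \in L^\infty$ so the linear term is controlled) to conclude $\liminf_n v_n(y_n) \ge \E[V(y Z^\infty_T \sE(L)_T) + y Z^\infty_T \sE(L)_T f] \ge v_\infty(y)$. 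I expect this lower-bound step — extracting a limiting dual element that respects the orthogonality to $M^\infty$ and showing it is admissible — to be the main obstacle, since it requires combining weak-compactness of the densities with the semimartingale convergence of the martingale components, and this is precisely where Assumption \ref{ass:nondegeneracy} must be used crucially (its necessity being the subject of Section \ref{section:examples}).

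Once the dual convergence is in hand, the primal convergence follows from the conjugacy relations. By Theorem 1.2 of \cite{OZ09MF} (applied with $\sE = x_n + f$, as noted on page \pageref{rmk:primal_is_same}), for each $n \le \infty$ the primal and dual value functions satisfy the biconjugacy $u_n(x) = \inf_{y>0}\{v_n(y) + xy\}$ and the dual attains its infimum. From $v_n(y_n) \to v_\infty(y)$ for every convergent sequence $y_n \to y > 0$, one gets $\limsup_n u_n(x_n) \le \limsup_n (v_n(y) + x_n y) = v_\infty(y) + xy$ for each fixed $y$, hence $\limsup_n u_n(x_n) \le u_\infty(x)$. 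For the reverse inequality, let $y^*$ be the (unique, by strict convexity of $V$) dual optimizer for $u_\infty(x)$, so $u_\infty(x) = v_\infty(y^*) + x y^*$; then $u_n(x_n) \ge$ (value of the dual functional at the $n$-analogue of $y^*$) ... more carefully, one uses $u_n(x_n) \ge \inf_{y} \{v_n(y) + x_n y\}$ is an equality, and shows the dual minimizers $y^*_n$ for $u_n(x_n)$ stay in a compact subset of $(0,\infty)$ (bounded away from $0$ and $\infty$) using the asymptotic elasticity conditions \eqref{eqn:ae} and uniform integrability, so that along a subsequence $y^*_n \to \bar y > 0$ and $\liminf_n u_n(x_n) = \liminf_n(v_n(y^*_n) + x_n y^*_n) \ge v_\infty(\bar y) + x \bar y \ge u_\infty(x)$. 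Combining the two directions gives $u_n(x_n) \to u_\infty(x)$, completing the proof.
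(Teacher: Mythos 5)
Your upper-bound half is essentially the paper's argument: for $L\in\sB$ one projects via the Kunita--Watanabe decomposition onto the orthogonal complement of $M^n$, obtaining $L^n$ with $\left<L^n,M^n\right>_t=0$ so that $Z^n_T\sE(L^n)_T$ is a competitor for $v_n$, and one passes to the limit using Assumption \ref{ass:V_ui} together with reasonable asymptotic elasticity for uniform integrability, and Assumption \ref{ass:bdd_inf} to saturate $v_\infty$. Two corrections there. First, Assumption \ref{ass:nondegeneracy} is used precisely in this step, not in your lower bound: the convergence $L^n\to L$ in $H^2_0(\bP)$ (Lemma \ref{lemma:L_approximation}) needs $|\sigma^\infty|\neq 0$ so that $(\nu\LargerCdot\sigma^n)^2/|\sigma^n|^2\to 0$, and Example \ref{example:degeneracy_ass} shows that when nondegeneracy fails it is exactly the inequality $\limsup_n u_n\leq u_\infty$ (equivalently, your $\limsup$ bound for the dual) that breaks. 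Second, before projecting you must shrink $\sB$ to the subclass $\sB'$ of \eqref{eqn:Bp} (Lemma \ref{lemma:bdd_L}), so that $\sE(L)$ is bounded away from $0$ and $\left<L\right>_T$ is bounded; these bounds are what make the stopping argument work and allow $V\bigl(y_nZ^n_T\sE(L^n)_{\tau_n}\bigr)$ to be dominated by a constant multiple of $V(Z^n_T)$.

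The genuine gap is your lower bound. You propose to prove $\liminf_n v_n(y_n)\geq v_\infty(y)$ by taking near-optimal dual elements $Z^n_T\sE(L_n)_T$ with $\left<L_n,M^n\right>=0$ and extracting a limit of the form $Z^\infty_T\sE(L)_T$ with $\left<L,M^\infty\right>=0$; you flag this as the main obstacle but offer no mechanism, and none is easy: almost-sure (or Koml\'os-type convex-combination) limits of the densities are in general only supermartingale-type objects — mass can escape, so the limit need not be a probability density — convex combinations destroy the multiplicative form $Z^\infty_T\sE(L)_T$, and each $L_n$ is orthogonal to a \emph{different} $M^n$, so orthogonality to $M^\infty$ does not pass to the limit without a separate argument. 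Likewise, the compactness of the dual minimizers $y_n^*$ in $(0,\infty)$ is asserted, not proved. The paper never establishes dual lower semicontinuity at all: it proves the primal lower bound $u_\infty(x)\leq\liminf_n u_n(x_n)$ directly (Lemma \ref{lemma:primal_lsc}) by reducing to bounded integrands (Lemma \ref{lemma:H_approx1}), transplanting them into the $S^n$ markets by stopping and semimartingale convergence (Lemmas \ref{lemma:S_semimg_convergence} and \ref{lemma:H_approx2}), and applying Fatou; the loop is then closed by the elementary weak duality $u_n(x_n)\leq v_n(y)+x_ny$ and the exact conjugacy $u_\infty(x)=v_\infty(y)+xy$ at $y=y(x)$ from \cite{OZ09MF}, which simultaneously yields $v_n(y)\to v_\infty(y)$, and then $v_n(y_n)\to v_\infty(y)$ by convexity and locally uniform convergence. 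To repair your proof, either carry out the dual compactness argument in full — substantially harder than anything in the paper — or replace your lower-bound step by this primal-side approximation.
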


For $1\leq n\leq\infty$, the value function without random endowment is defined by
\begin{equation}
  w_n(x) := \sup_{H\in\adm^n}\E\left[U\left(x+(H\cdot S^n)_T\right)\right],
  \ \ x\in\R.
\end{equation}
\begin{definition}\label{def:indifference_price}
  Given $1\leq n\leq\infty$ and $x\in\R$, $p_n=p_n(x)$ is called the \textit{indifference price for
  $f$ at $x$ in the $S^n$ market} if $w_n(x+p_n) = u_n(x)$.
\end{definition}
\begin{corollary}\label{cor:indifference_price}
  Let the assumptions be as in Theorem \ref{thm:main_result}.
  Then for $x\in\R$, the indifference prices for $f$ converge;
  that is, $\lim_{n\rightarrow\infty} p_n(x) = p_\infty(x)$.
\end{corollary}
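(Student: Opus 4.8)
The plan is to deduce indifference price convergence from the two value function convergences in Theorem \ref{thm:main_result}, applied once to the problem with endowment $f$ and once to the problem without endowment. First I would record that $w_n = u_n$ in the special case $f \equiv 0$, so Theorem \ref{thm:main_result} (with its assumptions, which do not involve $f$ beyond $f \in L^\infty$) gives $\lim_{n\to\infty} w_n(z_n) = w_\infty(z)$ whenever $z_n \to z$. The definition $w_n(x+p_n) = u_n(x)$ implicitly requires that this equation has a (unique) solution $p_n$; I would note that $z \mapsto w_n(z)$ is continuous, strictly increasing (inherited from strict monotonicity and concavity of $U$, via the argument in \cite{OZ09MF}), and its range is an open interval, so $p_n$ is well-defined and unique as long as $u_n(x)$ lies in the range of $w_n$ — which it does because adding a bounded endowment $f$ shifts the value function by a bounded amount, i.e. $w_n(x + \sup f) \ge u_n(x) \ge w_n(x - \|f\|_\infty)$, keeping $u_n(x)$ strictly inside the range of $w_n$ uniformly in $n$.

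Next I would establish that the sequence $\{p_n(x)\}_{n\ge1}$ is bounded. The two-sided sandwich $w_n(x+\|f\|_\infty) \ge u_n(x) \ge w_n(x-\|f\|_\infty)$ together with strict monotonicity of $w_n$ forces $x - \|f\|_\infty \le p_n(x) \le x + \|f\|_\infty$ for every $n$, so the sequence lives in a fixed compact interval. Hence it suffices to show every convergent subsequence has the same limit $p_\infty(x)$. Take a subsequence $p_{n_k}(x) \to q$. On this subsequence, $x + p_{n_k}(x) \to x + q$, so by the $f\equiv0$ case of Theorem \ref{thm:main_result}, $w_{n_k}(x + p_{n_k}(x)) \to w_\infty(x+q)$; but $w_{n_k}(x+p_{n_k}(x)) = u_{n_k}(x) \to u_\infty(x)$ by the endowment case of Theorem \ref{thm:main_result} (with $x_n \equiv x$). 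Therefore $w_\infty(x+q) = u_\infty(x) = w_\infty(x + p_\infty(x))$, and strict monotonicity of $w_\infty$ yields $q = p_\infty(x)$. Since every subsequential limit equals $p_\infty(x)$ and the sequence is bounded, $p_n(x) \to p_\infty(x)$.

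The main obstacle, such as it is, is the bookkeeping needed to make the indifference price well-defined and to secure uniform control of $\{p_n(x)\}$ — namely verifying that $u_n(x)$ stays strictly within the (open) range of $w_n$ uniformly in $n$, which in turn rests on the boundedness of $f$ and on strict monotonicity and the Inada-type behavior of $w_n$ at the endpoints. Everything else is a soft continuity-plus-strict-monotonicity argument that transfers the two convergences of Theorem \ref{thm:main_result} through the defining relation of $p_n$; no new estimates on the markets are required.
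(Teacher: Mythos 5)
Your proposal is correct and follows essentially the same route as the paper: boundedness of $\{p_n(x)\}_n$ from $f\in L^\infty(\bP)$, a subsequence argument applying Theorem \ref{thm:main_result} once with endowment $f$ and once with the claim taken to be $0$ (with varying initial wealths $x+p_{n_k}(x)\to x+q$), and strict monotonicity of $w_\infty$ to identify the limit as $p_\infty(x)$. The only blemish is a harmless transcription slip in the sandwich step — strict monotonicity of $w_n$ actually yields $-\|f\|_{L^\infty}\leq p_n(x)\leq \|f\|_{L^\infty}$ rather than the interval you wrote — which does not affect the boundedness conclusion or anything downstream.
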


\begin{remark}
  The results in Theorem \ref{thm:main_result} and 
  Corollary \ref{cor:indifference_price} remain true 
  (with only minor notational changes to the proofs)
  in the case with varying random endowment.  Specifically, the random
  endowments $\{f_n\}_{1\leq n\leq\infty}$ corresponding to the 
  $\{S^n\}_{1\leq n\leq\infty}$ markets need to satisfy
  \begin{equation}\label{eqn:fn}
    \sup_n\|f_n\|_{L^\infty} <\infty\ \ \text{ and }\ \ 
    f_n\longrightarrow f_\infty \ \text{ in probability as $n\rightarrow\infty$}
  \end{equation}
  in order for the results to hold.  This additional flexibility allows us to
  consider the case of a varying quantity of contingent claims and also
  contingent claims that depend on the individual markets.  For example,
  if $g:\R\rightarrow\R$ is bounded and continuous, then $f_n:=g(S^n_T)$
  will satisfy \eqref{eqn:fn}.
\end{remark}

\begin{remark}\label{rmk:optimal_wealths}
  The study of the optimal terminal wealths and the optimal dual elements is
  typical in utility maximization in addition to properties of
  the value functions; 
  however, it is absent in the present work.  
  When varying both the volatility and drift of the risky assets, a major
  hurdle to stability is handling the change in the primal and dual feasible
  elements from market to market.
  Here, we use the varying volatility as a tool for pricing
  financial securities via a ``nearby'' models with good properties,
  rather than using it for investment advice.
  Because the study of optimal strategies is rather involved, it is beyond 
  the scope of the present work and would be an interesting question to 
  address in future research.
\end{remark}

\begin{remark}\label{rmk:kz_comparison}
  A special case of stability with varying volatility is considered in
  \cite{KZ11MF} (see their Remark 2.5).  The authors consider a fixed 
  risky asset with varying \textit{equivalent} subjective probability
  measures.  However, this approach relies on the invertibility of the 
  volatility process in every market, which in particular implies completeness 
  for all markets.  
  In our model, such measures would correspond to the risky asset laws, 
  $\bP^n:=\bP\circ (S^n)^{-1}$.  Due to the changes in the volatility 
  structure with $n$, the laws $\bP^n$ are \textit{nonequivalent} in the 
  present work.  Moreover, our results do not rely on completeness.
\end{remark}

\section{Proofs}\label{section:proofs}
We begin by proving the results from Section \ref{section:counterex} for the power investor.

\subsection{Dual Problems and Power Investor Proofs}\label{subsection:counterex_dual}
We begin by proving Proposition \ref{prop:subreplication}.  Example 1 of \cite{EKQ95JCO} uses the duality between $L^\infty(\bP)$ and $L^1(\bP)$ in order to establish a similar result when the contingent claim is independent of the traded assets.  Without independence, we cannot apply the duality result directly, and instead we explicitly construct a sequence of martingale measures realizing the subreplication price.
\begin{proof}[Proof of Proposition \ref{prop:subreplication}]
Let $\rho\neq 0$ be given.  We first seek to show that for all $0<t'<T$, 
$$
  \essinf_{\bQ\in\sD(\rho)} \E^\bQ[\phi(B_T)|\sF_{t'}] = \phimin,
$$ 
which implies that the subreplication price is $\phimin$.  Subsequently, we will show \eqref{eqn:primal_bound}.

We fix $t'<T$ and let $T'\in (t',T)$ and $x\in\R$ be given.  Then $B^\rho:= \sqrt{1-\rho^2} B + \rho W$ and $W^\rho:= \sqrt{1-\rho^2} W - \rho B$ are orthogonal $\bP$-Brownian motions.  Equivalently, we have $B = \sqrt{1-\rho^2}B^\rho -\rho W^\rho$ and $W=\rho B^\rho+\sqrt{1-\rho^2}W^\rho$.  
Consider the local martingale $Z$ defined for $t\in[0,T]$ by
$$
  Z_t := \sE\left(-\mu\sqrt{V}\cdot B^\rho\right)_t
  \sE\left(\frac{1}{\rho}\left(-\mu\sqrt{1-\rho^2}\sqrt{V}-\frac{x}{T}
  +\frac{\eta\bI_{[T',T]}}{T-T'}\right)\cdot W^\rho\right)_t,
$$
where  $\eta:= B_{T'} - xT'/T\in\sF_{T'}$.  
In fact, $Z$ is a martingale, which we verify by applying Novikov's condition locally.  The following procedure is standard; see, e.g., Section 6.2 Example 3(a) in \cite{LS01}.  By Corollary 5.14 of \cite{KS91}, it suffices to find $\Delta>0$ and $t_n:= n\Delta$ such that for each $n\geq 1$,
\begin{equation}\label{eqn:novikov}
  \E\left[\exp\left(\frac{1}{2}\int_{t_n}^{t_{n+1}}d\left<M\right>_u\right)\right]
  <\infty,
\end{equation}
where $M_t:=-\mu\left(\sqrt{V}\cdot B^\rho\right)_t + 
  \frac{1}{\rho}\left(\left(-\mu\sqrt{1-\rho^2}\sqrt{V}-\frac{x}{T}
  +\frac{\eta\bI_{[T',T]}}{T-T'}\right)\cdot W^\rho\right)_t$ for $t\in[0,T]$.  By applying Cauchy-Schwartz to \eqref{eqn:novikov}, it suffices to choose $\Delta>0$ such that for each $n\geq 1$, we have
  \begin{equation}\label{eqn:novikov2}
    \E\left[\exp\left(\int_{t_n}^{t_{n+1}}\frac{\mu^2}{\rho^2}V_u du\right)\right]
  <\infty \ \ \text{ and } \ \ 
  \E\left[\exp\left(\Delta\frac{\eta^2}{\rho^2(T-T')}\right)\right]
  <\infty.
  \end{equation}
  Jensen's Inequality and Tonelli's Theorem imply that $\E\left[\exp\left(\int_{t_n}^{t_{n+1}}\frac{\mu^2}{\rho^2}V_u du\right)\right] \leq \\ \E\left[\int_{t_n}^{t_{n+1}}\exp\left(\frac{\Delta\mu^2}{\rho^2}V_u \right)\frac{du}{\Delta}\right] \leq \E\left[\exp\left(\frac{\mu^2}{\rho^2}\Delta V_T\right)\right]$. Thus, taking $\Delta := \rho^2\min\left(\frac{\kappa}{\mu^2\sigma^2(1-e^{-\kappa T})},\frac{T-T'}{4T'}\right)$ yields \eqref{eqn:novikov2}.

We define $\overline{\bQ}\in\sD(\rho)$ by $\frac{d\overline\bQ}{d\bP}:= Z_T$ and the processes $\overline{B}^\rho$ and $\overline{W}^\rho$ by
$$
  \overline{B}^\rho_t := B^\rho_t +\mu\int_0^t \sqrt{V_u}\,du
$$
and
$$
  \overline{W}^\rho_t := W^\rho_t + \frac{\mu\sqrt{1-\rho^2}}{\rho}\int_0^t \sqrt{V_u}\,du + \frac{x t}{\rho T} - \frac{\eta \int_0^t\bI_{[T',T]}}{\rho(T-T')}.
$$
By Girsanov's Theorem, $\overline{B}^\rho$ and $\overline{W}^\rho$ are orthogonal $\overline\bQ$-Brownian motions.  
Moreover, $\eta = \sqrt{1-\rho^2}\overline{B}^\rho_{T'} - \rho\overline{W}^\rho_{T'}$, which implies that
$$
  B_T = \left(\sqrt{1-\rho^2}\overline{B}^\rho_T-\rho\overline{W}^\rho_T\right) + x - \left(\sqrt{1-\rho^2}\overline{B}^\rho_{T'}-\rho\overline{W}^\rho_{T'}\right).
$$
Then, $\bP$-a.s.,
\begin{align*}
  \E^{\overline{\bQ}}\left[\phi(B_T)\left.\right|\sF_{t'}\right]
  &= \E^{\overline{\bQ}}\left[\phi\left(\left(\sqrt{1-\rho^2}\overline{B}^\rho_T-\rho\overline{W}^\rho_T\right) - \left(\sqrt{1-\rho^2}\overline{B}^\rho_{T'}-\rho\overline{W}^\rho_{T'}\right) +x\right)\left.\right|\sF_{t'}\right]\\
  &= \E^{\overline{\bQ}}\left[\phi\left(\left(\sqrt{1-\rho^2}\overline{B}^\rho_T-\rho\overline{W}^\rho_T\right) - \left(\sqrt{1-\rho^2}\overline{B}^\rho_{T'}-\rho\overline{W}^\rho_{T'}\right) +x\right)\right]\\ 
  &= \E^\bP\left[\phi\left(B_T-B_{T'}+x\right)\right].
\end{align*}

The choice of $T'\in(t',T)$ and $x\in\R$ is arbitrary, and therefore,
\begin{equation}\label{eqn:subreplication_confirmed}
  \essinf_{\bQ\in\sD(\rho)}\,\E^\bQ\left[\phi(B_T)\left.\right|\sF_{t'}\right]
  =\phimin.
\end{equation}

Finally, we suppose that $x\in\R$ and $(H\cdot S^\rho)_T\in\sC(\rho)$ such that $x+(H\cdot S^\rho)_T+\phi(B_T)\geq 0$.  Then for all $\bQ\in\sD(\rho)$, we have that $(H\cdot S^\rho)$ is a lower-bounded $\bQ$-local martingale, and hence a $\bQ$-supermartingale.  For all $t'<T$, we have $0\leq x+(H\cdot S^\rho)_{t'} + \E^\bQ\left[\phi(B_T)|\sF_{t'}\right]$.  By \eqref{eqn:subreplication_confirmed} above, we have $0\leq x+ (H\cdot S^\rho)_{t'}+\phimin$.  Continuity with respect to time and taking $t'\rightarrow T$ yields \eqref{eqn:primal_bound}.
\end{proof}

As is typical in convex optimization, we introduce the dual problem as tool for proving Theorem \ref{thm:primal_convergence} and Corollary \ref{cor:indifference_prices}.  For $y>0$, define $V(y):=\sup_{x>0}\left\{U(x)-xy\right\}$.  For $U(x)=x^p/p$, we have $V(y) = \frac{1-p}{p}y^{p/(p-1)}$.  
For $y>0$ and $z\geq \phimin$, we define
$$
  V_c(y,z):=\sup_{x>-\phimin}\left\{U(x+z)-xy\right\}=
    \begin{cases}
      V(y)+yz, & \text{for } y< U'\left(z-\phimin\right), \\
      U\left(z-\phimin\right)+y\phimin, & \text{otherwise}.
    \end{cases}
$$
We can then define a constrained form of the dual value function for $\rho\in(-1,1)$ by,
\begin{equation}\label{def:constrained_v}
  v_c(y,\rho):= \inf_{\bQ\in\sD(\rho)}
    \E\left[V_c\left(y\frac{d\bQ}{d\bP},f\right)\right],
    \ \ \text{ $y>0$}.
\end{equation}
For $Z^\rho_t := \sE\left(-\mu\sqrt{V}\cdot B\right)_t$, $t\in[0,T]$, the random variable $Z^\rho_T$ is the minimal martingale density corresponding to the $S^\rho$ market.  The martingale property of $Z^\rho$ is shown in Lemma 5.2 of \cite{CL14RAPS}.  In particular,   $v_c(y,\rho)<\infty$ for all $y>0$ and $\rho\in(-1,1)$.

The constrained dual problem arises naturally from the endogenous primal admissibility constraint \eqref{eqn:primal_bound}.  For $\rho\neq 0$, we could define a constrained primal problem, $u^\rho_c=u^\rho_c(x)$ for $x>-\phimin$, and a corresponding constrained optimization set, $\sC^\rho_c(x)$, analogously to $u_c$ and $\sC_c(x)$ in the $\rho=0$ case.  In that case, we would have $u^\rho_c(x) = u(x,\rho)$ for all $x>-\phimin$ by \eqref{eqn:primal_bound}, and \eqref{def:constrained_v} would be the natural candidate for its dual conjugate.  Indeed, for $\rho\neq 0$, \cite{LSZ14} prove that the constrained form of the
  dual value function, \eqref{def:constrained_v}, 
  is in fact equal to the dual value function
  as it is defined in \cite{CSW01FS}, Equation (3.1).  
  (See \cite{LSZ14} Theorem 4.2.)

\begin{remark}\label{rmk:facelifting}
    In \cite{LSZ14},
  the authors consider the 
  problem of \textit{facelifting}, in which the primal and dual value
  functions in the presence of unspanned random endowment are not continuous with respect to time to maturity as the maturity decreases to $0$. 
  At first glance, our stability problem differs from that of varying
  maturity.  However, both problems have the property that the random 
  endowment is non-replicable in every pre-limiting market yet replicable 
  in the limit.  This property allows for the admissibility 
  constraint, \eqref{eqn:primal_bound}, to appear endogenously 
  in the pre-limiting 
  models, whereas \eqref{eqn:primal_bound} must be exogenously applied 
  in the limiting model.
\end{remark}

\begin{lemma}\label{lemma:v_usc}
  Let the assumptions of the model be as in Theorem 
  \ref{thm:primal_convergence}.
  For $y>0$, 
  $$
    \limsup_{\rho\rightarrow 0} v_c(y,\rho)\leq \E\left[V_c(yZ^0_T,f)\right],
  $$
  where $Z^0_T$ is the minimal martingale density for the $S^0$ market.
\end{lemma}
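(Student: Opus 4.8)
The plan is to get the upper bound by feeding a single, explicit element of $\sD(\rho)$ into \eqref{def:constrained_v}, namely the minimal martingale measure $\bQ^\rho$ with $d\bQ^\rho/d\bP = Z^\rho_T$. Since $Z^\rho$ is a true $\bP$-martingale (Lemma 5.2 of \cite{CL14RAPS}) and $S^\rho$ is a $\bQ^\rho$-local martingale, every $X\in\sC(\rho)$ is a lower-bounded $\bQ^\rho$-local martingale, hence a $\bQ^\rho$-supermartingale, so $\E^{\bQ^\rho}[X]\le 0$ and $\bQ^\rho\in\sD(\rho)$. This gives $v_c(y,\rho)\le\E[V_c(yZ^\rho_T,f)]$, so it suffices to prove $\limsup_{\rho\to 0}\E[V_c(yZ^\rho_T,f)]\le\E[V_c(yZ^0_T,f)]$; we may assume the right-hand side is finite, otherwise there is nothing to prove (and one checks this finiteness is equivalent to $\E[V(yZ^0_T)]<\infty$).

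First I would record that $Z^\rho_T\to Z^0_T$ in $L^1(\bP)$. Writing $B^\rho:=\sqrt{1-\rho^2}\,B+\rho W$, one has $Z^\rho_T=\exp\bigl(-\mu(\sqrt V\cdot B^\rho)_T-\tfrac{\mu^2}{2}\int_0^T V_u\,du\bigr)$, and the only $\rho$-dependence is through $(\sqrt V\cdot B^\rho)_T=\sqrt{1-\rho^2}\,(\sqrt V\cdot B)_T+\rho\,(\sqrt V\cdot W)_T\to(\sqrt V\cdot B)_T$ a.s., so $Z^\rho_T\to Z^0_T$ a.s.; since each $Z^\rho$ is a martingale with $\E[Z^\rho_T]=1=\E[Z^0_T]$, Scheff\'e's lemma upgrades this to $L^1(\bP)$-convergence. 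Together with the continuity of $y\mapsto V_c(y,z)$ (clear from its two-piece formula) this gives $V_c(yZ^\rho_T,f)\to V_c(yZ^0_T,f)$ a.s., and the claim reduces to uniform integrability of $\{V_c(yZ^\rho_T,f)\}_{\rho}$ as $\rho\to 0$, which upgrades the a.s.\ convergence to convergence of expectations.

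For the negative parts, the bound $V_c(y,z)\ge U(z-\phimin)+\phimin y$ (let $x\downarrow-\phimin$ in the defining supremum) gives $V_c(yZ^\rho_T,f)\ge U(f-\phimin)+\phimin\,yZ^\rho_T$, a uniformly integrable family since $f$ is bounded and $\{Z^\rho_T\}$ converges in $L^1$. For the positive parts, $V_c(y,z)\le V(y)+yz\le V(y)+y\|\phi\|_{L^\infty}$, so it remains to show $\{V(yZ^\rho_T)^+\}$ is uniformly integrable. This is vacuous when $p<0$ (there $V\le 0$); when $p=0$ one has $|V(yZ^\rho_T)|\le|\log y|+1+\mu|(\sqrt V\cdot B^\rho)_T|+\tfrac{\mu^2}{2}\int_0^T V_u\,du$, which the It\^o isometry $\E[(\sqrt V\cdot B^\rho)_T^2]=\E[\int_0^T V_u\,du]$ (the same for every $\rho$) and the finiteness of all moments of the CIR process $V$ bound in $L^2(\bP)$ uniformly in $\rho$. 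The case $0<p<1$ is the main obstacle: $V(y)=\tfrac{1-p}{p}\,y^{p/(p-1)}$ has a pole of negative order at the origin, so $V(yZ^\rho_T)$ is not dominated by any multiple of $Z^\rho_T$ and uniform integrability must come from moment estimates. Here I would use that the quadratic-variation term $\int_0^T V_u\,du$ in the exponent of $Z^\rho_T$ does not depend on $\rho$, together with the identity $(\sqrt V\cdot B)_T=\sigma^{-1}\bigl(V_T-1-\kappa\theta T+\kappa\int_0^T V_u\,du\bigr)$ from integrating the $V$-SDE, to rewrite $\E[(Z^\rho_T)^{p/(p-1)}]$---and, for small $\delta>0$, $\E[(Z^\rho_T)^{(1+\delta)p/(p-1)}]$---as an affine (Heston-type) Laplace transform $\E\bigl[\exp\bigl(a^\rho V_T+b^\rho\int_0^T V_u\,du\bigr)\bigr]$ with $(a^\rho,b^\rho)$ continuous in $\rho$ and converging, as $\rho\to 0$, to the $\rho=0$ values at which the transform is finite; since the finiteness region of such a transform is open and contains the origin, a slightly higher moment is finite at $\rho=0$ as well, and these $(1+\delta)$-moments stay bounded for $\rho$ near $0$, giving the required uniform integrability. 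With uniform integrability in all cases, $\E[V_c(yZ^\rho_T,f)]\to\E[V_c(yZ^0_T,f)]$, which proves the lemma.

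So the only non-routine point is the uniform integrability of $\{V(yZ^\rho_T)\}$ when $0<p<1$; everything else is either soft (Scheff\'e, continuity, the elementary bounds on $V_c$) or a standard exponential-moment estimate for the CIR process. The structural fact that makes that step work is that varying $\rho$ only rotates the driving Brownian motion of $S^\rho$ and leaves the predictable quadratic variation $\langle\sqrt V\cdot B^\rho\rangle=\int_0^\cdot V_u\,du$---hence all the relevant moment bounds---unchanged.
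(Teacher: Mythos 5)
Your proof is correct and follows the same skeleton as the paper's: you insert the minimal martingale measure (density $Z^\rho_T$) as a feasible element of $\sD(\rho)$ to get $v_c(y,\rho)\leq \E[V_c(yZ^\rho_T,f)]$, obtain $Z^\rho_T\longrightarrow Z^0_T$ in $L^1(\bP)$ via Scheff\'e, dominate $V_c(yZ^\rho_T,f)$ by $V(yZ^\rho_T)+yZ^\rho_T\|f\|_{L^\infty}$, and pass to the limit by uniform integrability. Where you genuinely differ is in the key technical input: the paper simply invokes the proof of Lemma 5.2 of \cite{CL14RAPS} for the uniform integrability of $\{V(yZ^\rho_T)\}_\rho$, whereas you prove it directly, exploiting that varying $\rho$ only rotates the driving Brownian motion while $\langle\sqrt{V}\cdot B^\rho\rangle=\int_0^{\cdot}V_u\,du$ is unchanged, and reducing the case $0<p<1$ to boundedness of the CIR joint transform $\E[\exp(a^\rho V_T+b^\rho\int_0^T V_u\,du)]$ for $(\rho,\delta)$ near $(0,0)$; you also dispose explicitly of the case $\E[V_c(yZ^0_T,f)]=+\infty$, where the statement is vacuous — a point the paper glosses over. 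Two caveats. First, the openness of the finiteness region of the joint transform is not a general property of Laplace transforms (one-dimensional counterexamples exist with a closed finiteness interval), so this step really does rest on the affine/Riccati structure of the CIR process: at a boundary parameter the Riccati solution explodes exactly at $T$ and the transform is $+\infty$, which is what makes the region open and the transform locally bounded on it; you assert this but should prove it via the explicit formula. Second, your lower bound $V_c(y,z)\geq U(z-\phimin)+\phimin y$ degenerates to $-\infty$ when $p\leq 0$ and $\bP(f=\phimin)>0$ (e.g.\ constant $\phi$, which the model allows), so the claimed two-sided convergence of $\E[V_c(yZ^\rho_T,f)]$ is not fully justified in that case; this is harmless for the lemma, since only the $\limsup$ inequality is needed and it follows from the upper domination alone, exactly as in the paper's one-sided Fatou argument.
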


\begin{proof}
  One can show that $\{V(yZ^\rho_T)\}_\rho$ is uniformly integrable, 
  e.g., using the proof of Lemma 5.2 in \cite{CL14RAPS}.  
  For each $\rho\in(-1,1)$, $Z^\rho$ is a martingale, and hence 
  convergence in probability along with Scheffe's
  Lemma implies that $Z^\rho_T\longrightarrow Z^0_T$ in $L^1(\bP)$ as
  $\rho\rightarrow 0$.  Convergence in $L^1(\bP)$ plus $f\in L^\infty(\bP)$
  implies that $\{Z^\rho_T f\}_\rho$ is 
  uniformly integrable.  Since $V_c(yZ^\rho_T,f)\longrightarrow V_c(yZ^0_T,f)$
  in probability as $\rho\rightarrow 0$ and
  $$
    V_c(yZ^\rho_T,f) \leq V(yZ^\rho_T)+yZ^\rho_Tf
  $$
  for all $\rho\in(-1,1)$, Fatou's Lemma implies
  \begin{align*}
    \E\left[V_c\left(yZ^0_T,f\right)\right]
      &\geq \limsup_{\rho\rightarrow 0} 
        \E\left[V_c\left(yZ^\rho_T,f\right)\right] \\
      &\geq \limsup_{\rho\rightarrow 0} \ v_c(y,\rho). 
  \end{align*}
\end{proof}

\begin{lemma}\label{lemma:u_lsc}  
  Let the assumptions of the model be as in Theorem 
  \ref{thm:primal_convergence}.
  Let $u$ and $u_c$ be as defined in
  \eqref{def:primal} and \eqref{def:primal_constrained}, respectively.
  For any $x>-\phimin$, $u_c(x)\leq\liminf_{\rho\rightarrow 0} u(x,\rho)$.
\end{lemma}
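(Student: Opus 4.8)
The plan is to show that any feasible point for the constrained problem $u_c(x)$ can be approximated by feasible points for the $\rho$-problems $u(x,\rho)$, so that $u_c(x)$ is a lower bound for the $\liminf$. Fix $x > -\phimin$ and let $X = (H \cdot S^0)_T \in \sC_c(x)$ be an arbitrary feasible element for $u_c(x)$, so that $x + X \geq -\phimin$ and hence $x + X + f \geq 0$. The natural candidate for an approximating strategy in the $\rho$-market is the \emph{same} integrand $H$, but applied to $S^\rho$ instead of $S^0$. Since $dS^\rho_t - dS^0_t = \mu V_t\, dt + \sqrt{V_t}\big(\sqrt{1-\rho^2}-1\big)\,dB_t + \rho\sqrt{V_t}\,dW_t$, as $\rho \to 0$ the difference of the stochastic integrals $(H \cdot S^\rho)_T - (H \cdot S^0)_T$ should converge to $0$ in probability (this is essentially semimartingale-topology convergence $S^\rho \to S^0$ as $\rho \to 0$, paired with a single fixed $S^0$-integrable integrand $H$; one may need to localize $H$ or truncate to stay within the class where this is clean). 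The issue is that $(H \cdot S^\rho)_T$ need not be $\geq -\phimin$ even though $(H \cdot S^0)_T$ is, so feeding $H$ directly into $u(x,\rho)$ is not obviously admissible in the sense needed — but note $\rho$-admissibility only requires a finite lower bound $-K$, which $H$ retains (uniformly in small $\rho$ if $H$ is, say, bounded after localization). So $(H \cdot S^\rho)_T \in \sC(\rho)$, and the only gap is that $x + (H \cdot S^\rho)_T + f$ may dip below $0$ on a small-probability set, where $U = -\infty$.

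To handle that, I would first reduce to a ``nice'' class of $X$ before approximating: by monotone/dominated convergence arguments (and using that $U$ is increasing), it suffices to prove the bound for feasible $X$ satisfying a strict margin, $x + X + f \geq \delta > 0$ a.s.\ and $|X| \leq N$ a.s., since such $X$ are dense-from-below in the value $u_c(x)$ — replace $X$ by $(X \wedge N) \vee (-N)$ and shift $x$ slightly, using $\E[U(x+X+f)] = u_c(x)$ up to $\eps$. With a strict margin $\delta$, the event $\{x + (H \cdot S^\rho)_T + f < 0\}$ is contained in $\{|(H \cdot S^\rho)_T - X| > \delta\}$, which has probability $\to 0$. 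Then split $\E[U(x + (H\cdot S^\rho)_T + f)]$ over this bad event and its complement: on the complement the integrand converges to $U(x + X + f)$ and is bounded above (since $X$ bounded, $f$ bounded, $U$ continuous), so Fatou gives the right $\liminf$; on the bad event $U$ is $-\infty$, which kills the strategy. To fix the bad event, I would modify the strategy to \emph{liquidate} (set the integrand to $0$) at the first time the running wealth $x + (H \cdot S^\rho)_t + f^{(t)}$-type quantity threatens to go negative — more precisely, stop $H$ at $\tau_\rho := \inf\{t : x + (H \cdot S^\rho)_t \leq -\phimin + \delta/2\}$, which is still $\rho$-admissible, forces $x + (H \cdot S^\rho)_{T \wedge \tau_\rho} + f \geq \delta/2 > 0$ using $f \geq \phimin$, and satisfies $\bP(\tau_\rho < T) \to 0$ because $(H \cdot S^\rho)$ converges uniformly on $[0,T]$ in probability to $(H \cdot S^0)$ which stays $\geq -\phimin$ with margin. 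On $\{\tau_\rho = T\}$ the contribution converges to $\E[U(x+X+f)\mathbf{1}_{\{\text{limit set}\}}]$; on $\{\tau_\rho < T\}$ the contribution is $\E[U(x + (H\cdot S^\rho)_{\tau_\rho} + f)\mathbf{1}_{\{\tau_\rho<T\}}]$ which is bounded below (wealth $\geq \delta/2$, so $U \geq U(\delta/2) > -\infty$) and the indicator's probability vanishes.

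Putting it together: for each such truncated-and-stopped $X$, $\liminf_{\rho\to 0} u(x,\rho) \geq \liminf_{\rho \to 0} \E[U(x + (H \cdot S^\rho)_{T\wedge\tau_\rho} + f)] \geq \E[U(x + X + f)]$, and then taking the supremum over the dense class of such $X$ yields $\liminf_{\rho\to 0} u(x,\rho) \geq u_c(x)$. The main obstacle is the one I have been circling: transferring a \emph{single fixed} $S^0$-integrable strategy into the $\rho$-markets while (i) keeping it admissible and (ii) controlling the stochastic integral $(H\cdot S^\rho)$ \emph{uniformly in $t$}, not just at the terminal time, since the admissibility constraint and the stopping time both live on the whole path; this is where the semimartingale-topology convergence of $S^\rho$ to $S^0$ (or a direct SDE estimate using the explicit dynamics \eqref{def:stocks} and boundedness of $V$ away from $0$ and $\infty$ after localization) does the real work, and care is needed because $H$ is only known to be in $\sL^2(V\,dt)$ so an initial localization of $H$ to a bounded integrand is essential before any uniform estimate is available.
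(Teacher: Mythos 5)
Your route is essentially the paper's route: localize a near-optimal strategy for $u_c$, feed the same integrand into the $\rho$-markets, stop it when the wealth threatens the bound, show the stop occurs with vanishing probability via uniform-in-$t$ convergence of the integrals, and finish with Fatou using the uniform lower bound supplied by a margin (these are exactly the roles of Lemmas \ref{lemma:H_approx1} and \ref{lemma:H_approx2} in the paper). The genuine gap is where you place the margin. You reduce to feasible $X=(H\cdot S^0)_T$ with $x+X+f\geq\delta$, but the argument needs a margin on the wealth itself, $x+X\geq-\phimin+\delta$. Your claim $\bP(\tau_\rho<T)\to 0$ requires the limiting running wealth $x+(H\cdot S^0)_t$ to stay strictly above the stopping level $-\phimin+\delta/2$ for \emph{all} $t$. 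From feasibility $x+X\geq-\phimin$ one only gets $x+(H\cdot S^0)_t\geq-\phimin$ (and even this uses a supermartingale argument under a measure in $\sD(0)$, which you never invoke), with no gap; from your margin $x+X+f\geq\delta$ one only gets $x+(H\cdot S^0)_t\geq\delta-\E^{\bQ}\left[f\,|\,\sF_t\right]$, which for non-constant $\phi$ can lie well below $-\phimin+\delta/2$. So the containment of $\{\tau_\rho<T\}$ in a vanishing-probability set fails precisely in the interesting case. The paper's remedy, which you gesture at (``shift $x$ slightly'') but do not implement, is to surrender initial capital: prove the inequality at $x'<x$ (continuity of the concave $u_c$), so that the unused capital $x-x'$ becomes a genuine wealth margin, spent in two installments --- once when localizing $H$, once when transferring the strategy to $S^{\rho_k}$.

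Your preliminary reduction is also not legitimate as stated: replacing $X$ by $(X\wedge N)\vee(-N)$ leaves the class $\sC_c$, since a truncated terminal wealth is in general not the terminal value of any admissible integral; and in any case boundedness of $X$ is not what is needed. What must be bounded is the \emph{integrand} $H$, so that $\sup_{t\leq T}\left|(H\cdot S^\rho)_t-(H\cdot S^0)_t\right|\rightarrow 0$ in probability is available; the paper achieves this by truncating and stopping $H$ itself (Lemma \ref{lemma:H_approx1}), again paying with part of the capital margin. You acknowledge at the end that localizing $H$ is essential, but the reduction you actually sketch does not produce it. (A minor slip: your appeal to Fatou ``since the integrand is bounded above'' is the wrong direction; what Fatou needs, and what your stopping construction does provide, is the uniform lower bound $U(\delta/2)>-\infty$.)
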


Before proving Lemma \ref{lemma:u_lsc}, we need two technical lemmas, which will again be used in the proof of Theorem \ref{thm:main_result}.  While the notions of integrability are defined separately for Sections \ref{section:counterex} and \ref{section:model}, the notions agree 
and are not referred to separately in Lemmas \ref{lemma:H_approx1} and \ref{lemma:H_approx2} below.
\begin{lemma}\label{lemma:H_approx1}
  Let $X$ be a semimartingale and $H$ be $X$-integrable.  Suppose that there 
  exists $K>0$ such that $(H\cdot X)_t\geq -K$ for all $t\in[0,T]$.  
  Then for any $\delta>0$ there exists a sequence of progressively measurable 
  integrands $\{H^n\}_{n\geq 1}$ such that for each $n\geq 1$, $H^n$ is 
  uniformly bounded in $t$ and $\omega$, while for all $t\in[0,T]$,
  $$
    (H^n\cdot X)_t\geq -K-\delta,
  $$
  and $(H^n\cdot X)_T\longrightarrow (H\cdot X)_T$ in probability 
  as $n\rightarrow\infty$.
\end{lemma}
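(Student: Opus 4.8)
The plan is a two-step approximation of $H$: truncate it to a bounded integrand, then stop the resulting wealth process the first time it would breach the relaxed floor $-K-\delta$. First I would set $G^n := H\,\bI_{\{|H|\le n\}}$ for $n\ge 1$. Since $|G^n-H| = |H|\,\bI_{\{|H|>n\}}\le |H|$, which is itself $X$-integrable, and $G^n_t(\omega)\to H_t(\omega)$ for every $(t,\omega)$, the dominated convergence theorem for stochastic integrals gives $(G^n\cdot X)\to(H\cdot X)$ uniformly on $[0,T]$ in probability. Writing $R_n := \sup_{t\le T}\bigl|(G^n\cdot X)_t-(H\cdot X)_t\bigr|$, this says $R_n\to 0$ in probability.

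Next, I would introduce the stopping time $\sigma_n := \inf\{t\in[0,T]:(G^n\cdot X)_t\le -K-\delta\}\wedge T$ and set $H^n := G^n\,\bI_{[0,\sigma_n]}$. Each $H^n$ is progressively measurable with $|H^n|\le n$, hence bounded uniformly in $t$ and $\omega$, and $(H^n\cdot X)_t = (G^n\cdot X)_{t\wedge\sigma_n}$. Since the price processes under consideration are continuous, so is $(G^n\cdot X)$, and then the definition of $\sigma_n$ forces $(G^n\cdot X)_{t\wedge\sigma_n}\ge -K-\delta$ for all $t$: the process exceeds $-K-\delta$ strictly before $\sigma_n$, equals $-K-\delta$ at $\sigma_n$ on $\{\sigma_n<T\}$ by continuity, and stays frozen afterwards because $H^n$ vanishes there. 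This gives the required lower bound.

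For the convergence at $T$: on $\{\sigma_n=T\}$ we have $(H^n\cdot X)_T=(G^n\cdot X)_T$, which differs from $(H\cdot X)_T$ by at most $R_n$, so it suffices to show $\bP(\sigma_n<T)\to 0$. The hypothesis $(H\cdot X)_t\ge -K$ for all $t$ gives $\inf_{t\le T}(G^n\cdot X)_t\ge -K-R_n$, whence $\{\sigma_n<T\}\subseteq\{\inf_{t\le T}(G^n\cdot X)_t\le -K-\delta\}\subseteq\{R_n\ge\delta\}$, and the last event has probability tending to $0$. Combining, for every $\eps>0$, $\bP\bigl(|(H^n\cdot X)_T-(H\cdot X)_T|>\eps\bigr)\le\bP(\sigma_n<T)+\bP(R_n>\eps)\to 0$, which is the claim.

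I do not anticipate a genuine obstacle; the one point needing care is the lower bound at $\sigma_n$, which uses continuity of $(G^n\cdot X)$ to rule out an overshoot past $-K-\delta$ at a downward jump. (For a general semimartingale one would stop an instant earlier, or damp the integrand near $\sigma_n$, to absorb a possible jump; this issue does not arise in the continuous, Brownian setting in which the lemma is applied.) A secondary bookkeeping point is to invoke the version of the stochastic dominated convergence theorem that yields convergence uniform in time rather than merely at the terminal date, since that uniformity is what controls $\bP(\sigma_n<T)$.
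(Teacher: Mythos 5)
Your proposal is correct and follows essentially the same route as the paper's proof: truncate $H$ to a bounded integrand, invoke a dominated-convergence result for stochastic integrals to get uniform-in-time convergence in probability, then stop at the first breach of $-K-\delta$ and show the stopping time equals $T$ with probability tending to one. Your explicit remark about possible overshoot at a jump is the same (implicit) continuity caveat present in the paper's argument, which is applied only to continuous integrators.
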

\begin{proof}
  For $n\geq 1$, we define the integrands $H^n:=H\bI_{\{|H|\geq n\}}$, where 
  $\bI_A$ denotes the indicator function of a set $A\subset\Omega\times[0,T]$. 
  We define the stopping times
  $$
    \sigma_n:=\inf\left\{t\leq T: (H^n\cdot X)_t\leq -K-\delta\right\}.
  $$
  Then $(H^n\bI_{[0,\sigma_n]}\cdot X)_t\geq -K-\delta$ for all 
  $t\in[0,T]$.  Moreover, $\sup_t|((H^n\bI-H)\cdot X)_t|\longrightarrow 0$ 
  in probability as $n\rightarrow\infty$ by Lemma 4.11 and Remark (ii) 
  following Definition 4.8 both in \cite{CS02}.  This convergence implies that
  $\bP(\sigma_n=T)\longrightarrow 1$ and hence 
  $(H^n\cdot X)_{\sigma_n}\longrightarrow (H\cdot X)_T$ in probability 
  as $n\rightarrow\infty$.  
  Considering the sequence $\{H^n\bI_{[0,\sigma^n]}\}_{n\geq 1}$ yields 
  the result.
\end{proof}
\begin{lemma}\label{lemma:H_approx2}
  Let $\{X^n\}_{n\geq 1}$ be a sequence of semimartingales such that
  $X^n\longrightarrow X$ in the semimartingale topology as 
  $n\rightarrow\infty$.  Suppose that $H$ is progressively measurable and
  uniformly bounded in $t$ and $\omega$ and there exists a $K>0$ such 
  that $(H\cdot X)_t\geq -K$ for all $t\in[0,T]$.  Then for any $\delta>0$, 
  there exists a sequence $\{H^n\}_{n\geq 1}$ such that for all $n\geq 1$, 
  $H^n$ is uniformly bounded in $t$ and $\omega$, 
  for all $t\in[0,T]$,
  $$
    (H^n\cdot X^n)_t \geq -K -\delta,
  $$
  and $(H^n\cdot X^n)_T \longrightarrow (H\cdot X)_T$ in probability 
  as $n\rightarrow\infty$.
\end{lemma}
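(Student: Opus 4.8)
The plan is to combine the semimartingale convergence $X^n \to X$ with the boundedness of the fixed integrand $H$, and then truncate via stopping times exactly as in the proof of Lemma \ref{lemma:H_approx1}. First I would observe that, because $H$ is uniformly bounded in $t$ and $\omega$, say $|H| \leq c$, the definition of the semimartingale topology applied to the normalized integrand $H/c$ gives
$$
  \E\left[\left|\left((H/c)\cdot(X^n - X)\right)_T\right|\wedge 1\right] \longrightarrow 0
  \quad\text{as } n\to\infty,
$$
and hence $(H\cdot X^n)_T \to (H\cdot X)_T$ in probability. More is true: the standard fact (see \cite{CS02}, around Definition 4.8 and Lemma 4.11, the same reference used in Lemma \ref{lemma:H_approx1}) that convergence in the semimartingale topology upgrades to uniform-in-$t$ convergence in probability of the stochastic integrals against a bounded integrand gives
$$
  \sup_{t\in[0,T]}\left|(H\cdot X^n)_t - (H\cdot X)_t\right| \longrightarrow 0
  \quad\text{in probability as } n\to\infty.
$$

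Next I would set $H^n := H$ for every $n$ — there is no need to re-truncate $H$ since it is already bounded — and define the stopping times
$$
  \sigma_n := \inf\left\{t\leq T: (H\cdot X^n)_t \leq -K-\delta\right\},
$$
with the convention $\inf\emptyset = T$. Then $H^n\bI_{[0,\sigma_n]}$ is still progressively measurable and uniformly bounded by the same constant as $H$, and by definition of $\sigma_n$ together with the lower bound $(H\cdot X^n)_0 = 0 > -K-\delta$, one gets $(H^n\bI_{[0,\sigma_n]}\cdot X^n)_t \geq -K-\delta$ for all $t\in[0,T]$. Here one should note that although $X^n$ may have jumps in principle, in the Brownian filtration of Section \ref{section:model} the integrators are continuous, so $(H\cdot X^n)_{\sigma_n}\geq -K-\delta$ and the lower bound holds with equality-type control; in the general statement one absorbs any downward jump into the constant, or simply works in the continuous setting where the lemma is applied.

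To finish, I would argue that $\bP(\sigma_n = T) \to 1$: on the event $\{\sup_t |(H\cdot X^n)_t - (H\cdot X)_t| < \delta\}$, which has probability tending to $1$, we have $(H\cdot X^n)_t > (H\cdot X)_t - \delta \geq -K-\delta$ for all $t$, hence no crossing occurs and $\sigma_n = T$. On this same event $(H^n\bI_{[0,\sigma_n]}\cdot X^n)_T = (H\cdot X^n)_T$, which converges to $(H\cdot X)_T$ in probability by the first step; combining the two gives $(H^n\bI_{[0,\sigma_n]}\cdot X^n)_T \to (H\cdot X)_T$ in probability. Relabelling $\{H^n\bI_{[0,\sigma_n]}\}_{n\geq 1}$ as the desired sequence completes the argument. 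The only genuinely delicate point is the passage from semimartingale convergence to uniform-in-$t$ convergence in probability of the integrals against the bounded $H$; once that is invoked from \cite{CS02}, the stopping-time truncation is routine and mirrors Lemma \ref{lemma:H_approx1} verbatim.
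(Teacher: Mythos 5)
Your proposal is correct and follows essentially the same route as the paper's proof: pass from semimartingale convergence to (uniform-in-$t$) convergence in probability of $(H\cdot X^n)$ to $(H\cdot X)$, stop at the first time $(H\cdot X^n)$ reaches $-K-\delta$, and use the hypothesis $(H\cdot X)_t\geq -K$ to show the stopping time equals $T$ with probability tending to one. The only cosmetic difference is your explicit remark about jumps at the stopping time, which the paper leaves implicit since the integrators in its application are continuous.
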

\begin{proof}
  Since $H$ is uniformly bounded and progressively measurable, it is 
  $X$- and $X^n$-integrable for all $n\geq 1$.  Moreover, the definition 
  of semimartingale convergence implies that
  \begin{equation}\label{eqn:semimg_integral_convergence}
    (H\cdot X^n)\longrightarrow (H\cdot X) 
    \ \text{ in the semimartingale topology as $n\rightarrow\infty$}.
  \end{equation}
  For $n\geq 1$, we define the stopping times $\tau_n$ by
  $$
    \tau_n:=\inf\left\{t\leq T: (H\cdot X^n)_n< -K-\delta\right\}
  $$
  and let $H^n:= H\bI_{[0,\tau_n]}$.  By definition of $\tau_n$, we have 
  $(H^n\cdot X^n)_t\geq -K-\delta$ for all $t\in[0,T]$.  Using 
  \eqref{eqn:semimg_integral_convergence},
  \begin{align*}
    \bP(\tau_n<T)
      &= \bP\left(\exists\ t'< T: (H\cdot X^n)_{t'}< -K-\delta\right)\\
      &\leq \bP\left(\sup_{t\leq T}\left|\left(H\cdot(X^n-X)\right)_t\right|
        > \delta\right) + 
        \bP\left(\exists\ t'\leq T: (H\cdot X)_{t'}<-K\right) \\
      &= \bP\left(\sup_{t\leq T}\left|\left(H\cdot(X^n-X)\right)_t\right|
        > \delta\right) + 0 \\
      &\longrightarrow 0 \ \text{ as $n\rightarrow\infty$.}
  \end{align*}
  Thus, $(H^n\cdot X^n)_T = (H\cdot X^n)_{\tau_n}\longrightarrow (H\cdot X)_T$  
  in probability as $n\rightarrow\infty$.
\end{proof}

\begin{proof}[Proof of Lemma \ref{lemma:u_lsc}]
Let $\eps>0$ be given.  Since $u_c$ is concave, it is continuous on the interior of its domain, and hence we may choose $x'<x$ such that $u_c(x)<u_c(x')+\eps$.  We choose $(H\cdot S^0)_T\in\sC_c(x')$ such that $u_c(x')\leq\E\left[U\left(x'+(H\cdot S^0)_T +f\right)\right]+\eps$.

We define $\delta:=\frac{x-x'}{4}>0$.  Since $H$ is $(\rho=0)$-admissible and $x'+(H\cdot S^0)_T\geq -\phimin$, Lemma \ref{lemma:H_approx1} provides us with a sequence of integrands $\{H^n\}_{n\geq 1}$ such that for each $n\geq 1$, $H^n$ is bounded uniformly in $t$ and $\omega$ while $x'+\delta+(H^n\cdot S^0)_t\geq -\phimin$ for all $t\in[0,T]$.  In particular, for all $n\geq 1$, $(H^n\cdot S^0)_T\in\sC_c(x'+\delta)\subseteq\sC_c(x)$, and we have the uniform lower bound 
$$
  U\left(x+(H^n\cdot S^0)_T+f\right)
  \geq U\left(x-x'-\delta+f-\phimin\right)
  \geq U\left(\frac{3}{4}(x-x')\right) > -\infty.
$$
Fatou's Lemma implies that
\begin{align*}
  u_c(x) &\leq u_c(x')+\eps \\
  &\leq \E\left[U\left(x+(H\cdot S^0)_T +f\right)\right] + 2\eps \\
  &\leq \liminf_{n\rightarrow\infty} \E\left[U\left(x+(H^n\cdot S^0)_T +f\right)\right] + 2\eps,
\end{align*}
which allows us to choose a sufficiently large $n$ such that the integrand $\tilde{H}:= H^n$ is uniformly bounded in $t$ and $\omega$, $(\tilde{H}\cdot S^0)_T\in \sC_c(x'+\delta)$ and
\begin{equation}\label{eqn:uc_approx}
  u_c(x) 
  \leq \E\left[U\left(x+(\tilde{H}\cdot S^0)_T +f\right)\right] + 3\eps.
\end{equation}

Now that we have achieved sufficiently nice regularity of a nearly-optimal strategy, $\tilde{H}$, we proceed by varying the parameter $\rho$.  Let $\rho_k\longrightarrow 0$ be a sequence realizing the $\liminf$ in $\liminf_{\rho\rightarrow 0} u(x,\rho)$.  Since $S^{\rho_k}\longrightarrow S^0$ in the semimartingale topology as $k\rightarrow\infty$, Lemma \ref{lemma:H_approx2} allows us to choose $\{\tilde{H}^k\}_{k\geq 1}$ such that for each $k\geq 1$, $\tilde{H}^k$ is bounded uniformly in $t$ and $\omega$ while $x'+2\delta + (\tilde{H}^k\cdot S^{\rho_k})_t\geq -\phimin$.  Moreover, $(\tilde{H}^k\cdot S^{\rho_k})_T\longrightarrow (\tilde{H}\cdot S^0)_T$ in probability as $k\rightarrow\infty$.  For every $k\geq 1$, we have the uniform lower bound
$$
  U(x+(\tilde{H}^k\cdot S^{\rho_k})_T+f)\geq U(x-x'-2\delta+f-\phimin)\geq U(\frac{1}{2}(x-x'))>-\infty.
$$
  Therefore by Fatou's Lemma and \eqref{eqn:uc_approx} above,
\begin{align*}
  u_c(x) 
  &\leq \E\left[U\left(x+(\tilde{H}\cdot S^0)_T +f\right)\right] + 3\eps \\
  &\leq \liminf_{k\rightarrow\infty} \E\left[U\left(x'+(\tilde{H}^k\cdot 
    S^{\rho_k})_T +f\right)\right] + 3\eps \\
  &\leq \liminf_{k\rightarrow\infty} \E\left[U\left(x+(\tilde{H}^k\cdot 
    S^{\rho_k})_T +f\right)\right] + 3\eps \\
  &\leq \liminf_{k\rightarrow\infty} u(x,\rho_k) + 3\eps \\
  &= \liminf_{\rho\rightarrow 0} u(x,\rho) + 3\eps.
\end{align*}
Since $\eps>0$ is arbitrary, the desired result holds.
\end{proof}

\begin{proof}[Proof of Theorem \ref{thm:primal_convergence}]
  Fix $\rho\neq 0$.  For $x>-\phimin$, $X\in\sC(\rho)$ such that
  $x+X\geq -\phimin$, $y>0$, and $\bQ\in\sD(\rho)$, we have
  \begin{align*}
    \E\left[U(x+X+f)\right]
      &\leq \E\left[V_c\left(y\frac{d\bQ}{d\bP},f\right)
      + y\frac{d\bQ}{d\bP}\left(x+X\right)\right] \\
      &\leq \E\left[V_c\left(y\frac{d\bQ}{d\bP},f\right)\right]+xy.
  \end{align*}
  This strengthening of Fenchel's inequality relies on the bound
  $x+X\geq-\phimin$ in order to replace $V$ with $V_c(\cdot,f)$.
  Next, we take the supremum over all $X\in\sC(\rho)$ with $x+X\geq-\phimin$
  and the infimum over all $\bQ\in\sD(\rho)$, which yields
  that for any $x>-\phimin$ and $y>0$,
  $$
    u(x,\rho)\leq v_c(y,\rho)+xy.
  $$
  This inequality along with Lemmas \ref{lemma:v_usc} and
  \ref{lemma:u_lsc} shows that for any $x>-\phimin$ and $y>0$,
  \begin{equation}\label{eqn:inequality}
    u_c(x)\leq\liminf_{\rho\rightarrow 0} u(x,\rho)
    \leq \limsup_{\rho\rightarrow 0} v_c(y,\rho) + xy
    \leq \E[V_c(yZ^0_T,f)] + xy\,.
  \end{equation}
  Next, we show that $u_c(\cdot)$ and $v_c(\cdot,0)$ are conjugates.  
  We let $y>0$ be given and define the candidate optimal
  terminal wealth $\Xhat$ by
  $$
    \Xhat :=
    \begin{cases}
      -V'(yZ^0_T)-f\,, & \text{ if } yZ^0_T\leq U'(f-\phimin),\\
      -\phimin\,,    & \text{ otherwise}.
    \end{cases}
  $$
  For $\frac{d\bQ^0}{d\bP}:=Z^0_T=\sE(-\mu\sqrt{V}
  \cdot B)_T$, 
  we have that $\hat{X}\in L^2(\bQ^0)$.
  By martingale representation and the strict positivity of $\sqrt{V}$ by the Feller condition,
  we may write $\Xhat = \E^{\bQ^0}[\Xhat]+ (H\cdot S^0)_T$ for some 
  integrable $H$.  
  Since $\Xhat\geq -\phimin$ and $(H\cdot S^0)$ is a $\bQ^0$-martingale, 
  we know that $(H\cdot S^0)_t\geq -\phimin-\E^{\bQ^0}[\Xhat]>-\infty$ 
  for all $t\in[0,T]$.   Thus, $H$ is $S^0$-admissible.
  
  We define $\xhat:= \E^{\bQ^0}[\Xhat]>-\phimin$ 
  so that $\Xhat-\xhat\in\sC_c(\xhat)$.
  For any $y>0$,
  \begin{align*}
    \E\left[V_c\left(yZ^0_T,f\right)\right] 
      &= \E\left[ U\left(\Xhat+f\right)-yZ^0_T\Xhat\right] \\
      &= \E\left[U\left(\Xhat+f\right)\right] - y\xhat \\
      &\leq \sup_{x>-\phimin}\left\{\sup_{X\in\sC_c(x)}
        \E\left[U(x+X+f)\right]-yx\right\} \\
      &= \sup_{x>-\phimin}\left\{u_c(x)-xy\right\}\,.
  \end{align*}
  Since the other direction of the inequality holds by \eqref{eqn:inequality}, 
  we obtain that for any $y>0$,
  $\E[V_c(yZ^0_T,f)] = \sup_{x>-\phimin}\left\{u_c(x)-xy\right\}$.
  Since $u_c(\cdot)$ is concave and upper 
  semicontinuous on $(-\phimin,\infty)$, we have 
  $u_c(x) = \inf_{y>0}\left\{\E[V_c(yZ^0_T,f)]+xy\right\}$ for $x>-\phimin$.
  Strict convexity of
  $y\mapsto \E[V_c(yZ^0_T,f)]$ implies the differentiability of
  $u_c(\cdot)$.  (See, e.g., Proposition 6.2.1 on page 40 of
  \cite{HUL96}.)
  Now for any $x>-\phimin$, choosing 
  $y=\frac{d}{dx}u_c(x)$ yields equality in
  \eqref{eqn:inequality}.
\end{proof}

Finally, we show that indifference prices do not converge as 
$\rho\rightarrow 0$.
\begin{proof}[Proof of Corollary \ref{cor:indifference_prices}]
  Let $x>-\phimin$ be given.  For any $\rho\in(-1,1)$,
  $w(x,\rho)=w(x,0)$.
  Suppose that for $\rho_n\longrightarrow 0$,
  we have $p(x,\rho_n)\longrightarrow \bar{p}$ as $n\rightarrow\infty$.
  By being the limit of arbitrage-free prices in the 
  $\{\rho_n\}_n$ models, we must have 
  $\bar{p}\in [\inf\phi,\sup\phi]$.
  
  Using that $\phi$ is not constant, 
  for $x>-\phimin$, we first note that $u_c(x)<u(x,0)$.  This result
  can be obtained, for example, by Theorem 2.2 of \cite{KS99AAP}
  and $f$'s replicability in the $S^0$ market, which imply that
  $u(x,0)=\E\left[U(I(\frac{\partial}{\partial x}u(x,0)Z^0_T))\right]$ where 
  $\bP\left(I(\frac{\partial}{\partial x}u(x,0)Z^0_T)
  <f-\phimin\right)>0$.
  By Theorem \ref{thm:primal_convergence},
  $$
    \lim_n u(x,\rho_n) = u_c(x) < u(x,0) = w(x+p(x,0),0).
  $$
  Taking $f=0$ in Theorem \ref{thm:primal_convergence} and using 
  the concavity of $w(\cdot,\rho)$ for every $\rho\in(-1,1)$, we 
  have that $w(\cdot,\rho)\longrightarrow w(\cdot,0)$ uniformly on compacts 
  in $(-\phimin,\infty)$ as $\rho\rightarrow 0$.  Thus,
  $$
    \lim_n w(x+p(x,\rho_n),\rho_n) 
    = w(x+\bar{p},0),
  $$
  which implies that $w(x+\bar{p},0)<w(x+p(x,0),0)$.  Since $w(\cdot,0)$
  is strictly increasing, we conclude that $\bar{p}<p(x,0)$.
\end{proof}

\subsection{Proof of the Main Result}
The proof of the main result, Theorem \ref{thm:main_result},  follows Lemmas \ref{lemma:primal_lsc} and \ref{lemma:dual_usc} (below), which establish lower and upper semicontinuity-type results for the sequence of primal and dual value functions, respectively.

\begin{lemma}\label{lemma:primal_lsc}
  Suppose that the sequence of markets satisfies Assumption \ref{ass:semimg},
  and $\sM^\infty_V\neq\emptyset$.
  Then for $x\in\R$ and $x_n\longrightarrow x$ as $n\rightarrow\infty$,
  $$
    u_\infty(x) \leq \liminf_{n\rightarrow\infty} u_n(x_n)\,.
  $$
\end{lemma}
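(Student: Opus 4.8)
\textbf{Proof plan for Lemma \ref{lemma:primal_lsc}.}

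The plan is to establish the lower-semicontinuity bound by approximating a nearly-optimal strategy in the limiting market and transporting it to the pre-limiting markets, exactly paralleling the argument of Lemma \ref{lemma:u_lsc} but now in the real-line setting. First I would fix $\eps>0$ and, using that $u_\infty$ is concave and finite on $\R$ (hence continuous), choose a feasible $H\in\adm^\infty$ with associated admissibility constant $K$ such that $\E\left[U\left(x+(H\cdot S^\infty)_T+f\right)\right]\geq u_\infty(x)-\eps$. Then I would apply Lemma \ref{lemma:H_approx1} to $X=S^\infty$ and this $H$ (with a small slack $\delta>0$) to obtain strategies $H^m$ that are uniformly bounded in $t$ and $\omega$, satisfy $(H^m\cdot S^\infty)_t\geq -K-\delta$, and have $(H^m\cdot S^\infty)_T\to(H\cdot S^\infty)_T$ in probability. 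Next I would apply Lemma \ref{lemma:H_approx2}, using Assumption \ref{ass:semimg} (which gives $S^n\to S^\infty$ in the semimartingale topology, since $S^n = \lambda^n\cdot\langle M^n\rangle + M^n$ and both $M^n\to M^\infty$ and $(\lambda^n\cdot M^n)\to(\lambda^\infty\cdot M^\infty)$ converge there), to a fixed bounded $H^m$, obtaining $H^{m,n}\in\adm^n$ bounded uniformly in $t,\omega$ with $(H^{m,n}\cdot S^n)_t\geq -K-2\delta$ and $(H^{m,n}\cdot S^n)_T\to(H^m\cdot S^\infty)_T$ in probability as $n\to\infty$.

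The key analytic step is then to pass the limit through the expectation. Since $x_n\to x$, the terminal wealths $x_n+(H^{m,n}\cdot S^n)_T+f$ converge in probability (along $n$, for fixed $m$) to $x+(H^m\cdot S^\infty)_T+f$, and by continuity of $U$ the utilities converge in probability as well. To invoke Fatou I need a uniform lower bound: because $(H^{m,n}\cdot S^n)_t\geq -K-2\delta$ for all $n$, and $f\in L^\infty(\bP)$ and $x_n$ is bounded, the arguments of $U$ are bounded below by a deterministic constant, so $U$ applied to them is bounded below by a constant as well. Fatou's lemma then gives
$$
  \limsup_{n\to\infty}\E\left[U\left(x_n+(H^{m,n}\cdot S^n)_T+f\right)\right]
  \geq \E\left[U\left(x+(H^m\cdot S^\infty)_T+f\right)\right],
$$
wait --- the inequality I actually want runs the other way: I want a lower bound on $\liminf u_n(x_n)$, and $U$ bounded below makes Fatou's lemma apply to $\liminf$, giving $\liminf_n \E[U(\cdots)]\geq \E[U(\cdots)]$ only with reverse Fatou; to be safe I would instead truncate $U$ from above at a level large enough not to affect the limiting expectation (dominated convergence on the truncation), or simply note that $U$ bounded below plus convergence in probability gives, via Fatou applied to $U(\cdots)-c\geq 0$, the bound $\liminf_n\E[U(\cdots)]\geq\E[U(\cdots)]$. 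Since $(H^{m,n}\cdot S^n)_T\in\adm^n$ is feasible, $u_n(x_n)\geq\E[U(x_n+(H^{m,n}\cdot S^n)_T+f)]$, so $\liminf_n u_n(x_n)\geq\E[U(x+(H^m\cdot S^\infty)_T+f)]$.

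Finally I would let $m\to\infty$: again using the uniform lower bound $(H^m\cdot S^\infty)_t\geq -K-\delta$ (uniform in $m$) and convergence in probability $(H^m\cdot S^\infty)_T\to(H\cdot S^\infty)_T$, Fatou's lemma yields $\liminf_m\E[U(x+(H^m\cdot S^\infty)_T+f)]\geq\E[U(x+(H\cdot S^\infty)_T+f)]\geq u_\infty(x)-\eps$. Chaining the two Fatou steps gives $\liminf_n u_n(x_n)\geq u_\infty(x)-\eps$, and since $\eps>0$ was arbitrary the claim follows. The hypothesis $\sM^\infty_V\neq\emptyset$ enters only to guarantee that $u_\infty$ is a genuine (not identically $+\infty$) finite concave function via the duality of \cite{OZ09MF}, so that choosing a nearly-optimal admissible $H$ makes sense and $u_\infty$ is continuous at $x$. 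I expect the main obstacle to be purely bookkeeping: keeping the two nested approximations (first $H^m\to H$ within $S^\infty$, then $H^{m,n}\to H^m$ across markets) and their admissibility slacks organized so that a single diagonal-type limit delivers the bound, together with the minor care needed to apply Fatou in the correct direction given only a one-sided bound on $U$.
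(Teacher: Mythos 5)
Your proposal is correct and follows essentially the same route as the paper: reduce to uniformly bounded integrands in the limiting market via Lemma \ref{lemma:H_approx1} and Fatou, then transport them to the pre-limiting markets via Lemma \ref{lemma:H_approx2} (using $S^n\to S^\infty$ in the semimartingale topology from Assumption \ref{ass:semimg}) and apply Fatou again with the uniform admissibility lower bound and $f\in L^\infty(\bP)$. The only cosmetic difference is that you work with an $\eps$-nearly-optimal strategy and let $\eps\to 0$, whereas the paper takes the supremum over bounded strategies at the end; your handling of Fatou's direction via the one-sided bound is exactly what the paper does implicitly.
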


Significant difficulty in proving Lemma \ref{lemma:primal_lsc} stems from the nonequivalence of markets (the martingale drivers, $M^n$, differ). 
The idea behind the proof of Lemma \ref{lemma:primal_lsc} is that since the pre-limiting markets are ``close'' to the $S^\infty$ market, strategies in the $S^\infty$ market are ``close'' to being strategies in the pre-limiting markets.  This idea will be made precise by appropriate approximation and stopping.  First, we need a helper lemma.

\begin{lemma}\label{lemma:S_semimg_convergence}
Under Assumption \ref{ass:semimg}, $S^n\longrightarrow S^\infty$ in the semimartingale topology as $n\rightarrow\infty$.
\end{lemma}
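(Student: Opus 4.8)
The plan is to show that $S^n\longrightarrow S^\infty$ in the semimartingale topology by decomposing $S^n-S^\infty$ into its finite variation part and its local martingale part and handling each separately using Assumption \ref{ass:semimg}. Recall that $dS^n_t = \lambda^n_t d\langle M^n\rangle_t + dM^n_t$, so that $S^n - S^\infty = (\lambda^n\cdot M^n) - (\lambda^\infty\cdot M^\infty)\big\rangle\ \text{(drift difference)} + (M^n - M^\infty)$. Wait --- more precisely, $S^n_t = (\lambda^n\cdot M^n)_t + M^n_t$ in the sense that the drift of $S^n$ is the finite-variation process $A^n_t := \int_0^t \lambda^n_u\,d\langle M^n\rangle_u = \langle \lambda^n\cdot M^n, \text{(something)}\rangle$; in any case the key point is that $S^n = (\lambda^n\cdot M^n)\ \text{integrated against}\ \langle M^n\rangle + M^n$ is literally $S^n_t = \int_0^t\lambda^n_u\,d\langle M^n\rangle_u + M^n_t$, and Assumption \ref{ass:semimg} gives us convergence of both $M^n$ and $(\lambda^n\cdot M^n)$ in the semimartingale topology.

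First I would reduce the problem to showing that the drift processes $A^n_t := \int_0^t\lambda^n_u\,d\langle M^n\rangle_u$ converge to $A^\infty_t := \int_0^t\lambda^\infty_u\,d\langle M^\infty\rangle_u$ in the semimartingale topology, since the martingale parts converge by hypothesis and the semimartingale topology is a vector topology (so sums of convergent sequences converge). For the drift part, I would use the identity $A^n = \langle \lambda^n\cdot M^n,\, M^n\rangle$ together with the fact that, for continuous local martingales $N^n\to N^\infty$ and $M^n\to M^\infty$ in the semimartingale topology, one has $\langle N^n, M^n\rangle \to \langle N^\infty, M^\infty\rangle$ in the semimartingale topology (this is a standard continuity property of the quadratic covariation in the Brownian/continuous setting; see, e.g., Mémin or Émery). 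Applying this with $N^n = (\lambda^n\cdot M^n)$ gives $A^n\to A^\infty$. Since a finite-variation process converging in the semimartingale topology, together with the convergence $M^n\to M^\infty$, yields $S^n = A^n + M^n \to A^\infty + M^\infty = S^\infty$, the claim follows.

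Alternatively, and perhaps more cleanly, I would work directly from the definition: for progressively measurable $|\theta|\le 1$, write $(\theta\cdot(S^n - S^\infty))_T = (\theta\cdot(A^n - A^\infty))_T + (\theta\cdot(M^n - M^\infty))_T$, bound the second term using Assumption \ref{ass:semimg} directly, and control the first term by noting that $(\theta\cdot A^n)_T = (\theta\cdot\langle\lambda^n\cdot M^n, M^n\rangle)_T$ and using the Kunita–Watanabe inequality plus the semimartingale convergence of both $(\lambda^n\cdot M^n)$ and $M^n$ (which controls $\langle\lambda^n\cdot M^n\rangle_T$ and $\langle M^n\rangle_T$ in probability). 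The main obstacle is the drift term: one must be careful that convergence of $(\lambda^n\cdot M^n)$ and of $M^n$ separately in the semimartingale topology genuinely forces convergence of the mixed bracket $\langle\lambda^n\cdot M^n, M^n\rangle$, which requires a uniform-in-$n$ control (e.g. tightness of $\langle M^n\rangle_T$ and of $\langle\lambda^n\cdot M^n\rangle_T$) rather than just pointwise convergence in probability; invoking the standard continuity of quadratic covariation with respect to the semimartingale topology for continuous processes is the efficient way to sidestep this, and in the Brownian filtration all the processes involved are continuous, so that machinery applies.
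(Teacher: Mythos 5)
Your proof is correct, but it takes a genuinely different route from the paper's. You identify the drift of $S^n$ as the mixed bracket $\langle \lambda^n\cdot M^n, M^n\rangle$ and reduce everything to (joint) continuity of predictable covariation with respect to the semimartingale topology; the paper instead reduces to showing that the total variation of the drift difference, $A_n=\sum_{i=1}^d\int_0^T|\lambda^n(\sigma^{n,i})^2-\lambda^\infty(\sigma^{\infty,i})^2|\,dt$, tends to $0$ in probability, and proves this by hand: continuity of $X\mapsto\langle X\rangle$ gives $\int_0^T|\sigma^n-\sigma^\infty|^2dt\to0$ and $\int_0^T|\lambda^n\sigma^n-\lambda^\infty\sigma^\infty|^2dt\to0$ in probability, and the troublesome product $\lambda^n|\sigma^n|^2$ is then handled by a double-subsequence argument with a change of measure $d\bQ\propto e^{-q}d\bP$ and uniform integrability in $L^1(\bQ\times\Leb)$. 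Your route short-circuits that machinery: writing $\langle\lambda^n\cdot M^n,M^n\rangle-\langle\lambda^\infty\cdot M^\infty,M^\infty\rangle=\langle(\lambda^n\cdot M^n)-(\lambda^\infty\cdot M^\infty),M^n\rangle+\langle\lambda^\infty\cdot M^\infty,M^n-M^\infty\rangle$, the Kunita--Watanabe inequality bounds the total variation of the two terms by $\bigl(\langle(\lambda^n\cdot M^n)-(\lambda^\infty\cdot M^\infty)\rangle_T\bigr)^{1/2}\bigl(\langle M^n\rangle_T\bigr)^{1/2}$ and $\bigl(\langle\lambda^\infty\cdot M^\infty\rangle_T\bigr)^{1/2}\bigl(\langle M^n-M^\infty\rangle_T\bigr)^{1/2}$, and these vanish in probability using exactly the bracket-continuity fact the paper itself invokes plus tightness of $\langle M^n\rangle_T$ (which is automatic since $\langle M^n\rangle_T\to\langle M^\infty\rangle_T$ in probability); since for finite-variation processes total-variation convergence in probability gives semimartingale convergence, and the topology is additive, the lemma follows. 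So your argument is, if anything, shorter and avoids the measure-change and uniform-integrability step, at the cost of leaning on the continuity of the (joint) bracket under the Émery topology as a citable result — which is legitimate (polarization reduces it to continuity of $X\mapsto[X,X]$), and you correctly flag and resolve the one genuine issue, namely the uniform-in-$n$ control, via tightness of the brackets. The only cosmetic criticism is the muddled opening paragraph; the identity $A^n_t=\int_0^t\lambda^n_u\,d\langle M^n\rangle_u=\langle\lambda^n\cdot M^n,M^n\rangle_t$ should simply be stated once, cleanly.
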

\begin{proof}
Since $M^n\longrightarrow M^\infty$ in the semimartingale topology as $n\rightarrow\infty$, it remains to show that $\left(\lambda^n\cdot\left<M^n\right>\right)\longrightarrow\left(\lambda^\infty\cdot\left<M^\infty\right>\right)$ in the semimartingale topology.  We seek to show
\begin{equation}\label{eqn:variation_converging}
  A_n:=\sum_{i=1}^d \int_0^T \left|\lambda^n(\sigma^{n,i})^2-\lambda^\infty(\sigma^{\infty,i})^2\right| dt \longrightarrow 0 \ \text{ in probability as } n\rightarrow\infty,
\end{equation}
which will then imply the desired result.

The mapping $X\mapsto \left<X\right>$ is continuous in the space of semimartingales with respect to semimartingale convergence, and so Assumption \ref{ass:semimg} implies:
  \begin{align}
    \label{enum:Ms} \sum_{i=1}^d\int_0^T 
      \left(\sigma^{n,i}-\sigma^{\infty,i}\right)^2dt &\longrightarrow 0
       \ \text{ in probability as  } n\rightarrow\infty,\\
    \label{enum:lambdaMs} \sum_{i=1}^d\int_0^T
      \left(\lambda^n\sigma^{n,i}-\lambda^\infty\sigma^{\infty,i}\right)^2dt 
      &\longrightarrow 0 \ \text{ in probability as } n\rightarrow\infty.
  \end{align}
  
  Let $\{A_n\}_{n\in N}$ be a subsequence of $\{A_n\}_{n\in\bN}$, 
  where for notational convenience
  we denote the subsequence index $N$ as an infinite subset of $\bN$.  
  We choose a
  further subsequence $\{A_n\}_{n\in N'}$, where $N'\subset N$, such that 
  the convergence in \eqref{enum:Ms} and \eqref{enum:lambdaMs} occurs
  $\bP$-a.s. as $n\rightarrow\infty$ for $n\in N'$.  
    
  We define the random variable 
  $$
    q:=\sup_{n\in N'} \sum_{i=1}^d\int_0^T \left(\left(\sigma^{n,i}\right)^2 + \left(\lambda^n\sigma^{n,i}\right)^2\right)dt.
  $$
  The almost-sure convergence along the subsequence $N'$ implies that
  $q<\infty$, $\bP$-a.s., which allows us to define the equivalent probability
  measure $\frac{d\bQ}{d\bP}:= \frac{e^{-q}}{\E^\bP\left[e^{-q}\right]}$.
  Under $\bQ$, we have more regularity of elements in $N'$; in particular,
  \begin{equation}\label{eqn:convergence_in_measure}
    \sum_{i=1}^d \left(\sigma^{n,i}-\sigma^{\infty,i}\right)^2
    + \left(\lambda^n\sigma^{n,i}-\lambda^\infty\sigma^{\infty,i}\right)^2
    \longrightarrow 0 \ \text{ in $L^1(\bQ\times\Leb)$ 
    as $N'\ni n\rightarrow\infty$,}
  \end{equation}
  where $\Leb$ denotes the Lebesgue measure on $[0,T]$.  
  Hence, $\{\sum_{i=1}^d (1+(\lambda^n)^2)(\sigma^{n,i})^2\}_{n\in N'}$ 
  is $(\bQ\times\Leb)$-uniformly integrable.  Since $\sum_{i=1}^d |\lambda^n(\sigma^{n,i})^2-\lambda^\infty(\sigma^{\infty,i})^2|\leq \sum_{i=1}^d [(1+(\lambda^n)^2)(\sigma^{n,i})^2+(1+(\lambda^\infty)^2)(\sigma^{\infty,i})^2]$ for all $n\geq 1$ and by \eqref{eqn:convergence_in_measure}, $\sum_{i=1}^d |\lambda^n(\sigma^{n,i})^2-\lambda^\infty(\sigma^{\infty,i})^2|\longrightarrow 0$ in $(\bQ\times\Leb)$-measure as $n\rightarrow\infty$, we have that
  \begin{equation*}\label{eqn:L1Q_convergence}
    \sum_{i=1}^d \left|\lambda^n(\sigma^{n,i})^2-
    \lambda^\infty(\sigma^{\infty,i})^2\right|
    \longrightarrow 0 \ \text{ in $L^1(\bQ\times\Leb)$ as } 
    N'\ni n\rightarrow\infty.
  \end{equation*}
  
  Now we choose a further subsequence $\{A_n\}_{n\in N''}$, where 
  $N''\subseteq N'$, such that
  \begin{equation*}
    \sum_{i=1}^d \int_0^T \left|\lambda^n(\sigma^{n,i})^2 - 
    \lambda^\infty(\sigma^{\infty,i})^2\right|
    \longrightarrow 0 \ \text{ $\bQ$-a.s. as $N''\ni n\rightarrow\infty$},
  \end{equation*}
  and note that this convergence also holds $\bP$-a.s. by the equivalence
  of $\bP$ and $\bQ$.  Thus, we have shown 
  that for all subsequences $\{A_n\}_{n\in N}$, $N\subseteq \bN$, 
  there exists a further
  subsequence $\{A_n\}_{n\in N''}$, $N''\subseteq N$, 
  such that $\sum_{i=1}^d \int_0^T 
  |\lambda^n(\sigma^{n,i})^2 - \lambda^\infty(\sigma^{\infty,i})^2|
  \longrightarrow 0$ $\bP$-a.s. for $n\in N''$ as $n\rightarrow\infty$.
  Therefore, \eqref{eqn:variation_converging} holds, which completes the
  proof.
  
\end{proof}

\begin{proof}[Proof of Lemma \ref{lemma:primal_lsc}]
  First, we show that the supremum in the limiting primal optimization problem,
  \eqref{def:primal_value_fn}, can be taken over all admissible wealth
  processes whose integrands are bounded.  
  Let $H\in\adm^\infty$ be given, and let
  $K\in (0,\infty)$ be such that $(H\cdot S^\infty)_t\geq -K$ for all 
  $t\in[0,T]$.  Lemma \ref{lemma:H_approx1} provides us with 
  a sequence of integrands $\{H^n\}_{n\geq 1}$ such that for each $n\geq 1$, 
  $H^n$ is bounded uniformly in $t$ and $\omega$ while 
  $(H^n\cdot S^\infty)_t\geq -2K$ for all $t\in[0,T]$ and 
  $(H^n\cdot S^\infty)_T\longrightarrow (H\cdot S^\infty)_T$ 
  in probability as $n\rightarrow\infty$.  In particular, 
  $(H^n\cdot S^\infty)\in\adm^\infty$ with 
  $\{(H^n\cdot S^\infty)\}_{n\geq 1}$ sharing the same lower 
  admissibility bound, $-2K$.  By Fatou's Lemma,
  $$
    \E\left[U\left(x+(H\cdot S^\infty)_T+f\right)\right]
    \leq\liminf_n\rightarrow\infty \E\left[U\left(x+(H^n\cdot S^\infty)_T
    + f\right)\right].
  $$
  Therefore, it suffices to take the supremum in \eqref{def:primal_value_fn}
  over all $\tilde{H}\in\adm^\infty$ such that $\tilde{H}$ is uniformly bounded in 
  $t$ and $\omega$.
  That is,
  \begin{equation}\label{eqn:primal_sup}
    u_\infty(x) = \sup_{\tilde{H}\in\adm^\infty, 
    \tilde{H}\text{ bdd}} 
    \E\left[U\left(x+(\tilde{H}\cdot S^\infty)_T+f\right)\right].
  \end{equation}
  
  Now let $\tilde{H}\in\adm^\infty$ be given such that $\tilde{H}$ is uniformly bounded in $t$ 
  and $\omega$ by a constant $K\in(0,\infty)$.  
  Even though $\tilde{H}$ is $S^\infty$-admissible and 
  $S^n$-integrable for every $n$,
  it is not necessarily admissible 
  for each $S^n$ market.  Using Lemma \ref{lemma:H_approx2}, we 
  mitigate this issue by choosing $\{\tilde{H}^n\}_{n\geq 1}$ such that for 
  each $n\geq 1$, $\tilde{H}^n$ is bounded uniformly in $t$ and $\omega$ while 
  $(\tilde{H}^n\cdot S^n)_t\geq-3K$ for all $t\in[0,T]$ and $(\tilde{H}^n\cdot 
  S^n)_T\longrightarrow (\tilde{H}\cdot S^\infty)_T$ in probability as $n
  \rightarrow\infty$.
  
Applying Fatou's Lemma gives us that
  \begin{align*}
    \E\left[U(x+(\tilde{H}\cdot S^\infty)_T+f)\right]
    &\leq \liminf_{n\rightarrow\infty} 
      \E\left[U(x_n+(\tilde{H}^n\cdot S^n)_T+ f)\right] \\
    &\leq \liminf_{n\rightarrow\infty} u_n(x_n).
  \end{align*}
  Taking the supremum over all uniformly bounded 
  $\tilde{H}\in\adm^\infty$, as in
  \eqref{eqn:primal_sup}, yields the result.
\end{proof}

We next proceed to the second main lemma, which establishes an upper-semicontinuity result for the dual problem.
\begin{lemma}\label{lemma:dual_usc}
  Let the assumptions of the model be as in Theorem \ref{thm:main_result}.
  Then for $\{y_n\}_{1\leq n<\infty}\subseteq (0,\infty)$ such that 
  $y_n\longrightarrow y>0$ as $n\rightarrow\infty$,
  $$
    v_\infty(y) \geq \limsup_{n\rightarrow\infty} v_n(y_n)\,.
  $$
\end{lemma}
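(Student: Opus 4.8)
The plan is to exhibit, for each sufficiently large $n$, a dual feasible element $\bQ^n \in \sM^n_V$ whose $V$-entropy-plus-endowment cost converges to the cost of a nearly optimal element in the limiting market. By Assumption \ref{ass:bdd_inf}, a near-optimizer for $v_\infty(y)$ can be taken of the form $\frac{d\bQ}{d\bP} = Z^\infty_T \sE(L)_T$ with $L \in \sB$; in particular $L_0 = 0$, $\langle L, M^\infty\rangle \equiv 0$, and $\sE(L)_t \leq C$ for a constant $C = C(L)$. So fix $\eps > 0$ and choose such an $L \in \sB$ with $\E[V(yZ^\infty_T\sE(L)_T) + yZ^\infty_T\sE(L)_T f] \leq v_\infty(y) + \eps$. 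The goal is then to produce, for each large $n$, a local martingale $L^n$ null at $0$ with $\langle L^n, M^n\rangle \equiv 0$, such that $\frac{d\bQ^n}{d\bP} := Z^n_T \sE(L^n)_T$ lies in $\sM^n_V$ and $\E[V(y_n Z^n_T \sE(L^n)_T) + y_n Z^n_T \sE(L^n)_T f]$ converges to $\E[V(y Z^\infty_T \sE(L)_T) + y Z^\infty_T \sE(L)_T f]$ as $n \to \infty$.

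The construction of $L^n$ from $L$ is where the nondegeneracy Assumption \ref{ass:nondegeneracy} enters. Since $L$ lives in the Brownian filtration, write $L = \zeta \cdot B$ for some progressively measurable $\R^d$-valued $\zeta$, and the orthogonality $\langle L, M^\infty\rangle \equiv 0$ forces $\sum_i \zeta^i \sigma^{\infty,i} = 0$ $d\Leb \times d\bP$-a.e. The idea is to project $\zeta$ onto the orthogonal complement of $\sigma^n$ pointwise: set $\zeta^n := \zeta - \frac{\langle \zeta, \sigma^n\rangle}{|\sigma^n|^2}\sigma^n$ on $\{|\sigma^n| \neq 0\}$ (and $\zeta^n := \zeta$ otherwise), and let $L^n := \zeta^n \cdot B$. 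Then $\langle L^n, M^n\rangle \equiv 0$ by construction. Using Lemma \ref{lemma:S_semimg_convergence}-type convergence of $\sigma^n \to \sigma^\infty$ in $\sL^2$ (which follows from Assumption \ref{ass:semimg} as in \eqref{enum:Ms}) together with Assumption \ref{ass:nondegeneracy} guaranteeing $|\sigma^\infty| \neq 0$, the correction term $\frac{\langle \zeta, \sigma^n\rangle}{|\sigma^n|^2}\sigma^n$ tends to $\frac{\langle \zeta, \sigma^\infty\rangle}{|\sigma^\infty|^2}\sigma^\infty = 0$; hence $\zeta^n \to \zeta$ and $L^n_T \to L_T$ (indeed $\sup_t|L^n_t - L_t| \to 0$) in probability. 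A stopping argument will be needed to preserve the uniform boundedness $\sE(L^n)_t \leq C'$: stop $L^n$ at $\tau_n := \inf\{t : \sE(L^n)_t \geq C+1\}$; semimartingale convergence of the stochastic exponentials forces $\bP(\tau_n = T) \to 1$, so the stopped versions still satisfy $\sE(L^n)_T \to \sE(L)_T$ in probability while remaining bounded by $C+1$.

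It then remains to pass the limit through the expectation. Combined with $Z^n_T \to Z^\infty_T$ in $L^1(\bP)$ (established in the remark following Assumption \ref{ass:Z_mg} via Scheffé) and $y_n \to y$, one gets $y_n Z^n_T \sE(L^n)_T \to y Z^\infty_T \sE(L)_T$ in probability. Since $\sE(L^n)_T$ is bounded by $C+1$, the endowment term $y_n Z^n_T \sE(L^n)_T f$ is dominated by $(C+1)\|f\|_{L^\infty} \cdot y_n Z^n_T$, which is uniformly integrable (being $L^1(\bP)$-convergent up to the bounded factor $y_n$), so that term converges by generalized dominated convergence. For the $V$-term, boundedness of $\sE(L^n)_T$ gives $V(y_n Z^n_T \sE(L^n)_T) \leq \sup_n (C+1)^{?}\cdots$ — more carefully, since $V$ is increasing and convex, $V(y_n Z^n_T \sE(L^n)_T) \leq V((C+1)\bar y\, Z^n_T)$ where $\bar y := \sup_n y_n < \infty$, and convexity of $V$ with $V \geq 0$ gives a bound in terms of $V(Z^n_T)$ and $V(0^+)$; Assumption \ref{ass:V_ui} says $\{V(Z^n_T)\}$ is uniformly integrable, which should transfer to uniform integrability of the dominating family, whence convergence of $\E[V(y_n Z^n_T \sE(L^n)_T)]$. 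Putting these together, $v_n(y_n) \leq \E[V(y_n Z^n_T \sE(L^n)_T) + y_n Z^n_T \sE(L^n)_T f] \to \E[V(y Z^\infty_T \sE(L)_T) + y Z^\infty_T \sE(L)_T f] \leq v_\infty(y) + \eps$, and letting $\eps \downarrow 0$ gives $\limsup_n v_n(y_n) \leq v_\infty(y)$.

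The main obstacle I anticipate is controlling the growth of $V$ at infinity when passing to the limit: because $V$ grows strictly faster than linearly (the essential difficulty flagged repeatedly in the paper), uniform integrability of the $V$-terms does not come for free from $L^1$-convergence of the densities. This is precisely why the boundedness afforded by $L \in \sB$ (Assumption \ref{ass:bdd_inf}) is indispensable — it converts the problematic $V(y_n Z^n_T \sE(L^n)_T)$ into something dominated by $V(\text{const} \cdot Z^n_T)$, which Assumption \ref{ass:V_ui} controls via a convexity/reasonable-asymptotic-elasticity estimate showing $V(cz) \leq a(c) V(z) + b(c)$ for constants depending on $c$. Verifying that this domination genuinely yields a uniformly integrable family (rather than merely a finite limit) is the technical heart of the argument.
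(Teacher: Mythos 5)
Your overall scheme — project $L$ onto the orthogonal complement of $M^n$ to produce a candidate dual element for the $S^n$ market, stop to preserve boundedness, then pass to the limit using $Z^n_T\to Z^\infty_T$ in $L^1(\bP)$ and Assumption \ref{ass:V_ui} — is the same as the paper's. The crucial piece you are missing, however, is the paper's Lemma \ref{lemma:bdd_L}, which replaces $\sB$ (only an upper bound on $\sE(L)$) by the smaller class $\sB'$ in \eqref{eqn:Bp}, where $\sE(L)$ is bounded both \emph{above and below} by positive constants and $\langle L\rangle_T$ is bounded. Both features are essential, and without them your argument has two genuine gaps.

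First, your domination of the $V$-term via ``$V$ is increasing'' is simply false: for a utility on $\R$ satisfying the Inada conditions \eqref{eqn:Inada}, $V'(y)=-I(y)$ with $I(0^+)=+\infty$ and $I(\infty)=-\infty$, so $V$ is strictly \emph{decreasing} on $(0,U'(0))$ and increasing only on $(U'(0),\infty)$, with $V(0^+)=U(+\infty)$ which may be $+\infty$. An upper bound $\sE(L^n)_{\tau_n}\le C+1$ alone therefore does not give $V(y_nZ^n_T\sE(L^n)_{\tau_n})\le V((C+1)\bar y Z^n_T)$; the estimate fails precisely on the event where $y_nZ^n_T\sE(L^n)_{\tau_n}<U'(0)$, where making $\sE(L^n)_{\tau_n}$ small makes $V$ \emph{large}. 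The paper handles this by splitting on $\{y_nZ^n_T\sE(L^n)_{\tau_n}\gtrless U'(0)\}$ and invoking the two-sided bound $e^{-2K-2}\le\sE(L^n)_{\tau_n}\le e^{K+1}$, which in turn traces back to $L\in\sB'$.

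Second, your projection step (or equivalently the paper's Lemma \ref{lemma:L_approximation}) needs $L\in H^2_0(\bP)$, i.e.\ $\E[\langle L\rangle_T]<\infty$, in order to dominate $\frac{(\nu\LargerCdot\sigma^n)^2}{|\sigma^n|^2}$ by $|\nu|^2\in L^1(\bP\times\Leb)$ and conclude $(H^n\cdot M^n)\to 0$ in $H^2_0$. Membership in $\sB$ does not yield this; the stochastic exponential $\sE(L)_t=\exp(L_t-\tfrac12\langle L\rangle_t)$ being bounded gives only a one-sided relation between $L$ and $\langle L\rangle$. The bound $\langle L\rangle_T\le d$ in $\sB'$ is precisely what makes the approximation lemma applicable, and without the reduction to $\sB'$ there is no obvious route to $L^n_T\to L_T$ in probability, let alone the uniform-in-$t$ convergence your stopping-time argument quietly requires. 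In short, the reduction from $\sB$ to $\sB'$ is not a cosmetic refinement: it is the step that makes both the construction of $L^n$ and the $V$-term uniform integrability go through, and it is absent from your proposal.
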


Using Assumption \ref{ass:bdd_inf}, the following lemma will further refine the collection $\sB$ over which the infimum is taken in the limiting market's dual problem.  
We define $\sB'$ by
\begin{equation}\label{eqn:Bp}
\begin{split}
  \sB' := \left\{L\in\sB: \right.&\exists \text{ constants } c=c(L),d=d(L), \\
  &\left. 0<c\leq\sE(L)_t\leq d<\infty, \,\forall t\in[0,T], \text{ and } 
  \left<L\right>_T\leq d \right\}
\end{split}
\end{equation}
The following lemma builds on Corollary 3.4 in \cite{LZ07SPA}.  
\begin{lemma}\label{lemma:bdd_L}
  Suppose that the limiting market's dual problem satisfies 
  Assumption \ref{ass:bdd_inf} and that $\E[V(Z^\infty_T)]<\infty$, 
  where $Z^\infty_T$ is the minimal martingale density for $S^\infty$.
  Let $\sB'$ be defined as in \eqref{eqn:Bp}.  Then for $y>0$,
  $$
    v_\infty(y) = \inf_{L\in\sB'} \E\left[V\left(y Z^\infty_T\sE(L)_T\right)
      + y Z^\infty_T \sE(L)_T f\right]\,.
  $$
\end{lemma}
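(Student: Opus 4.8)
The plan is to start from Assumption \ref{ass:bdd_inf}, which already reduces the limiting dual problem to an infimum over $\sB$, i.e.\ over local martingales $L$ with $L_0=0$, $\langle L, M^\infty\rangle\equiv 0$, and $\sE(L)$ bounded above by a constant. I then need to show that nothing is lost by further restricting to $\sB'$, where $\sE(L)$ is also bounded \emph{below} away from $0$ and $\langle L\rangle_T$ is bounded. Since $\sB'\subseteq\sB$, one inequality is trivial, and the content is to show that for every $L\in\sB$ and every $\eps>0$ there is $L'\in\sB'$ with $\E[V(yZ^\infty_T\sE(L')_T) + yZ^\infty_T\sE(L')_T f] \le \E[V(yZ^\infty_T\sE(L)_T) + yZ^\infty_T\sE(L)_T f] + \eps$.

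The approach to the approximation is the one indicated by the reference to Corollary 3.4 of \cite{LZ07SPA}: truncate the candidate dual density from below and stop it in time. Concretely, given $L\in\sB$ with $\sE(L)_t\le C$ for all $t$, first define stopping times $\tau_m := \inf\{t: \langle L\rangle_t \ge m\} \wedge \inf\{t:\sE(L)_t \le 1/m\} \wedge T$ and consider $L^{(m)} := L^{\tau_m}$, the stopped local martingale. Then $\sE(L^{(m)})_t = \sE(L)_{t\wedge\tau_m}$ stays in $[1/m, C]$ (using that $\sE(L)$ has continuous paths in the Brownian filtration and is bounded above, so it cannot jump below $1/m$ before $\tau_m$, modulo a careful handling of the endpoint), and $\langle L^{(m)}\rangle_T = \langle L\rangle_{\tau_m}\le m$, so $L^{(m)}\in\sB'$. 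As $m\to\infty$, $\tau_m\to T$ a.s.\ (since $\sE(L)_T>0$ a.s.\ because $\sE(L)$ is a nonnegative supermartingale that is bounded, hence a true martingale with strictly positive terminal value, and $\langle L\rangle_T<\infty$ a.s.), so $\sE(L^{(m)})_T \to \sE(L)_T$ a.s. The orthogonality $\langle L^{(m)}, M^\infty\rangle = \langle L, M^\infty\rangle^{\tau_m}\equiv 0$ is preserved under stopping. One then needs to pass to the limit in $\E[V(yZ^\infty_T\sE(L^{(m)})_T) + yZ^\infty_T\sE(L^{(m)})_T f]$, and since this is a minimization it suffices to get $\limsup_m$ of the objective at $L^{(m)}$ bounded by the objective at $L$.

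The main obstacle is the limit passage in the $V$-term, because $V$ grows strictly faster than linearly at $+\infty$ and the densities $Z^\infty_T\sE(L^{(m)})_T$ are only bounded \emph{above} by $C\,Z^\infty_T$, which need not be in the domain where $\E[V(C Z^\infty_T)]<\infty$ a priori. The clean fix is to split into the regions $\{\sE(L)_T \le 1\}$ and $\{\sE(L)_T > 1\}$. On $\{\sE(L)_T>1\}$ one has $\sE(L^{(m)})_T\le\sE(L)_T$ eventually \emph{is not} automatic, so instead use the uniform upper bound: $V(yZ^\infty_T\sE(L^{(m)})_T)\le \max\{V(yZ^\infty_T),\,V(yCZ^\infty_T)\}$ when $V$ is increasing for large arguments, or more robustly bound $\sE(L^{(m)})_T \le C$ and use convexity of $V$ to write $V(y Z^\infty_T \sE(L^{(m)})_T) \le \tfrac{\sE(L^{(m)})_T}{C} V(yCZ^\infty_T) + (1-\tfrac{\sE(L^{(m)})_T}{C})V(0^+)$, which is dominated by an integrable envelope \emph{provided} $\E[V(yCZ^\infty_T)]<\infty$. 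If this finiteness is not available, the alternative is to avoid the upper truncation issue entirely by only truncating $\langle L\rangle$ from above and $\sE(L)$ from below (both of which only \emph{decrease} relative to uncontrolled behavior near the terminal time) and to dominate $V(yZ^\infty_T\sE(L^{(m)})_T)$ by $V(yZ^\infty_T\sE(L)_T)^+ + $ (const), since on the stopped region up to $\tau_m$ the density is no larger than at the original stopping structure would give — I would lean on the fact from Assumption \ref{ass:bdd_inf} that $L\in\sB$ already forces $\sE(L)\le C$ so that $yZ^\infty_T\sE(L^{(m)})_T \le yCZ^\infty_T$ uniformly in $m$, and the envelope $V(yCZ^\infty_T)\vee 0$ plus $|yCZ^\infty_T f|$ is integrable because Assumption \ref{ass:V_ui} (uniform integrability of $\{V(Z^n_T)\}$, hence integrability of $V(Z^\infty_T)$) combined with the de la Vallée-Poussin / asymptotic-elasticity structure of $V$ gives $\E[V(cZ^\infty_T)]<\infty$ for all $c>0$ when $\E[V(Z^\infty_T)]<\infty$. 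Granting a dominating function, dominated convergence (using $\sE(L^{(m)})_T\to\sE(L)_T$ a.s.\ and continuity of $V$) finishes the $V$-term, and the linear term $yZ^\infty_T\sE(L^{(m)})_T f\to yZ^\infty_T\sE(L)_T f$ is handled by dominated convergence with dominating function $yCZ^\infty_T\|f\|_{L^\infty}\in L^1$. Letting $m\to\infty$ and then taking the infimum over $L\in\sB$ on the right and over $L\in\sB'$ on the left gives the claimed equality.
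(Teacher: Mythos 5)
Your overall skeleton (one inclusion is trivial since $\sB'\subseteq\sB$; approximate a given $L\in\sB$ by elements of $\sB'$ and pass to the limit) matches the paper, but the way you enforce the lower bound $\sE\geq 1/m$ --- by stopping at $\inf\{t:\sE(L)_t\leq 1/m\}$ --- creates a genuine gap in the limit passage for the $V$-term. On the event where the stopping is triggered by the floor, you have $\sE(L)_{\tau_m}=1/m$, so the integrand is $V\bigl(\tfrac{y}{m}Z^\infty_T\bigr)$; since $V$ is decreasing near $0$ and $V(0^+)=U(+\infty)$ may equal $+\infty$ (the paper only assumes Inada and reasonable asymptotic elasticity, not boundedness of $U$ from above), this quantity blows up as $m\rightarrow\infty$ and admits no integrable envelope uniform in $m$. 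Your two proposed fixes fail exactly there: the convexity bound $V(yZ^\infty_T\sE(L^{(m)})_T)\leq \tfrac{\sE(L^{(m)})_T}{C}V(yCZ^\infty_T)+(1-\tfrac{\sE(L^{(m)})_T}{C})V(0^+)$ is vacuous when $V(0^+)=+\infty$, and the domination by $V(yZ^\infty_T\sE(L)_T)^+ + \text{const}$ is simply false, because on $\{\sE(L)_{\tau_m}=1/m<\sE(L)_T\}$ the monotonicity of $V$ near $0$ makes the stopped value \emph{larger} than the terminal one by an amount that is not uniformly bounded. Knowing only $\bP(\tau_m<T)\rightarrow 0$ does not help, since the blow-up rate of $V(\cdot/m)$ is not matched by any rate for this probability. (Your worry on the other side is, by contrast, unfounded: $\E[V(cZ^\infty_T)]<\infty$ for every $c>0$ does follow from $\E[V(Z^\infty_T)]<\infty$ and \eqref{eqn:ae}, via $V(\lambda y)\leq CV(y)$; the trouble is only at small densities.)

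The paper resolves this by decoupling the two truncations and doing them in the right order. First, it replaces $\sE(L)$ by the convex combination $\sE(L^N)=\tfrac1N+\tfrac{N-1}{N}\sE(L)$, which is again the stochastic exponential of an element of $\sB$; by convexity of $V$ and $V(yZ^\infty_T)\in L^1(\bP)$ this costs at most $\eps$ in the objective, and it fixes a lower bound $1/N$ on the density process once and for all. Only then does it stop, and only the bracket: $\tau_k=\inf\{t:\left<L^N\right>_t\geq k\}$, so $(L^N)^{\tau_k}\in\sB'$ and the integrands are dominated uniformly in $k$ by $\max\bigl(V(\tfrac1N Z^\infty_T),V(CZ^\infty_T)\bigr)\in L^1(\bP)$, making the limit passage routine. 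Your argument needs this mixing step (or an equivalent device to bound $\sE$ away from $0$ before taking limits); with stopping alone, the claimed dominated-convergence step does not go through for general $U$ on $\R$.
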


\begin{proof}
  The first part of the proof is based on the proof of Corollary 3.4
  of \cite{LZ07SPA}.  Let $L\in\sB$ be given.
  By the convexity of $V$, we have
  \begin{align*}
    \E&\left[V\left(yZ^\infty_T\left(\frac{1}{n}+\frac{n-1}{n}\sE(L)_T\right)\right) + yZ^\infty_T\left(\frac{1}{n}+\frac{n-1}{n}\sE(L)_T\right)f\right] \\
    &\leq \frac{1}{n}\E\left[V\left(yZ^\infty_T\right)+yZ^\infty_T f\right]
      + \frac{n-1}{n}\E\left[V\left(yZ^\infty_T\sE(L)_T\right)
      + yZ^\infty_T \sE(L)_T f\right] \\
    &\longrightarrow \E\left[V\left(yZ^\infty_T\sE(L)_T\right)
      + yZ^\infty_T\sE(L)_T f\right] \ \ \text{ as $n\rightarrow\infty$},
  \end{align*}
  because $V(yZ^\infty_T)\in L^1(\bP)$ by the
  assumption that $\E[V(Z^\infty_T)]<\infty$ and reasonable asymptotic
  elasticity, \eqref{eqn:ae}.
  For each $n\geq 1$, we let $L^n$ denote the element $L^n\in\sB$ such that
  $\frac{1}{n}+\frac{n-1}{n}\sE(L) = \sE(L^n)$.

  Let $\eps>0$ be given, and choose $N$ sufficiently large such that
  $$
    \E\left[V\left(yZ^\infty_T\sE(L^N)_T\right)
      + yZ^\infty_T\sE(L^N)_T f\right]\leq 
    \E\left[V\left(yZ^\infty_T\sE(L)_T\right)
      + yZ^\infty_T\sE(L)_T f\right] + \eps.
  $$
  
  We define the sequence of stopping times $\{\tau_k\}_{1\leq k<\infty}$ by
  $\tau_k:=\inf\left\{t\leq T: \left<L^N\right>_t\geq k\right\}$.
  Then $(L^N)^{\tau_k}\in\sB'$ for each $k$. 
  By continuity of $L^N$ and
  finiteness of $\left<L^N\right>_T$, we have that $\sE(L^N)_{\tau_k}
  \longrightarrow \sE(L^N)_T$ in probability as $k\rightarrow\infty$.  
  Scheffe's Lemma implies that the $L^1(\bP)-\lim_k Z^\infty_T\sE(L^N)_{\tau_k} = Z^\infty_T\sE(L^N)_T$, which implies that  $\lim_k
  \E\left[y Z^\infty_T\sE(L^N)_{\tau_k}\, f\right] 
  = \E\left[y Z^\infty_T\sE(L^N)_T f\right]$.
  
  Convergence in probability of 
  $\left\{\sE(L^N)_{\tau_k}\right\}_{1\leq k<\infty}$ also
  implies that $V(yZ^\infty_T\sE(L^N)_{\tau_k})
  \longrightarrow V(y Z^\infty_T\sE(L^N)_T)$ 
  in probability as $k\rightarrow\infty$.
  Let $C$ be the bound on $\sE(L^N)$ from above given to us in
  definition of $\sB$.  Since $\frac{1}{N}\leq \sE(L^N)_t\leq C$ for all
  $t$, 
  we have for all $k$ that $V(yZ^\infty_T\sE(L^N)_{\tau_k})\leq
  \max\left(V(\frac{1}{N}Z^\infty_T), V(CZ^\infty_T)\right)$, where 
  $\max\left(V(\frac{1}{N}Z^\infty_T), V(CZ^\infty_T)\right)$ is in $L^1(\bP)$
  by reasonable asymptotic elasticity, \eqref{eqn:ae}.  Thus, 
  $V(yZ^\infty_T\sE(L^N)_{\tau_k}) \longrightarrow V(y Z^\infty_T\sE(L^N)_T)$ 
  in $L^1(\bP)$ as $k\rightarrow\infty$.
  
  We may choose $K$ sufficiently large so that 
  $\E\left[V\left(yZ^\infty_T\sE(L^N)_{\tau_K}\right)+yZ^\infty_T\sE(L^N)_{\tau_K}f\right] 
  \leq \E\left[V\left(yZ^\infty_T\sE(L^N)_T\right)+yZ^\infty_T\sE(L^N)_T f\right]+\eps$,  
  which then implies 
  that
  $$
    \E\left[V\left(yZ^\infty_T\sE(L^N)_{\tau_K}\right)
    +y Z^\infty_T\sE(L^N)_{\tau_K}\, f\right]
    \leq \E\left[V\left(yZ^\infty_T\sE(L)_T\right)
    +y Z^\infty_T\sE(L)_T\, f\right] + 2\eps.
  $$
  Since $\eps>0$ and $L\in\sB$ are arbitrary, Assumption \ref{ass:bdd_inf}
  allows us to conclude the desired result.
\end{proof}

Establishing an upper-semicontinuity property for the dual problem is difficult because with small changes in the limiting market, we must produce a dual element of a pre-limiting market with appropriately small changes.  Using the Kunita-Watanabe decomposition, we decompose elements $L\in\sB'$ in terms of strongly orthogonal components based on the varying martingale drivers, $M^n$.  See \cite{KW67} for more general coverage of the Kunita-Watanabe decomposition.

A $\bP$-local martingale, $N$, is said to be in $H^2_0(\bP)$ provided $N_0=0$ and $\E[\left<N\right>_T]<\infty$, in which case $N$ is a martingale.  A sequence of martingales $\{N^n\}_{1\leq n<\infty}\subseteq H^2_0(\bP)$ converges to $N$ in $H^2_0(\bP)$ if $\E[\left<N^n-N\right>_T]\longrightarrow 0$ as $n\rightarrow\infty$.  We say that two local martingales, $M$ and $N$, are \textit{strongly orthogonal} if $\left<M,N\right>_t = 0$ for all $t\in[0,T]$.
\begin{lemma}\label{lemma:L_approximation}
  Let $\left\{M^n\right\}_{1\leq n\leq\infty}$ be local 
  martingales such that $M^n\longrightarrow M^\infty$ in the
  semimartingale topology as $n\rightarrow\infty$, 
  and suppose that $M^\infty$ satisfies Assumption \ref{ass:nondegeneracy}.  
  Let $L\in H^2_0(\bP)$ be strongly orthogonal
  to $M^\infty$,  and 
  for $1\leq n<\infty$, decompose $L$ into its
  (unique) Kunita-Watanabe decomposition,
  $$
    L=L^n+(H^n\cdot M^n),
  $$
  where $L^n$ and $(H^n\cdot M^n)$ are in $H^2_0(\bP)$ and 
  $L^n$ is strongly orthogonal to $M^n$.
  Then $L^n\longrightarrow L$ in $H^2_0(\bP)$ 
  as $n\rightarrow\infty$.
\end{lemma}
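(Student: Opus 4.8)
\emph{Proof plan.}  The plan is to exploit the Hilbert-space description of $H^{2}_{0}(\bP)$ available in a Brownian filtration.  First I would use martingale representation to write $L=\sum_{i=1}^{d}(\psi^{i}\cdot B^{i})$ for a predictable $\psi=(\psi^{1},\dots,\psi^{d})$ with $\E\big[\int_{0}^{T}|\psi_{t}|^{2}\,dt\big]<\infty$ and $\langle L\rangle_{T}=\int_{0}^{T}|\psi_{t}|^{2}\,dt$, where $|\cdot|$ and $\langle\cdot,\cdot\rangle$ denote the Euclidean norm and inner product on $\R^{d}$.  Strong orthogonality of $L$ to $M^{\infty}$ then reads $\int_{0}^{t}\langle\psi_{s},\sigma^{\infty}_{s}\rangle\,ds=0$ for all $t$, i.e.
$$
  \langle\psi_{t},\sigma^{\infty}_{t}\rangle=0\qquad(\bP\times\Leb)\text{-a.e.}
$$
Taking $\langle\,\cdot\,,M^{n}\rangle$ in the decomposition $L=L^{n}+(H^{n}\cdot M^{n})$ and using $\langle L^{n},M^{n}\rangle\equiv0$ identifies the Kunita--Watanabe integrand as $H^{n}_{t}=\langle\psi_{t},\sigma^{n}_{t}\rangle/|\sigma^{n}_{t}|^{2}$ on $\{|\sigma^{n}_{t}|^{2}>0\}$ and $H^{n}_{t}=0$ otherwise; since $L-L^{n}=H^{n}\cdot M^{n}$ and $\langle L^{n},H^{n}\cdot M^{n}\rangle\equiv0$, this gives
$$
  \big\|L-L^{n}\big\|_{H^{2}_{0}(\bP)}^{2}
  =\E\big[\langle H^{n}\cdot M^{n}\rangle_{T}\big]
  =\E\!\left[\int_{0}^{T}\frac{\langle\psi_{t},\sigma^{n}_{t}\rangle^{2}}{|\sigma^{n}_{t}|^{2}}\,\bI_{\{|\sigma^{n}_{t}|^{2}>0\}}\,dt\right],
$$
so it remains to show this expectation tends to $0$.

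Next I would record two facts.  By Cauchy--Schwarz the integrand above is dominated by $|\psi_{t}|^{2}\in L^{1}(\bP\times\Leb)$.  And, exactly as in the proof of Lemma \ref{lemma:S_semimg_convergence}, continuity of the quadratic-variation map in the semimartingale topology gives $\langle M^{n}-M^{\infty}\rangle_{T}=\sum_{i=1}^{d}\int_{0}^{T}(\sigma^{n,i}_{t}-\sigma^{\infty,i}_{t})^{2}\,dt\to0$ in probability; a Markov-inequality estimate (bounding $\int_{0}^{T}\bI_{\{|\sigma^{n}_{t}-\sigma^{\infty}_{t}|^{2}>\alpha\}}\,dt$ by $\min(T,\alpha^{-1}\langle M^{n}-M^{\infty}\rangle_{T})$ and invoking bounded convergence) upgrades this to $|\sigma^{n}_{t}-\sigma^{\infty}_{t}|^{2}\to0$ in $(\bP\times\Leb)$-measure.

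To conclude I would pass to subsequences.  Along any subsequence there is a further one on which $\sigma^{n}_{t}\to\sigma^{\infty}_{t}$ for $(\bP\times\Leb)$-a.e.\ $(\omega,t)$.  At such a point Assumption \ref{ass:nondegeneracy} gives $|\sigma^{\infty}_{t}|^{2}>0$, hence $|\sigma^{n}_{t}|^{2}\to|\sigma^{\infty}_{t}|^{2}>0$, so the denominator is eventually positive and, using the orthogonality relation,
$$
  \frac{\langle\psi_{t},\sigma^{n}_{t}\rangle^{2}}{|\sigma^{n}_{t}|^{2}}\,\bI_{\{|\sigma^{n}_{t}|^{2}>0\}}\ \longrightarrow\ \frac{\langle\psi_{t},\sigma^{\infty}_{t}\rangle^{2}}{|\sigma^{\infty}_{t}|^{2}}=0.
$$
Dominated convergence with majorant $|\psi_{t}|^{2}$ then sends $\E[\langle H^{n}\cdot M^{n}\rangle_{T}]$ to $0$ along this further subsequence, and since every subsequence admits such a further subsequence, the full sequence converges; that is, $L^{n}\to L$ in $H^{2}_{0}(\bP)$.

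The main obstacle is the potential degeneracy of the denominator $|\sigma^{n}_{t}|^{2}$: without further information, semimartingale convergence does not prevent $\langle\psi_{t},\sigma^{n}_{t}\rangle^{2}/|\sigma^{n}_{t}|^{2}$ from being large where $\sigma^{n}_{t}$ is small but $\sigma^{\infty}_{t}$ is not.  Assumption \ref{ass:nondegeneracy} is precisely what removes this: along the a.e.-convergent subsequence it keeps $|\sigma^{n}_{t}|^{2}$ bounded away from $0$ eventually at each fixed $(\omega,t)$, while the (degeneracy-proof) Cauchy--Schwarz bound by $|\psi_{t}|^{2}$ supplies the integrable majorant that lets dominated convergence close the argument.
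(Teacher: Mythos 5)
Your proof is correct and follows essentially the same route as the paper: represent $L$ via its Brownian integrands, write out the Kunita--Watanabe integrand $H^n$ explicitly, bound $\langle\psi,\sigma^n\rangle^2/|\sigma^n|^2$ by $|\psi|^2$ via Cauchy--Schwarz, use $\sigma^n\to\sigma^\infty$ in $(\bP\times\Leb)$-measure together with $\langle\psi,\sigma^\infty\rangle=0$ and Assumption~\ref{ass:nondegeneracy}, and close with dominated convergence. You spell out a couple of steps the paper leaves implicit (the Markov-inequality upgrade to $(\bP\times\Leb)$-measure convergence and the subsequence-of-a-subsequence argument), but the substance is identical.
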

\begin{proof} 
  The filtration $\bF=(\sF_t)_{0\leq t\leq T}$ is the ($\bP$-completed)
  filtration generated by the $d$-dimensional Brownian motion
  $(B^1,\ldots,B^d)$ on $(\Omega,\sF,\bF,\bP)$ with $\sF=\sF_T$.
  For notational concreteness, we denote
  $$
    L = (\nu^1\cdot B^1)+\ldots+(\nu^d\cdot B^d), 
  $$
  for $\nu^k\in\sL^2$, $1\leq k\leq d$.  
  For $\x=(x_1,\ldots,x_d),\y=(y_1,\ldots,y_d)\in\R^d$, 
  we let $|\x|$ denote the Euclidean norm, 
  $|\x|:=\sqrt{x_1^2+\cdots+ x_d^2}$, and let the inner product be 
  $\x\LargerCdot\y:=x_1y_1+\ldots+x_dy_d$.  
  We define the vector $\nu := (\nu^1,\ldots,\nu^d)$.
  
  For $1\leq n<\infty$, we define
  $$
    H^n := \frac{\nu\LargerCdot\sigma^n}{|\sigma^n|^2}
    \, \bI_{\{|\sigma^n|\neq 0\}}.
  $$
  Then $H^n$ is progressively measurable and $M^n$-integrable with 
  $(H^n\cdot M^n)\in H^2_0(\bP)$. 
  We define $L^n:= L-(H^n\cdot M^n)\in H^2_0(\bP)$. 
  $L^n$ and $M^n$ are strongly orthogonal,  
  and thus $L=L^n+(H^n\cdot M^n)$ is the 
  Kunita-Watanabe decomposition for $L$ with respect to $M^n$. 
  Since $L^n$ and $M^n$ are strongly orthogonal,
  $L^n\longrightarrow L$ in $H^2_0(\bP)$ if and only if 
  $(H^n\cdot M^n)\longrightarrow 0$ in $H^2_0(\bP)$ as $n\rightarrow\infty$.  
  Hence, we seek to show that
  $\E[\left<H^n\cdot M^n\right>_T]
  = \E\left[\int_0^T \frac{(\nu\LargerCdot\sigma^n)^2}{|\sigma^n|^2}
  \bI_{\{|\sigma^n|\neq 0\}}dt\right]\longrightarrow 0$ as 
  $n\rightarrow\infty$. 
  
  Since $L\in H^2_0(\bP)$, we have
  for $1\leq n<\infty$,
  $$
    \frac{\left(\nu\LargerCdot\sigma^n\right)^2}
    {|\sigma^n|^2}\,\bI_{\{|\sigma^n|\neq 0\}}
    \leq |\nu|^2
    \in L^1(\bP\times\Leb).
  $$
  The assumption that $M^n\longrightarrow M^\infty$ in the
  semimartingale topology as 
  $n\rightarrow\infty$ implies that for $1\leq k\leq d$, $\sigma^{n,k}\longrightarrow\sigma^{\infty,k}$ in $(\bP\times\Leb)$-measure as $n\rightarrow\infty$.  Since $\left<L,M^\infty\right>=0$,
  we have that $\nu\LargerCdot\sigma^\infty=0$ $(\bP\times\Leb)$-a.e.  
  Assumption \ref{ass:nondegeneracy} ensures that
  $|\sigma^\infty|\neq 0$ ($\bP\times\Leb$)-a.e., and hence,
  $$
    \frac{\left(\nu\LargerCdot\sigma^n\right)^2}{|\sigma^n|^2}
    \, \bI_{\{|\sigma^n|\neq 0\}} \longrightarrow 0
    \ \ \text{ in ($\bP\times\Leb$)-measure as $n\rightarrow\infty$.}
  $$
  Thus dominated convergence implies that
  $\E[\left<H^n\cdot M^n\right>_T]\longrightarrow 0$ as $n\rightarrow\infty$,
  which completes the proof of the claim.
\end{proof}

\begin{proof}[Proof of Lemma \ref{lemma:dual_usc}]
  We let $\sB'$ be defined as in \eqref{eqn:Bp} and let $L\in\sB'$ be given.  Let $K\in(0,\infty)$ be the constant given in the definition of $\sB'$ such that $|L_t|\leq K$ for all $t$ and $\left<L\right>_T\leq K$.
  
  We let $L^n$ be given as in Lemma \ref{lemma:L_approximation}.  Then $L^n\longrightarrow L$ in $H^2_0$ as $n\rightarrow\infty$.  For $1\leq n<\infty$, define stopping times $\tau_n:=\inf\{t\leq T: |L^n_t-L_t|\geq 1 \text{ or } \left<L^n\right>_t\geq K+1\}$.  The $H^2_0(\bP)$ convergence of $\{L^n\}_{1\leq n<\infty}$ implies that $\left<L^n\right>_T\longrightarrow\left<L\right>_T$ in $L^1(\bP)$ as $n\rightarrow\infty$, while the Burkholder-Davis-Gundy inequalities additionally give us that $\bP(\sup_t |L^n_t-L_t|\geq 1)\longrightarrow 0$ as $n\rightarrow\infty$.  Hence, $\bP(\tau_n=T)\longrightarrow 1$ as $n\rightarrow\infty$.  We conclude that $L^n_{\tau_n}\longrightarrow L_T$ and $\left<L^n\right>_{\tau_n}\longrightarrow\left<L\right>_T$ in probability as $n\rightarrow\infty$, which yields
  $$
    \sE(L^n)_{\tau_n}\longrightarrow \sE(L)_T
    \ \ \text{ in probability as $n\rightarrow\infty$.}
  $$
  Furthermore, the definition of $\tau_n$ provides upper and lower bounds 
  on $\sE(L^n)_{\tau_n}$, which are independent of $n$:
  \begin{equation}\label{eqn:exp_bound}
    e^{-2K-2}\leq \sE(L^n)_{\tau_n}\leq e^{K+1}.
  \end{equation}
  Such uniform bounds and the choice of the $L^n$s are made possible by the choice of $L\in\sB'$.

For $1\leq n\leq\infty$, $Z^n$ is a martingale by Assumption \ref{ass:Z_mg}, and by Fatou's Lemma, $1 = \E[Z^\infty_T] \leq \liminf_{n\rightarrow\infty} \E[Z^n_T] = 1$.  Hence, $\lim_{n\rightarrow\infty} \E[Z^n_T] = \E[Z^\infty_T]$.  Scheffe's Lemma then implies that $Z^n_T\longrightarrow Z^\infty_T$ in $L^1(\bP)$ as $n\rightarrow\infty$, and in particular, $\{Z^n_T\}_{1\leq n\leq \infty}$ is uniformly integrable.  By \eqref{eqn:exp_bound} and since $f\in L^\infty(\bP)$, we have that
$$
  0 \leq y_n Z^n_T \sE(L^n)_{\tau_n} f
  \leq \left(\sup_m y_m\right) e^{K+1} \|f\|_\infty Z^n_T,
$$
which implies that $\{y_nZ^n_T\sE(L^n)_{\tau_n} f\}_{1\leq n\leq\infty}$ is uniformly integrable.  Convergence in probability and uniform integrability imply that
$$
  y_nZ^n_T\sE(L^n)_{\tau_n}f \longrightarrow y Z^\infty_T\sE(L)_Tf
  \ \text{ in $L^1(\bP)$ as $n\rightarrow\infty$}.
$$

We use \eqref{eqn:exp_bound} again in order to obtain uniform integrability of the remaining term in the dual value function.  As mentioned in Assumption 1.2(i) of \cite{OZ09MF}, the reasonable asymptotic elasticity condition \eqref{eqn:ae} along with the $U(0)>0$ is equivalent to the following:  for all $\lambda>0$ there exists $C>0$ such that $V(\lambda y)\leq C V(y)$ for all $y\geq 0$.  Then for $1\leq n<\infty$,
  \begin{align*}
    0 &\leq V\left(y_nZ^n_T\sE(L^n)_{\tau_n}\right) \\
      &\leq V\left(y_nZ^n_T e^{K+1}\right)
        \bI_{\{y_n Z^n_T\sE(L^n)_{\tau_n}\geq U'(0)\}}
        + V\left(y_n Z^n_T e^{-2K-2}\right)
        \bI_{\{y_n Z^n_T\sE(L^n)_{\tau_n}< U'(0)\}} \\
      &\leq V\left(\left(\sup_m y_m\right) e^{K+1}Z^n_T\right)
        + V\left(\left(\inf_m y_m\right)e^{-2K-2}Z^n_T\right) \\
      &\leq (C_1+C_2)V(Z^n_T),
  \end{align*}
  where $C_1, C_2$ are the constants produced by the reasonable
  asymptotic elasticity of $U$.  The constants
  $C_1, C_2$ depend on the choice of $L$, $K$, 
  $\inf_m y_m$, and $\sup_m y_m$ but not on $n$.  
  Assumption \ref{ass:V_ui} now guarantees the uniform integrability of
  $\left\{V\left(y_n Z^n_T\sE(L^n)_{\tau_n}\right)\right\}_{1\leq n<\infty}$.   
  Convergence in probability and uniform integrability imply that
  $V\left(y_n Z^n_T\sE(L^n)_{\tau_n}\right)\longrightarrow 
  V\left(y Z^\infty_T\sE(L)_T\right)$ in $L^1(\bP)$ as $n\rightarrow\infty$. 
  Finally, we have
  $$
    \E\left[V(yZ^\infty_T\sE(L)_T)+yZ^\infty_T\sE(L)_Tf\right] 
    = \lim_n \E\left[V(y_n Z^n_T\sE(L^n)_{\tau_n})
      + y_n Z^n_T \sE(L^n)_{\tau_n}f\right] 
    \geq \limsup_n v_n(y_n).
  $$
  Taking the infimum over all $L\in\sB'$ and applying Lemma \ref{lemma:bdd_L}  
  yields $v_\infty(y)\geq \limsup_n v_n(y_n)$.
\end{proof}

\begin{proof}[Proof of Theorem \ref{thm:main_result}]\label{proof:main_result}
  We first note that the assumption that
  $\sM^\infty_V\neq\emptyset$ of Lemma \ref{lemma:primal_lsc} is satisfied by 
  Assumption \ref{ass:V_ui}.  For $x_n\longrightarrow x\in\R$
  and $y=y(x)$, Lemmas \ref{lemma:primal_lsc} and \ref{lemma:dual_usc} imply
  \begin{equation}\label{eqn:inequality_chain}
    u_\infty(x)
      \leq \liminf_{n\rightarrow\infty} u_n(x_n)
      \leq \limsup_{n\rightarrow\infty} v_n(y)+x_ny
      \leq v_\infty(y)+xy = u_\infty(x).
  \end{equation}
  The last equality can be shown by Theorem 1.1 of \cite{OZ09MF} by taking $\sE = x+f$ and $y=\E\left[\frac{d\hat{\mu}(x)}{d\bP}\right]$.  Here, $\sE$ and $\hat{\mu}(x)$ refer to the notation used in \cite{OZ09MF}.
  
  Moreover, the inequality chain \eqref{eqn:inequality_chain} shows 
  that for $y>0$, 
  $v_n(y)\longrightarrow v_\infty(y)$ as $n\rightarrow\infty$.  
  For $y_n\longrightarrow y>0$, we also have that $v_n(y_n)\longrightarrow v_\infty(y)$ as $n\rightarrow\infty$ because the convexity of each $v_n$ implies 
  that $v_n\longrightarrow v_\infty$ uniformly on compacts in $(0,\infty)$ 
  as $n\rightarrow\infty$.
\end{proof}

\begin{proof}[Proof of Corollary \ref{cor:indifference_price}]
  Let $\{p_{n_k}(x)\}_{1\leq k<\infty}$ be a convergent subsequence of 
  $\{p_n(x)\}_{1\leq n<\infty}$ 
  with $\lim_k p_{n_k}(x) = p\in\R$.  By Theorem \ref{thm:main_result},
  $$
    u_\infty(x) = \lim_k u_{n_k}(x),
  $$
  while $w_{n_k}(x+p_{n_k}(x)) = u_{n_k}(x)$ for each $k\geq 1$ by the
  definition of the indifference price.  Next, we take the contingent
  claim to be $0$ and note
  that $\lim_k x+p_{n_k}(x) = x+p$, which allows us to conclude from Theorem
  \ref{thm:main_result} that
  $$
    w_\infty(x+p) = \lim_k w_{n_k}(x+p_{n_k}(x)),
  $$
  which implies that $p=p_\infty(x)$.  Since $f\in L^\infty(\bP)$,
  $\{p_n(x)\}_n$ is bounded, hence any subsequence has a further subsequence
  that converges to $p_\infty(x)$.  Therefore, $\lim_n p_n(x)$ exists and 
  equals $p_\infty(x)$.
\end{proof}

\section{Examples}
\label{section:examples}
The first example shows that Assumption \ref{ass:nondegeneracy} is necessary in the sense that its absence can allow Theorem \ref{thm:main_result}'s conclusion to fail.  The example is constructed in a very simple setting, but the same idea can generate more complex counterexamples whenever Assumption \ref{ass:nondegeneracy} fails.
\begin{example}\label{example:degeneracy_ass}
  Let $d=1$, so that the probability space is generated by a $1$-dimensional
  Brownian motion, $B$.  We define the martingales $M^n:=\frac{1}{n} B$ 
  for $1\leq n<\infty$ and $M^\infty:=0$.  Let $\lambda^n := 0$ for all $1\leq n\leq\infty$ so that $S^\infty_t = 0$ for all $t\in[0,T]$ and for $1\leq n<\infty$, $S^n$ has the dynamics
  $$
    dS^n = \frac{1}{n}dB,\ \ \ S^n_0=0.
  $$
  The stock markets satisfy Assumptions \ref{ass:semimg} 
  and \ref{ass:Z_mg}, but the limiting market
  does not satisfy Assumption \ref{ass:nondegeneracy}.
  
  Let the contingent claim be given by $f := \bI_{\{B_T\geq 0\}}$.  By 
  It\^o's representation theorem and the boundedness of $f$, 
  there exists a progressively measurable $H$ such that 
  $\E[\int_0^T H^2_t dt]<\infty$ and 
  $f = \frac{1}{2}+(H\cdot B)_T$.  Moreover, $(H\cdot B)$ is bounded 
  since for all $t\in[0,T]$,
  $$
    (H\cdot B)_t = \E\left[(H\cdot B)_T \left| \right. \sF_t\right]
    = \E\left[f-\frac{1}{2}\left|\right.\sF_t\right]
    \in \left[-\frac{1}{2},\frac{1}{2}\right], \ \text{ $\bP$-a.s.}.
  $$
  Hence, we can conclude by
  Theorem 2.1 of \cite{S01AAP} that for all $1\leq n<\infty$ and $x\in\R$,
  $$
    u_n(x) = U\left(x+\frac{1}{2}\right).
  $$
  Yet for all $x\in\R$, Jensen's inequality implies that $u_\infty(x) = \E\left[U\left(x+f\right)\right] < U\left(x+\frac{1}{2}\right)$.
\end{example}
The following two examples provide sufficient conditions on the limiting market for Assumption \ref{ass:bdd_inf} to hold.  

\begin{example}\label{ex:complete} 
This example covers the original motivation for this work, where the contingent claim is replicable in the (possibly incomplete) limiting market.  In this case, the limiting market consists of a driving Brownian motion, a replicable claim, and additional independent Brownian noise.

Recall that $(B^1,\ldots,B^d)$ is the $d$-dimensional Brownian motion generating the completed filtration, $\bF=(\sF_t)_{0\leq t\leq T}$.  Let $(\sF^1_t)_{0\leq t\leq T}$ denote the filtration generated by $B^1$, completed with all $\bP$-null sets. 
The risky asset, $S^\infty$, has dynamics as in \eqref{eqn:S_dynamics} and is $(\sF^1_t)_{0\leq t\leq T}$-adapted.  The contingent claim, $f\in L^\infty(\Omega,\sF^1_T,\bP)$, is replicable:  there exists an $S^\infty$-integrable $H$ and constant $c$ such that $f = c+(H\cdot S^\infty)_T$.

\begin{proposition}
  Suppose that $S^\infty$ is $\left(\sF^1_t\right)_{0\leq t\leq T}$-adapted
  with dynamics \eqref{eqn:S_dynamics} and satisfies Assumption
  \ref{ass:nondegeneracy}.  Suppose that 
  $f\in L^\infty(\Omega,\sF^1_T,\bP)$ is replicable.  
  Then Assumption \ref{ass:bdd_inf} is satisfied.
\end{proposition}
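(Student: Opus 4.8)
The plan is to prove something slightly stronger than Assumption \ref{ass:bdd_inf}: that the minimal martingale measure $\bQ^\infty$ is itself a minimizer of the limiting dual problem, so that the infimum in Assumption \ref{ass:bdd_inf} is attained at the trivial element $L\equiv 0\in\sB$. The structural input is that, since $S^\infty$ is $(\sF^1_t)$-adapted, its martingale driver is spanned by $B^1$ alone: reading the hypothesis ``$S^\infty$ is $(\sF^1_t)$-adapted with dynamics \eqref{eqn:S_dynamics}'' together with uniqueness of the Brownian representation in $\bF$ gives $\sigma^{\infty,i}\equiv 0$ for $2\le i\le d$ and $\sigma^{\infty,1}$ $(\sF^1_t)$-progressive, so $M^\infty=\sigma^{\infty,1}\cdot B^1$; Assumption \ref{ass:nondegeneracy} then forces $\sigma^{\infty,1}\neq 0$ $(\bP\times\Leb)$-a.e. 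In particular $Z^\infty_T=\sE(-\lambda^\infty\cdot M^\infty)_T$ is $\sF^1_T$-measurable, and $f$ is $\sF^1_T$-measurable by hypothesis (the replicability assumption is then automatic, nondegeneracy making the $\sF^1$-market complete). Thus both data of the limiting dual functional \eqref{def:dual_value_fn} are $\sF^1_T$-measurable.

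Next I would show that any admissible dual perturbation lives only on the noise $B^2,\dots,B^d$ independent of $\sF^1$. Given $\bQ\in\sM^\infty_V$, write $\frac{d\bQ}{d\bP}=Z^\infty_T\sE(L)_T$ (by the Ansel--Stricker representation recalled before Assumption \ref{ass:bdd_inf}) with $L$ a continuous local martingale, $L_0=0$ and $\langle L,M^\infty\rangle=0$. Since $d\langle L,M^\infty\rangle_t=\sigma^{\infty,1}_t\,d\langle L,B^1\rangle_t$ and $\sigma^{\infty,1}\neq 0$ a.e., we get $\langle L,B^1\rangle\equiv 0$, so the Brownian representation of $L$ carries no $B^1$-term: $L=\sum_{i=2}^{d}(\nu^i\cdot B^i)$. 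Put $m:=\E[\sE(L)_T\mid\sF^1_T]$. Considering the enlarged filtration $\sG_t:=\sF^1_T\vee\sigma(B^2_s,\dots,B^d_s:s\le t)$ (completed) and using that $\sF^1_T$ is independent of $(B^2,\dots,B^d)$ under $\bP$ --- so that $B^2,\dots,B^d$ remain $(\sG_t)$-Brownian motions and $\sE(L)$ is a nonnegative $(\sG_t)$-supermartingale --- yields $m=\E[\sE(L)_T\mid\sG_0]\le\sE(L)_0=1$. On the other hand, $Z^\infty_T$ being $\sF^1_T$-measurable and strictly positive, $\E[Z^\infty_T m]=\E[Z^\infty_T\sE(L)_T]=1=\E[Z^\infty_T]$, which forces $m=1$ $\bP$-a.s.

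With this identity, I condition the dual functional on $\sF^1_T$ (where $Z^\infty_T$ and $f$ are measurable) and apply Jensen's inequality to the convex function $V$:
\begin{align*}
  \E\!\left[V\!\left(yZ^\infty_T\sE(L)_T\right)+yZ^\infty_T\sE(L)_Tf\right]
  &\ \ge\ \E\!\left[V\!\left(yZ^\infty_T m\right)+yZ^\infty_T m\,f\right] \\
  &\ =\ \E\!\left[V\!\left(yZ^\infty_T\right)+yZ^\infty_Tf\right].
\end{align*}
Since $\bQ^\infty\in\sM^\infty_V$ by Assumption \ref{ass:V_ui}, taking the infimum over $\bQ\in\sM^\infty_V$ gives $v_\infty(y)=\E[V(yZ^\infty_T)+yZ^\infty_Tf]$. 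Finally, the same computation applies to every $L\in\sB$: there $\sE(L)$ is bounded, hence $Z^\infty\sE(L)$ is a uniformly integrable $\bP$-martingale whose terminal value defines a measure in $\sM^\infty$, and the dual value attached to such an $L$ is $\ge v_\infty(y)$ --- it is $+\infty$ when the $V$-entropy is infinite, and bounded below by the displayed inequality otherwise. Therefore $\inf_{L\in\sB}\E[V(yZ^\infty_T\sE(L)_T)+yZ^\infty_T\sE(L)_Tf]\ge v_\infty(y)$, and since $0\in\sB$ achieves equality, Assumption \ref{ass:bdd_inf} holds.

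The main obstacle is the upgrade ``$m\le 1\Rightarrow m=1$'': the conditional supermartingale bound alone does not pin down $m$, and one must feed in that $Z^\infty_T\sE(L)_T$ is a bona fide probability density together with the $\sF^1_T$-measurability and strict positivity of $Z^\infty_T$. This is precisely where Assumption \ref{ass:nondegeneracy} is indispensable --- nondegeneracy of $M^\infty$ is what forces $\langle L,B^1\rangle\equiv 0$ and hence that $L$ is driven solely by the coordinates independent of $\sF^1_T$ --- and it is also the point whose failure is exploited by Example \ref{example:degeneracy_ass}.
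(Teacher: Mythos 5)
Your proof is correct and takes essentially the same route as the paper's: the core of both arguments is to condition on $\sF^1_T$, establish that $\E[\sE(L)_T\,|\,\sF^1_T]=1$ by combining the conditional supermartingale bound with $\E[Z^\infty_T\sE(L)_T]=1$ and strict positivity of $Z^\infty_T$, and then apply conditional Jensen's inequality to conclude that $Z^\infty_T$ is the density of the dual minimizer. Your version spells out several steps the paper leaves compressed --- the identification $\sigma^{\infty,i}\equiv 0$ for $i\geq 2$, the enlarged-filtration argument justifying why $\E[\sE(L)_T\,|\,\sF^1_T]\leq 1$ (the paper simply asserts the conditional supermartingale inequality), and the explicit two-sided check that the infimum over $L\in\sB$ matches $v_\infty(y)$ rather than merely being bounded above by it --- and you absorb the linear $f$-term into the Jensen step via $\sF^1_T$-measurability, which avoids the paper's separate invocation of replicability to show $\bQ\mapsto\E[\frac{d\bQ}{d\bP}f]$ is constant on $\sM^\infty$. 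These are refinements of the same argument, not a different method.
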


\begin{proof}
Let $y>0$ and $\bQ\in\sM^\infty_V$ be given.  Write $\frac{d\bQ}{d\bP}=Z^\infty_T\sE(L)_T$ for its Radon-Nikodym density.  We have that $Z^\infty_T\in\sF^1_T$, while Assumption \ref{ass:nondegeneracy} implies that $\left<L,B^1\right>_t=0$ for $t\in[0,T]$.
Note that $\E[\sE(L)_T|\sF^1_T] = 1$, $\bP$-a.s., since
$$
  1 = \E[Z^\infty_T\sE(L)_T] 
    = \E[Z^\infty_T\underbrace{\E[\sE(L)_T|\sF^1_T]}_{\leq \E[\sE(L)_0|\sF^1_T]\, =\, 1}\,] 
    \leq \E[Z^\infty_T] = 1,
$$
with equality holding if and only if $\E[\sE(L)_T|\sF^1_T]=1$, $\bP$-a.s.
By Jensen's inequality,
\begin{align*}
  \E[V(yZ^\infty_T\sE(L)_T)]
  &= \E\left[\E\left[V(yz\sE(L)_T)|\sF^1_T\right]|_{z=Z^\infty_T}\right]
  \\
  &\geq \E\left[V(yz\E\left[\sE(L)_T|\sF^1_T\right])|_{z=Z^\infty_T}\right]
  \\
  &= \E[V(yZ^\infty_T)].
\end{align*}
Since $f$ is bounded and replicable, $\bQ\mapsto \E[\frac{d\bQ}{d\bP}f]$ is constant on $\sM^\infty$.  Hence, for all $\bQ\in\sM^\infty_V$,
$
  \E[V(yZ^\infty_T)+yZ^\infty_T f] 
  \leq \E\left[V\left(y\frac{d\bQ}{d\bP}\right)
  + y\frac{d\bQ}{d\bP}f\right],
$
which implies that $Z^\infty_T$ is the density of the dual minimizer, and so Assumption \ref{ass:bdd_inf} is satisfied.
\end{proof}


\end{example}

\begin{example}[Exponential Investors]
For the exponential investor, Assumption \ref{ass:bdd_inf} is satisfied, under an easier-to-verify BMO assumption.  We refer to \cite{Kaz94} for additional details on $\bmo$ martingales. 

\begin{definition} \label{def:BMO}
  A $\bP$-local martingale $N$ 
  is said to be in $BMO(\bP)$ if
  $$
    \sup_{\tau} 
    \left\|\E^\bP\left[\left|N_T-N_\tau\right|\,
    |\sF_\tau\right]\right\|_\infty
    <\infty,
  $$
  where the supremum is taken over stopping times $\tau\leq T$.
\end{definition}

\begin{assumption}\label{ass:bmo}
  $(\lambda^\infty\cdot M^\infty)\in \bmo(\bP)$.
\end{assumption}

For the remainder of this section, we let $U(x)=-\exp(-\alpha x)$ for a positive constant $\alpha$.  The conjugate to $U$ is $V(y)=\frac{y}{\alpha}\left(\log \frac{y}{\alpha}-1\right)$, $y>0$.  We have the following relationships for $c\in\R$ and $y>0$:
\begin{gather}
  \label{formula:Vprime} V'(cy) = V'(y)+\frac{1}{\alpha}\log c, \\
  \label{formula:V} V(y)+yc = y\left(V'(ye^{\alpha c})-\frac{1}{\alpha}\right).
\end{gather}
For a set $A\in\sF$ and random variable $X\in L^1(\bP)$, we adopt the notation $\E[X; A] := \E[X\bI_A] = \int_A Xd\bP$.

\begin{theorem}\label{thm:exp_bmo_case}
  Let $U(x) = -\exp(-\alpha x)$ for a positive constant $\alpha$ and assume that 
  Assumption \ref{ass:bmo} holds.  Let $\Qmin$ denote the minimal martingale measure, $\frac{d\Qmin}{d\bP}:=Z^\infty_T=\sE(-\lambda^\infty\cdot M^\infty)_T$, and suppose that $\bQ^\infty\in\sM^\infty_V$.  Then Assumption
  \ref{ass:bdd_inf} is satisfied.
\end{theorem}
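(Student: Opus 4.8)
The plan is to show that the infimum defining $v_\infty(y)$ over all of $\sM^\infty_V$ is unchanged when restricted to densities of the form $Z^\infty_T\sE(L)_T$ with $L\in\sB$. One direction is easy: for $L\in\sB$ the strong orthogonality $\langle L,M^\infty\rangle\equiv 0$ makes $Z^\infty\sE(L)$ a nonnegative $\bP$-local martingale dominated by $CZ^\infty$, hence a true martingale, so $\bQ^L:=Z^\infty_T\sE(L)_T\,\bP$ is an equivalent local martingale measure for $S^\infty$; since $V$ grows like $\frac{w}{\alpha}\log w$ up to affine terms, $Z^\infty_T\sE(L)_T\le CZ^\infty_T$, $f\in L^\infty(\bP)$, and $\E[V(Z^\infty_T)]<\infty$ (which is $\Qmin\in\sM^\infty_V$), one gets $\bQ^L\in\sM^\infty_V$, whence $\inf_{L\in\sB}\E[V(yZ^\infty_T\sE(L)_T)+yZ^\infty_T\sE(L)_T f]\ge v_\infty(y)$. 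So the real content is the reverse inequality, for which it suffices to produce a minimizing sequence for $v_\infty(y)$ lying inside $\{Z^\infty_T\sE(L)_T:L\in\sB\}$.

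First I would reformulate the dual objective using the explicit conjugate $V(y)=\frac{y}{\alpha}(\log\frac{y}{\alpha}-1)$ together with the relations \eqref{formula:Vprime}--\eqref{formula:V}: for $\bQ\in\sM^\infty_V$ with density $Z^\infty_T\sE(L)_T$,
\begin{equation*}
  \E\!\left[V\!\left(yZ^\infty_T\sE(L)_T\right)+yZ^\infty_T\sE(L)_T f\right]
   = \frac{y}{\alpha}\Big(\log\tfrac{y}{\alpha}-1+H(\bQ\,|\,\bP)+\alpha\,\E^{\bQ}[f]\Big),
\end{equation*}
where $H(\bQ\,|\,\bP)=\E^{\bQ}[\log\frac{d\bQ}{d\bP}]$; absorbing the bounded claim into the equivalent reference measure $\tilde{\bP}:=\frac{e^{-\alpha f}}{\E[e^{-\alpha f}]}\,\bP$ (whose density and reciprocal are bounded) turns $H(\bQ\,|\,\bP)+\alpha\E^{\bQ}[f]$ into $H(\bQ\,|\,\tilde{\bP})-\log\E[e^{-\alpha f}]$, so the dual problem is, up to an affine transformation, the minimal relative-entropy martingale-measure problem with reference measure $\tilde{\bP}$. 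By Theorem 1.1(iii) of \cite{OZ09MF} the infimum is attained by some $\hat{\bQ}\in\sM^\infty_V$; I would write $\frac{d\hat{\bQ}}{d\bP}=Z^\infty_T\sE(\hat{L})_T$ via the \cite{AS92}, \cite{AS93MMM} decomposition, with $\hat{L}$ a local martingale, $\hat{L}_0=0$, $\langle\hat{L},M^\infty\rangle\equiv 0$, and note that $\sE(\hat{L})$ is then a $\Qmin$-martingale because both $Z^\infty$ and $Z^\infty\sE(\hat{L})$ are.

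The main obstacle is to show $\hat{L}\in\bmo(\Qmin)$. Assumption \ref{ass:bmo} says $-\log Z^\infty=(\lambda^\infty\cdot M^\infty)+\tfrac12\langle\lambda^\infty\cdot M^\infty\rangle$ has a $\bmo(\bP)$ martingale part (and, by John--Nirenberg, a terminal value with exponential moments); equivalently, $\Qmin$ is a ``$\bmo$ dual element'' in the sense of \cite{6AP}. Since $f$ is bounded, the change of reference measure to $\tilde{\bP}$ preserves $\bmo$, and the exponential-utility/entropy duality propagates the $\bmo$ regularity of $\log Z^\infty$ to $\log\sE(\hat{L})=\log\frac{d\hat{\bQ}}{d\bP}-\log Z^\infty$, i.e.\ $\hat{L}\in\bmo$; this is precisely the mechanism exploited in \cite{6AP}, and it may alternatively be read off the $\bmo$ a priori estimate for the quadratic BSDE characterizing $\log\sE(\hat{L})$ with its $\bmo$-bounded data. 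Granting this, the reverse Hölder inequality for $\bmo$ exponentials (\cite{Kaz94}), applied after transferring $\bmo$ between $\bP$ and $\Qmin$ (legitimate because $Z^\infty_T$ satisfies the reverse Hölder condition), gives $p>1$ with $\sE(\hat{L})_T\in L^p(\Qmin)$, and by Doob's inequality $Y:=\sup_{0\le t\le T}\sE(\hat{L})_t\in L^p(\Qmin)$, i.e.\ $\E[Z^\infty_T Y^p]<\infty$.

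Finally I would localize: since $\sE(\hat{L})$ is continuous in the Brownian filtration, with $\tau_k:=\inf\{t\le T:\sE(\hat{L})_t\ge k\}$ we have $\sE(\hat{L}^{\tau_k})=\sE(\hat{L})^{\tau_k}\le k$, so $\hat{L}^{\tau_k}\in\sB$, while $\sE(\hat{L}^{\tau_k})_T=\sE(\hat{L})_{T\wedge\tau_k}\longrightarrow\sE(\hat{L})_T$ $\bP$-a.s. and $\sE(\hat{L}^{\tau_k})_T\le Y$ for all $k$. The explicit growth of $V$, the bound $|f|\le\|f\|_\infty$, the $L^p(\Qmin)$-bound on $Y$ and the exponential integrability of $\log Z^\infty_T$ together dominate $V(yZ^\infty_T\sE(\hat{L}^{\tau_k})_T)+yZ^\infty_T\sE(\hat{L}^{\tau_k})_T f$ above by a fixed integrable random variable, and it is bounded below by the integrable $-1-y\|f\|_\infty Z^\infty_T Y$; dominated convergence then gives that this expectation tends to $\E[V(yZ^\infty_T\sE(\hat{L})_T)+yZ^\infty_T\sE(\hat{L})_T f]=v_\infty(y)$, so $\inf_{L\in\sB}[\,\cdots\,]\le v_\infty(y)$ and Assumption \ref{ass:bdd_inf} follows. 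Everything except the third paragraph is routine; the $\bmo$ regularity of $\hat{L}$ is the one delicate point, and it is exactly there that the exponential form of $U$ (through \eqref{formula:Vprime}--\eqref{formula:V} and the entropy interpretation) and the $\bmo$ hypothesis on the limiting market are indispensable.
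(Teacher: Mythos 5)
Your proposal is sound and shares the paper's skeleton — pass to the dual optimizer, write it as $Z^\infty_T\sE(\hat L)_T$ with $\hat L$ strongly orthogonal to $M^\infty$, show $\hat L\in\bmo$, and stop $\sE(\hat L)$ at level $k$ to produce elements of $\sB$ whose dual values converge to $v_\infty(y)$ — but it diverges from the paper in how that convergence is proved. The paper does not use dominated convergence: it exploits the first-order conditions (via Theorem 2.1 of \cite{KS02MF} and Proposition 4.1 of \cite{OZ09MF}) together with the exponential identities \eqref{formula:Vprime}--\eqref{formula:V} to compute the gap exactly as $\frac{y}{\alpha}\bigl(n\log n\,\Qmin(\tau_n<T)-n\,\E^{\Qmin}[\log\sE(\hat L)_T;\{\tau_n<T\}]\bigr)$, then kills the first term by Doob's submartingale inequality applied to $\sE(\hat L)\log\sE(\hat L)$ (only $L\log L$-integrability under $\Qmin$ is needed, which comes for free from finite $V$-entropy) and the second by Kazamaki's Theorem 2.4 $\log^+$-estimate for $\bmo(\Qmin)$ exponentials. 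Your route instead upgrades $\hat L\in\bmo(\Qmin)$ to a reverse H\"older inequality, gets $\sup_t\sE(\hat L)_t\in L^p(\Qmin)$ by Doob's maximal inequality, and dominates; this is softer (no appeal to the primal optimizer or to \eqref{eqn:0_expectation}) but pays for it with stronger integrability inputs: you need $p>1$ moments of the maximal function and, for your dominating variable, $\E^{\Qmin}[\,\sup_t\sE(\hat L)_t\,\log^+Z^\infty_T]<\infty$, which you assert via "exponential integrability of $\log Z^\infty_T$" but should justify under $\Qmin$ (it does follow from Assumption \ref{ass:bmo} by transferring $\lambda^\infty\cdot M^\infty$ to a $\bmo(\Qmin)$ martingale and applying John--Nirenberg there). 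The one genuinely delicate step, $\hat L\in\bmo$, you identify correctly but only sketch; the paper makes it precise exactly along the lines you gesture at: Lemma 3.1 of \cite{6AP} gives the $\sR_{L\log L}(\bP)$ property of the dual optimizer from the $\bmo$ hypothesis on the minimal martingale density, Lemma 2.2 of \cite{GR02AP} converts this to $-(\lambda^\infty\cdot M^\infty)+\hat L\in\bmo(\bP)$, hence $\hat L\in\bmo(\bP)$, and Theorem 3.6 of \cite{Kaz94} plus orthogonality ($\langle\lambda^\infty\cdot M^\infty,\hat L\rangle=0$, so the Girsanov correction vanishes) gives $\hat L\in\bmo(\Qmin)$. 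With that step and the $\Qmin$-moment bound on $\log Z^\infty_T$ filled in, your dominated-convergence variant is a valid alternative ending.
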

\begin{proof}
Let $x\in\R$ and $Z^\infty_T\sE(L)_T=\sE(-(\lambda^\infty\cdot M^\infty)+L)_T\in\sM^\infty_V$ be the dual optimizer for the dual problem \eqref{def:dual_value_fn} with $n=\infty$ and $y:=u_\infty'(x)$.
For $1\leq n<\infty$, we define the stopping times $\tau_n:=\inf\{t\leq T: \sE(L)_t\geq n\}$.  Using that $V(0)=0$ and the definition of $\tau_n$, it is not difficult to verify that each probability density $Z^\infty_T\sE(L)_{\tau_n}$ corresponds to a martingale measure in $\sM^\infty_V$.  

Theorem 2.1 of \cite{KS02MF} implies that there exists an $S^\infty$-integrable $\hat{H}$ such that $\hat{H}$ is optimal for \eqref{def:primal_value_fn} with $n=\infty$ and $(\hat{H}\cdot S^\infty)$ is a martingale with respect to every measure $\bQ\in\sM^\infty_V$.  The process $\hat{H}$ is a \textit{permissible} wealth process (in the $S^\infty$ market), rather than an admissible wealth process; see \cite{OZ09MF} Definition 1.1 for details.  
Proposition 4.1 from \cite{OZ09MF} implies that $x+(\hat{H}\cdot S^\infty)_T +f= -V'\left(yZ^\infty_T\sE(L)_T\right)$.  
Hence, for any $\bQ\in\sM^\infty_V$, \eqref{formula:Vprime} with $c=x$ implies that
\begin{equation}\label{eqn:0_expectation}
  \E\left[\frac{d\bQ}{d\bP}V'\left(yZ^\infty_T\sE(L)_T e^{\alpha f}\right)\right] 
  = \E\left[Z^\infty_T\sE(L)_T V'\left(yZ^\infty_T\sE(L)_T e^{\alpha f}\right)\right].
\end{equation} 
Then,
\begin{align*}
  0 &\leq \E\left[V(yZ^\infty_T\sE(L)_{\tau_n})
      +yZ^\infty_T\sE(L)_{\tau_n}f\right] - v_\infty(y) \\
    &= \E\left[V(yZ^\infty_T\sE(L)_{\tau_n})
      +yZ^\infty_T\sE(L)_{\tau_n}f\right] - 
      \E\left[V(yZ^\infty_T\sE(L)_T)+yZ^\infty_T\sE(L)_T f\right]\\
    &=\E\left[yZ^\infty_T\sE(L)_{\tau_n}V'(yZ^\infty_T\sE(L)_{\tau_n}e^{\alpha f})-yZ^\infty_T\sE(L)_TV'(yZ^\infty_T\sE(L)_Te^{\alpha f})\right]
        &\text{by \eqref{formula:V}}\\
    &=\E\left[yZ^\infty_T\sE(L)_{\tau_n}\left(V'(yZ^\infty_T\sE(L)_{\tau_n}
      e^{\alpha f})-V'(yZ^\infty_T\sE(L)_Te^{\alpha f})\right)\right]
        &\text{by \eqref{eqn:0_expectation}}\\
    &=\frac{y}{\alpha}\E\left[Z^\infty_T\sE(L)_{\tau_n}
      \left(\log\sE(L)_{\tau_n}-\log\sE(L)_T\right)\right]
        &\text{by \eqref{formula:Vprime}}\\
    &=\frac{y}{\alpha}\E^{\bQ^\infty}\left[n\log\left(\frac{n}{\sE(L)_T}
      \right);\{\tau_n<T\}\right]\\
    &=\frac{y}{\alpha}\left(n\log n\,\bQ^\infty(\tau_n<T)
      -n\,\E^{\bQ^\infty}[\log\sE(L)_T;\{\tau_n<T\}]\right).
\end{align*}

%
In order to show Assumption \ref{ass:bdd_inf}, it now suffices to show
\begin{equation}\label{eqn:to_show}
  n\log n\,\Qmin(\tau_n<T)-n\E^{\Qmin}[\log\sE(L)_T;\{\tau_n<T\}]
  \longrightarrow 0\ \ \text{ as $n\rightarrow\infty$}.
\end{equation}
Showing $n\log n\, \Qmin\left(\tau_n<T\right)\longrightarrow 0$ as $n\rightarrow\infty$ will employ Doob's submartingale inequality, whereas $n\, \E^\Qmin[\log\sE(L)_T; \{\tau_n<T\}]\longrightarrow 0$ relies on the assumption that $(\lambda^\infty\cdot M^\infty)\in \bmo(\bP)$.

Let $\phi(y):=y\log y$.  We have that $\phi$ is convex, $\phi\geq -1/e$, and $\phi$ is increasing on $[1/e,\infty)$.  Using that $Z^\infty_T\sE(L)_T$ is the dual optimizer, it is not difficult to check that $\phi(\sE(L)_t)\in L^1(\Qmin)$ for each $t\in[0,T]$.
Convexity of $\phi$ implies that $\phi(\sE(L))$ is a $\Qmin$-submartingale.  (Note that $\sE(L)$ is a $\Qmin$-martingale since $\E^\Qmin[\sE(L)_T] = \E^\bP[Z^\infty_T\sE(L)_T]=1$.)

For a process $Y$, we let $Y^*:=\sup_{0\leq t\leq T} Y_t$.  For any $n>1$,
$$
  \sE(L)^* \geq n \ \text{ if and only if }\ 
  \phi(\sE(L))^* = \left(\sE(L)\log\sE(L)\right)^*\geq n\log n.
$$
Doob's submartingale inequality implies that for $n>1$,
\begin{align*}
  n\log n\, \Qmin(\sE(L)^*\geq n)
  &= n\log n\, \Qmin\left(\phi(\sE(L))^*\geq n\log n\right) \\
  &\leq 
  \E^\Qmin\left[\phi(\sE(L)_T)^+; 
    \left\{\phi(\sE(L))^*\geq n\log n\right\}\right] \\
  &= 
  \E^\Qmin\left[\phi(\sE(L)_T)^+; 
    \left\{\sE(L)^*\geq n\right\}\right].
\end{align*}
Since $\phi(\sE(L)_T)\in L^1(\Qmin)$, we have that
\begin{align*}
  \limsup_{n\rightarrow\infty} \ n\log n\ \Qmin(\tau_n<T)
  &\leq \limsup_{n\rightarrow\infty} \ n\log n\ \Qmin(\sE(L)^*\geq n) \\
  &\leq \limsup_{n\rightarrow\infty} \ 
    \E^\Qmin[\phi(\sE(L)_T)^+;\{\sE(L)^*\geq n\}] \\
  &= 0.
\end{align*}

Now suppose that Assumption \ref{ass:bmo} holds.  Then by Lemma 3.1 of \cite{6AP} the density of the dual optimizer, $Z^\infty\sE(L)$, satisfies $\sR_{L \log L}(\bP)$; that is,  $Z^\infty\sE(L)$ is a $\bP$-martingale and
  $$
    \sup_{\tau} \left\| \E^\bP\left[\frac{Z^\infty_T\sE(L)_T}{Z^\infty_\tau\sE(L)_\tau}\left.
    \log\left(\frac{Z^\infty_T\sE(L)_T}{Z^\infty_\tau\sE(L)_\tau}\right)
    \right|\sF_\tau\right]\right\|_\infty<\infty,
  $$
  where the supremum is taken over all stopping times $\tau\leq T$.
Lemma 2.2 of \cite{GR02AP} shows that $-(\lambda^\infty\cdot M^\infty)+L\in\bmo(\bP)$, which then implies that $L\in\bmo(\bP)$.

Since $\left<-\lambda^\infty\cdot M^\infty, L\right>_t=0$ for all $t\in[0,T]$, then Theorem 3.6 of \cite{Kaz94} implies that $L = L-\left<-\lambda^\infty\cdot M^\infty,L\right> \in \bmo(\Qmin)$.  Then by Theorem 2.4 of \cite{Kaz94}, $L$ satisfies
\begin{equation}\label{eqn:logplus}
  \sup_\tau \left\|\E^\Qmin\left[\left.
  \log^+\left(\frac{\sE(L)_\tau}{\sE(L)_T}\right)
  \right|\sF_\tau\right]\right\|_\infty <\infty,
\end{equation}
where the supremum is taken over all stopping times $\tau\leq T$.
Re-writing \eqref{eqn:logplus}, and considering only the stopping times $\tau_n$ for $n\geq 1$, we have
$$
  K:= \sup_n\left\|\E^\Qmin\left[\left(\log\sE(L)_{\tau_n}-\log\sE(L)_T\right)
  \bI_{\left\{\sE(L)_{\tau_n}\geq\sE(L)_T\right\}}
  |\sF_{\tau_n}\right]\right\|_\infty < \infty.
$$
For each $n\geq 1$, $\{\tau_n<T\}\in\sF_{\tau_n}$ and $\sE(L)_{\tau_n}=n$ on $\{\tau_n<T\}$.  Then,
\begin{align*}
  -\E^\Qmin &\left[\log\sE(L)_T; \{\sE(L)_{\tau_n}\geq\sE(L)_T\}\cap\{\tau_n<T\}\right] \\
  &\leq \E^\Qmin \left[\log\sE(L)_{\tau_n}-\log\sE(L)_T; \{\sE(L)_{\tau_n}\geq\sE(L)_T\}\cap\{\tau_n<T\}\right] \\
  &= \E^\Qmin\left[\E^\Qmin\left[\left(\log\sE(L)_{\tau_n}-\log\sE(L)_T\right)\bI_{\{\sE(L)_{\tau_n}\geq\sE(L)_T\}}|\sF_{\tau_n}\right];
  \left\{\tau_n<T\right\}\right] \\
  &\leq K\, \Qmin\left(\tau_n<T\right).
\end{align*}
Thus,
\begin{align*}
  -n \E^\Qmin&\left[\log\sE(L)_T;\{\tau_n<T\}\right]\\
  &= -n\E^\Qmin[\log\sE(L)_T;\{\sE(L)_T>n\}\cap\{\tau_n<T\}] \\
  &\ \,\,\,\  -n\E^\Qmin[\log\sE(L)_T;\{\sE(L)_T\leq n\}\cap\{\tau_n<T\}] \\
  &\leq 0 + nK\Qmin(\tau_n<T).
\end{align*}
Equation \eqref{eqn:to_show} now follows from
\begin{align*}
  0 
  &\leq n\log n\, \Qmin(\tau_n<T)-n\,\E^\Qmin[\log\sE(L)_T;\,\{\tau_n<T\}] \\
  &\leq n\log n\, \Qmin(\tau_n<T) + nK\,\Qmin(\tau_n<T)\\
  &\longrightarrow 0, \ \ \text{ as $n\rightarrow\infty$}.
\end{align*}
\end{proof}

\end{example}

\bibliographystyle{plain}
\bibliography{mkt_stability_bib}

\end{document}